\DeclareFontFamily{U}{shuffle}{}
\DeclareFontShape{U}{shuffle}{m}{n}{ <-8>shuffle7 <8->shuffle10}{}
\title{Topological Sorting with Regular Constraints}
\author{Antoine Amarilli}{LTCI, Télécom ParisTech, Université Paris-Saclay}{}{}{}
\author{Charles Paperman}{Université de Lille}{}{}{}
\authorrunning{A.\ Amarilli and C.\ Paperman} %
\subjclass{\ccsdesc[500]{Mathematics of computing~Graph algorithms}}
\keywords{Topological sorting; shuffle problem; regular language}
\theoremstyle{plain}
\newcommand*{\defeq}{\mathrel{\rlap{%
  \raisebox{0.3ex}{$\m@th\cdot$}}%
  \raisebox{-0.3ex}{$\m@th\cdot$}}%
  =}
\newcommand{\card}[1]{\left|{#1}\right|}
\newcommand{\calA}{\mathcal{A}}
\newcommand{\calK}{\mathcal{K}}
\newcommand{\NN}{\mathbb{N}}
\newcommand{\DA}{\textbf{DA}\xspace}
\newcommand{\A}{\textbf{A}\xspace}
\newcommand{\DO}{\textbf{DO}\xspace}
\newcommand{\DS}{\textbf{DS}\xspace}
\newcommand{\NP}{\textsc{NP}\xspace}
\renewcommand{\phi}{\varphi}
\newcommand{\ba}{\mathbf{a}}
\newcommand{\bb}{\mathbf{b}}
\newcommand{\PI}{\mathrm{PI}}
\newcommand{\relto}{\tikz[baseline=-\the\dimexpr\fontdimen22\textfont2\relax]{\node(t) {$\to$};\node at (t.north)[regular polygon, regular polygon sides=3,draw, scale=0.3,rotate=90]{};}}
\newcommand{\freq}{\mathrm{freq}}
\newcommand{\rare}{\mathrm{rare}}
\newcommand{\ccl}{\mathrm{CCl}}
\newcommand{\CCl}{\ccl}
\newcommand{\com}{\textrm{Com}}
\newcommand{\shuf}{\shuffle}
\newcommand{\CTS}{\mathrm{CTS}}
\newcommand{\CSh}{\mathrm{CSh}}
\newcommand{\pCTS}[1]{\mathrm{CTS}\!\left[#1\right]}
\newcommand{\pCSh}[1]{\mathrm{CSh}\!\left[#1\right]}
\newcommand{\powerset}[1]{\mathcal{P}(#1)}
\newcommand{\lfun}{\lambda}
\newcommand{\fst}{\mathsf{first}}
\newcommand{\nxt}{\mathsf{next}}
\newcommand{\myparagraph}[1]{\subparagraph*{#1.}}
\begin{document}

\maketitle

\begin{abstract}
  We introduce the \emph{constrained topological sorting problem} (CTS):
given a regular language~$K$ and a directed acyclic graph $G$ with labeled
vertices, determine if $G$ has a topological sort that forms a word in~$K$. This natural problem applies to several settings, e.g.,
scheduling with costs or verifying concurrent programs. 
We consider the problem~$\pCTS{K}$ where the target language~$K$ is fixed, and
study its complexity depending on~$K$. We show that $\pCTS{K}$ is tractable when
$K$ falls in several language families, e.g., \emph{unions of monomials}, which
can be used for pattern matching. However, we show that $\pCTS{K}$ is NP-hard
for $K = (ab)^*$ and introduce a \emph{shuffle
reduction} technique to show hardness for more languages. We also study the special case of the \emph{constrained shuffle problem}
(CSh), where the input graph is a disjoint union of strings, and show that
$\pCSh{K}$ is additionally tractable when~$K$ is a group language or a union of district
group monomials. We conjecture that a
dichotomy should hold on the complexity of $\pCTS{K}$ or $\pCSh{K}$ depending
on~$K$, and substantiate this by proving a coarser dichotomy under a different problem
phrasing which ensures that tractable languages are closed under common
operators.

\end{abstract}

\section{Introduction}
Many scheduling or ordering problems amount to computing a
\emph{topological sort} of a directed acyclic graph (DAG), i.e.,
a totally ordered sequence of the vertices that is
compatible with the edge relation: when we enumerate a vertex,
all its predecessors must have been enumerated first.
However, in some settings, we need a topological sort
satisfying additional constraints that cannot be 
expressed as edges.
We formalize this
problem as follows: the vertices of the DAG are labeled with some symbols from
a finite alphabet~$A$, and we want to find a topological sort that falls into a
specific regular language. We call this the \emph{constrained topological sort problem}, or CTS.
For instance, if we fix the language $K = ab^*c$, and consider the example
DAGs of Figure~\ref{fig:exa}, then $G_1$ and $G_2$ have a topological sort that falls
in~$K$.

CTS relates to many applications.
For instance, many \emph{scheduling} applications
use a dependency graph \cite{agrawal2016scheduling}
of tasks, and it is often useful to express other constraints,
e.g., 
some tasks must be performed by specific workers and we should not assign more
than~$p$
successive tasks to the same worker. We can express this as a CTS-problem: label
each task by the worker which can perform it, and consider the target
regular language~$K$ containing all words where the same symbol is not repeated more
than~$p$
times. In \emph{concurrency} applications, 
we may consider
a program with multiple threads, and want to verify that there
is no linearization of its instructions that exhibits some
unsafe behavior, e.g., executing a read 
before a write. To search for such a linearization, we can label each instruction with its type, and
consider CTS with a
target language describing the behavior that
we wish to detect.
CTS can also be used in uncertain data management tasks, to reason
about the possible answers of aggregate queries on uncertain ordered data~\cite{amarilli2017possible}.
It can also be equivalently phrased in the language of partial order theory: seeing the labeled DAG as a \emph{labeled partial
order} $<$, we ask if some \emph{linear extension} achieves a word in~$K$.

\begin{figure}
  \centering
\begin{tikzpicture}[>=stealth, baseline=(c.north)]
 \node (a)  {
             \tikz \graph[branch up=1.5em, fresh nodes] { a -> {b, b}->  c };
       };
  \node at ($(a.west) + (-1, .3)$) [anchor=north] {$G_1$};             

\node (b) at (a.south east) [anchor =  south west, xshift= 8em] {
  \tikz \graph[branch up=1.5em, fresh nodes] { a -> {b}, {b}  -> c };
       };
\node at ($(b.west) + (-1, .3)$) [anchor=north] {$G_2$};

\node (c) at (b.south east) [anchor = south west, xshift= 8em, yshift=.75em]{
            \tikz \graph[branch up=1.5em] { a -> c -> b};
       };
 \node at ($(c.west) + (-1, .3)$) [anchor=north] {$G_3$};             
\end{tikzpicture}
  \vspace{-.3cm}
  \caption{Example labeled DAGs on the alphabet $A = \{a, b, c\}$}
  \label{fig:exa}
  \vspace{-.3cm}
\end{figure}
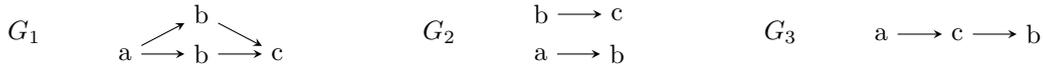

We thus believe that the CTS-problem is useful, and natural, but we are not
aware of previous work studying it, except for 
a special case called the
\emph{shuffle problem}.
This problem deals with the \emph{interleaving} of strings, as studied, e.g., 
in concurrent programming
languages~\cite{kimura1976algebraic,ogden1978complexity}, computational biology~\cite{kececioglu1998reconstructing},
and
formal
languages~\cite{eremondi2016complexity,buss2014unshuffling,rizzi2017recognizing}.
Specifically, we are given a tuple of strings, and we must decide if
they have some interleaving that falls in the target language~$K$.
This problem was known to be
NP-complete~\cite{mansfield1983computational,warmuth1984complexity,johnson1984np}
when the target language $K$ is given as input (in addition to the tuple of
strings), even when $K$ consists of just one target string. 
To rephrase this shuffle problem in our context, we call \emph{constrained
shuffle problem} (CSh) the special case of CTS where we require input DAGs to be a
union of directed path graphs (corresponding to the strings).

Our goal in this paper is to study the complexity of CTS and CSh.
We assume that the target regular language~$K$ is fixed, and call $\pCTS{K}$ and
$\pCSh{K}$ the corresponding problems, whose complexity is only a function of the
input DAG (labeled on the alphabet $A$ of~$K$). Our central question is: \emph{for
which regular languages $K$ are the problems $\pCTS{K}$ or $\pCSh{K}$ tractable?}
More precisely, for each of these problems, we conjecture a dichotomy on~$K$: the
problem is either in NL or it is NP-complete. However, the tractability boundary
is challenging to chart out, and we have not been able to prove these
conjectures in full generality. In this paper, we present the results that we
have obtained towards this end.
\myparagraph{Paper structure}
We formally define the CTS and CSh problems in Section~\ref{sec:mainres} and state the
conjecture. We then show the following results:
\begin{itemize}
  \item In Section~\ref{sec:hardness}, we present our hardness results. We
    recall the results of~\cite{warmuth1984complexity} on the shuffle
    problem, and present a general \emph{shuffle reduction} technique to show hardness for more
    languages. We use it in particular to show that $\pCSh{(ab)^*}$, hence $\pCTS{(ab)^*}$, are
    NP-hard, and extend this to several other languages.
  \item In Section~\ref{sec:sigma2}, we present tractability results. 
    We show that $\pCTS{K}$, hence $\pCSh{K}$, is in non-deterministic logspace (NL)
    when $K$ is a union of \emph{monomial languages}, i.e., of languages of the form $A_1^*
    a_1 \cdots A_{n-1}^* a_{n-1} A_n^*$, with the $a_i$ being letters and the
    $A_i$ being subalphabets. Such languages can be used for applications such
    as pattern
    matching, e.g., with the language $A^* u A^*$ for a fixed pattern~$u \in
    A^*$. We also show
    tractability for other languages that are not of this form, e.g. $(ab)^* + A^*
    aa A^*$ and variants thereof, using different techniques such as
    Dilworth's theorem~\cite{dilworth1950decomposition}.
  \item In Section~\ref{sec:dich}, we use our hardness and tractability results
    to show a coarser dichotomy result. Specifically, we give an
    alternative phrasing
    of the CTS and CSh problems using \emph{semiautomata} and DAGs with
    \emph{multi-letter} labels: this amounts to closing the
    tractable languages under intersection, inverse morphism, complement, and
    quotients. In this phrasing, when the semiautomaton is counter-free, 
    we can show that the problems are either in NL or NP-complete. This dichotomy
    is effective, i.e., the criterion on the semiautomaton is decidable, and it
    turns out to be the same for CTS and CSh.
  \item In Section~\ref{sec:group}, we focus on the constrained shuffle problem,
    and lift the counter-free assumption of the previous section. 
    We show that $\pCSh{K}$ is tractable when $K$ is a \emph{group language} or more
    generally a union of \emph{district group monomials}. This
    tractability result is the main technical contribution of the paper, with a
    rather involved proof.
    It implies, e.g., that the following problem is in NL for any fixed finite
    group $H$: given $g\in H$ and words $w_1, \ldots, w_n$ of elements of~$H$,
    decide whether there is an interleaving of the~$w_i$ which evaluates to~$g$
    according to the group operation.
\end{itemize}


\section{Problem Statement and Main Results}
\label{sec:mainres}
We give some preliminaries and define the two problems that we study.
We fix a finite alphabet~$A$, and call $A^*$ the set of all finite words on~$A$.
For $w \in A^*$, we 
write
$\card{w}$ for the
\emph{length} of~$w$,
and write $|w|_a$ for the number of occurrences
of~$a\in A$ in~$w$.
We denote the empty word by $\epsilon$.
A \emph{labeled DAG} on the alphabet $A$, or
\emph{$A$-DAG}, is a triple $G = (V, E, \lfun)$ where $(V, E)$ is a directed
acyclic graph with vertex set $V = \{1, \ldots, n\}$ and edge set $E \subseteq V \times V$, and
where $\lfun:V
\to A$ is a function giving a label in~$A$ to each vertex in~$V$. 
For $u \neq v$ in~$V$, we say that $u$ is an \emph{ancestor} of~$v$ if there is
a directed path from~$u$ to~$v$ in~$G$, we say that $u$ is a \emph{descendant}
of~$v$ if $v$ is an ancestor of~$u$, and otherwise we call $u$ and $v$ 
\emph{incomparable}.
A \emph{topological sort}
of~$G$ is a bijective function~$\sigma$ from~$\{1, \ldots,
n\}$ to~$V$ such that, for all $(u, v) \in E$, we have $\sigma^{-1}(u) <
\sigma^{-1}(v)$.
The word \emph{achieved} by~$\sigma$ is
$\lfun(\sigma) \colonequals \lfun(\sigma(1)) \cdots \lfun(\sigma(n)) \in A^*$.

The \emph{constrained topological sort problem} $\pCTS{K}$ for a fixed
language $K \subseteq A^*$ (described, e.g., by a regular expression)
is defined as follows: given an $A$-DAG $G$,
determine if there is a topological sort $\sigma$ of~$G$
such that
$\lfun(\sigma) \in K$ (in which case we say that $\sigma$ \emph{achieves}~$K$).

We now define the \emph{constrained shuffle problem} (CSh).
Given two words $u, v \in A^*$,
the \emph{shuffle}~\cite{warmuth1984complexity} of $u$ and $v$, written $u \shuf v$, is the set of words that
can be obtained by interleaving them. Formally, a word $w \in A^*$
is in $u \shuf v$ iff there is a partition $P \sqcup Q$ of $\{1,
\ldots, \card{w}\}$ such that $w_P = u$ and $w_Q = v$, where $w_P$ denotes the
sub-word of~$w$ where we keep the letters at positions in~$P$, and likewise
for $w_Q$. 
The \emph{shuffle} $\shuf(U)$ of a tuple of words~$U$
is defined by
induction as follows:
we set
$\shuf() \colonequals \{\epsilon\}$, set
$\shuf(u) \colonequals \{u\}$,
and set $\shuf(u_1, \ldots, u_n, u_{n+1}) \colonequals \bigcup_{v \in \shuf(u_1,
\ldots, u_n)} v \shuf u_{n+1}$.
The \emph{constrained shuffle problem}
$\pCSh{K}$
for a fixed language $K \subseteq A^*$
is defined as follows: given a tuple of words~$U$,
determine if $K \cap \shuf(U)$ is nonempty.
Of course, $\pCSh{K}$ is a special case of $\pCTS{K}$: we can code any tuple of words~$U$
as an $A$-DAG $G_U$ by coding
each $u \in U$ as a directed path graph $v_1 \rightarrow \cdots \rightarrow v_{\card{u}}$
with $\lfun(v_i) = u_i$ for all $1 \leq i \leq \card{u}$.
Thus, we will equivalently see inputs to CSh as tuples
of words (called \emph{strings} in this context) or as $A$-DAGs that are
unions of directed path graphs.

\begin{example}
  \label{exa:pb}
  The problem $\pCTS{(ab)^*}$ on an input $\{a,b\}$-DAG
$G$ asks if $G$ has a topological sort starting with an $a$,
ending with a $b$, and alternating
between elements of each label. 
The problem $\pCSh{(aa+b)^*}$ on a tuple $U$ of strings on~$\{a,b\}$ asks
if there is an interleaving $w \in \shuf(U)$ such that all $a^*$-factors in~$w$ are of
even length (e.g., $bbaabaaaa$, but not $baaabb$).
\end{example}

In this work, we study the complexity of the problems $\pCTS{K}$ and $\pCSh{K}$ depending
on the language $K$.
Clearly we can always solve these problems by guessing a topological sort (or an interleaving), and verifying that it
achieves a word in~$K$. Hence, the complexity is always in $\NP^K$,
that is, in non-deterministic PTIME with an oracle for the \emph{word problem} of
$K$, which we can call to test if an input word in is~$K$:

\begin{propositionrep}
  For any language $K$, the problems $\pCTS{K}$ and $\pCSh{K}$ are in $\NP^K$.
\end{propositionrep}

\begin{proof}
  As explained in the main text, we guess a permutation $\sigma$ of the input
  vertices, check that it respects the order constraints, and use the oracle for
  the word problem to check that the word achieved by~$\sigma$ is in~$K$.
\end{proof}

In particular, the problems are in $\NP$ when the language~$K$ is regular,
because the word problem for regular languages is in PTIME. We will study
regular languages in this work.
We believe that regular languages can be classified
depending on the complexity of these problems, and make the following \emph{dichotomy
conjecture}:

\begin{conjecture}
  \label{con:maincon}
  For every regular language $K$, the problem $\pCTS{K}$ is either in NL or
  NP-complete. Likewise, the problem $\pCSh{K}$ is either in NL or NP-complete.
\end{conjecture}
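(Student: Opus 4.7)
The plan is to approach the conjecture via the algebraic theory of regular languages, classifying $K$ by its syntactic monoid $M(K)$ (equivalently, by a minimal semiautomaton in the sense of Section~\ref{sec:dich}) and correlating algebraic invariants of $M(K)$ with the complexity of $\pCTS{K}$ and $\pCSh{K}$. The Eilenberg correspondence between varieties of regular languages and varieties of finite monoids is the natural framework, since both the shuffle reduction of Section~\ref{sec:hardness} and the tractability techniques developed later admit clean algebraic descriptions (aperiodicity, group content, district monomials, etc.).

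First, I would systematize the hardness side. The shuffle reduction already shows $\pCSh{(ab)^*}$ NP-hard, and I would try to distill an algebraic criterion on $M(K)$ (or on the transition monoid of a minimal semiautomaton for $K$) that guarantees the existence of an appropriate shuffle gadget transferring NP-hardness from $\pCSh{(ab)^*}$. The counter-free dichotomy of Section~\ref{sec:dich} already isolates the aperiodic obstructions precisely; the task is to lift it to arbitrary semiautomata by combining it with the group machinery of Section~\ref{sec:group}.

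Second, on the tractable side, I would combine the existing ingredients into a single uniform NL procedure. The unions-of-monomials algorithm of Section~\ref{sec:sigma2}, the Dilworth-based arguments for languages like $(ab)^* + A^* aa A^*$, and the district group monomial algorithm of Section~\ref{sec:group} each cover a different slice of the regular languages and are closed under different operations. The idea is to peel off the "group part" of $M(K)$ using the group-language techniques, reducing to a counter-free residual on which the monomial machinery and the dichotomy of Section~\ref{sec:dich} apply, then glue the partial solutions together.

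The main obstacle is the interaction between the group-theoretic and aperiodic components of $M(K)$. When these interact non-trivially — as in languages whose syntactic monoid essentially mixes group blocks and aperiodic blocks, e.g.\ via a block product — neither family of techniques applies directly, and it is not even clear what the right joint algebraic invariant should be. The difficulty is compounded for $\pCTS{K}$, where arbitrary DAGs permit branching patterns that the string-shuffle arguments do not exploit, so the CTS algorithm for a given $K$ may demand a genuinely different combinatorial argument than the CSh algorithm for the same $K$, even if the conjectural dichotomy boundary is the same. Identifying a single algebraic invariant of $M(K)$ that simultaneously matches the hardness reductions and the combined tractability techniques, and showing that every language failing this invariant admits an NL algorithm, is in my view the essential open problem underlying the conjecture.
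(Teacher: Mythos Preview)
This statement is a \emph{conjecture}, and the paper does not prove it; it explicitly presents Conjecture~\ref{con:maincon} as open and offers only partial evidence (the coarser semiautomaton dichotomy of Theorem~\ref{thm:dich} under the counter-free assumption, and its CSh extension in Theorem~\ref{thm:groupdich} with the $\DO$/$\DS$ gap). Your proposal is likewise not a proof but a research plan, and you acknowledge as much in your final paragraph. So there is no proof to compare against, and your submission does not close the gap.

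That said, your plan tracks the paper's own strategy quite closely: classify $K$ via its syntactic monoid, use the shuffle reduction from $(ab)^*$ for hardness, use the monomial and district-group-monomial machinery for tractability, and try to glue the aperiodic and group parts together. The obstacle you isolate --- the interaction of the group and aperiodic components of $M(K)$, and the failure of the CSh antichain arguments to transfer to arbitrary DAGs --- is exactly what the paper identifies as the barrier (the $\DS\setminus\DO$ gap in Section~\ref{sec:group}, and the remark that a labeled analogue of Dilworth's theorem seems to be the missing ingredient for CTS). One point worth sharpening: the paper's dichotomy in Section~\ref{sec:dich} is for the \emph{multi-letter semiautomaton} phrasing, which forces closure under inverse morphisms, quotients, intersection, and complement; Proposition~\ref{prop:notclose} shows that the tractable languages in the \emph{original} phrasing are not closed under these operations, so any attempt to read off the dichotomy boundary directly from $M(K)$ in the original phrasing must contend with languages like $(ab)^* + A^* aa A^*$ that are tractable for reasons invisible to the syntactic monoid alone. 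Your plan does not address how the algebraic invariant you seek would capture such cases.
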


Towards this conjecture, we determine in this paper the complexity of CTS and
CSh for several languages and classes. We first show in the next
section that these problems are hard for some languages such as~$(ab)^*$, and we
then show tractability results in Section~\ref{sec:sigma2}, and a coarser dichotomy
result in Section~\ref{sec:dich} under an alternative phrasing of our problems.

\section{Hardness Results}
\label{sec:hardness}
Our hardness results are based on the \emph{shuffle problem}  of formal language
theory which asks, given a word $w \in A^*$ and a tuple $U$ of words of~$A^*$, whether $w
\in \shuf(U)$. This problem is known to be NP-hard already on the alphabet $\{a,
b\}$ (see~\cite{warmuth1984complexity}). The shuffle problem is different from
CSh, because the target word of the shuffle problem is given as input, 
whereas the target regular language of CSh is fixed.
However, the hardness of the shuffle problem directly implies the hardness of
$\CSh$, hence of $\CTS$, for a well-chosen target language:

\begin{toappendix}
  \subsection{Proof of Proposition~\ref{prp:easyhard}: Direct Hardness Result}
\end{toappendix}

\begin{propositionrep}
  \label{prp:easyhard}
  Let $K_0 \colonequals (a_1 a_2 + b_1 b_2)^*$. 
  The problem $\pCSh{K_0}$ is NP-hard.
\end{propositionrep}

\begin{proofsketch}
  We can reduce a shuffle instance $(w, U)$ to the instance $I \colonequals w_1
  \cup U_2$ for~$\pCSh{K_0}$, where $w_1$ is~$w$ but adding the subscript~$1$ to
  all labels, and $U_2$ is defined analogously. A topological sort of~$I$
  achieving~$K_0$ must then alternate between~$w_1$
  and~$U_2$, and enumerate letters with the same label (up to the subscript), witnessing that $w \in \shuf(U)$.
\end{proofsketch}

\begin{proof}
  We construct the CSh-instance~$I$ in PTIME from the input instance
  to the shuffle problem as explained in the proof sketch, and argue for
  correctness in more detail. It is clear that, to achieve~$K_0$, a topological
  sort $\sigma$ of~$I$ must enumerate alternatively a letter with subscript~$1$
  and a letter with subscript~$2$, so it must enumerate alternatively from~$w_1$
  and from~$U_2$, and the definition of~$K_0$ ensures that the two letters
  in~$w$ and~$U$ corresponding to the enumerated letters in~$w_1$ and~$U_2$ must
  have the same label in~$\{a, b\}$. Hence, considering the restriction
  $\sigma'$ of~$\sigma$ to~$U_2$, the interleaving of~$U$ that corresponds
  to~$\sigma'$ witnesses that $w \in \shuf(U)$.

  Conversely, if $w \in \shuf(U)$, starting from a witnessing topological
  sort~$\sigma'$ of~$U_2$, it is clear that we can construct a topological sort
  $\sigma$ of~$I$ that achieves~$K_0$, by enumerating the elements of~$w_1$
  alternatively with the elements of~$U_2$ according to~$\sigma'$. This shows
  correctness and concludes the proof.
\end{proof}

In this section, we will refine this approach to show hardness for more
languages.
We first recall another initial hardness result from~\cite{warmuth1984complexity}.
We then introduce a general \emph{shuffle reduction} technique to show the hardness of
languages by reducing from other hard languages. Last, we show that CTS and CSh
are hard for the language $(ab)^*$ and for other languages.

\begin{toappendix}
  \subsection{Proof of Lemma~\ref{lem:partition}: Initial Hard Family}
\end{toappendix}

\myparagraph{Initial hard family}
To bootstrap the hardness results of~\cite{warmuth1984complexity} on the shuffle problem (on input words) to our
CSh-problem (on fixed languages),
we generalize the definition of CSh
to a \emph{regular language family} $\calK$, i.e., a (generally infinite) family
of regular languages, each of
which is described as a regular expression.
The CSh-problem for~$\calK$, written
$\pCSh{\calK}$, asks, given a regular expression $K \in \calK$
and a set of strings $U$,
whether $K \cap \shuf(U)$ is nonempty. In other words, we no longer fix one single
target language but a family~$\calK$ of target languages, and the input
chooses one target language from the 
family~$\calK$.
The following is then shown in~\cite{warmuth1984complexity} by reducing from
UNARY-3-PARTITION~\cite{garey1975complexity}:

\begin{lemmarep}{\bfseries \emph{(\cite{warmuth1984complexity}, Lemma~3.2)}}
  \label{lem:partition}
  Let $\calK \colonequals \{(a^ib^i)^* \mid i \in \NN\}$. Then
  $\pCSh{\calK}$ is NP-hard.
\end{lemmarep}

\begin{proof}
  For completeness, we summarize here the proof of~\cite{warmuth1984complexity}:
  see the statement and proof of Lemma~3.2 in~\cite{warmuth1984complexity} for
  details.
  The reduction is from
  UNARY-3-PARTITION: given a tuple $E$ of $3m$ positive integers written in
  unary, such that $B \colonequals \frac{1}{m} \left(\sum_{1
  \leq i \leq 3m} n_i\right)$ is an integer,
  and such that $B/4 < e < B/2$ for each $e \in E$,
  decide whether $E$ can be partitioned into $m$
  triples, with each triple summing to~$B$. This problem is
  NP-hard~\cite{garey1975complexity}.
  Given a UNARY-3-PARTITION instance $(E, B)$, we create a $\CSh$ instance~$I$ by
  writing each integer $n$ as the string $a^n b^n$, and
  we choose the target language $K$ in~$\calK$ to be $(a^B b^B)^*$, which is clearly a PTIME
  reduction. Clearly, if $(E, B)$ can be partitioned in triples summing to~$B$, then we can define a
  topological sort of~$I$ by enumerating, for each triple, the $B$ copies of the
  $a$'s in that triple, and then the $b$'s, achieving a word of~$K$.
  Conversely, any topological sort achieving a word of~$K$ must start by
  enumerating $B$ copies of~$a$'s followed by the same number of~$b$'s, and the
  only way to free sufficiently many~$b$'s is to enumerate completely the
  initial $a$ segments of some strings: we know that the number of such
  strings is exactly $3$ by our assumption that $B/4 < e < B/2$ for all $e \in
  E$.
  Hence, by applying this argument repeatedly, a topological sort of~$I$
  achieving~$K$ must define 
  a solution to $(E, B)$, completing the proof of the reduction.
\end{proof}

\begin{toappendix}
  \subsection{Proof of Theorem~\ref{thm:reduction}: Shuffle Reduction}
\end{toappendix}
\myparagraph{Shuffle reduction}
Our goal in this section is to show the hardness of CTS and CSh for more
languages, but we do not wish to prove hardness for every
language from scratch. Instead, we will introduce a general tool called the
\emph{shuffle reduction} that allows us to leverage the hardness of a
language~$K$ to show that another language~$K'$ is also hard.
Specifically, if a language $K$ shuffle-reduces to a language~$K'$, this will imply that there
is a PTIME reduction from $\pCTS{K}$ to~$\pCTS{K'}$, and from $\pCSh{K}$
to~$\pCSh{K'}$.

The intuition for the shuffle reduction is as follows: to reduce 
from~$K$ to~$K'$,
given an input $A$-DAG $G$, we build an $A$-DAG $G'$ 
formed of~$G$ plus an additional directed path labeled by a word
$w$. Thus, any topological sort $\sigma'$ of~$G'$ must be the interleaving
of~$w$ and of a topological sort $\sigma$ of~$G$. Now, if we require that
$\sigma'$ achieves~$K'$, the presence of~$w$ can impose specific conditions
on~$\sigma$. Intuitively, if~$w$ is sufficiently long and ``far away'' from all words of~$K'$,
then~$\sigma'$ must ``repair'' $w$ to a word of~$K'$ by inserting symbols
from~$G$, so the insertions performed by $\sigma$ may need to be in a specific
order, i.e., $\sigma$ may be forced to achieve a word of~$K$. This means that
solving $\pCTS{K'}$ on~$G'$ allows us to solve
$\pCTS{K}$ on~$G$.
This
intuition is illustrated on Figure~\ref{fig:reduction}: to achieve a word
of~$K' \colonequals (ab)^*$ on the DAG~$G'$, a topological sort must enumerate elements
from~$G$ to insert them at the appropriate positions in~$w$, achieving a word
of~$K \colonequals (ba)^*b$. We call \emph{filter sequence} a family of words like~$w$ that
allow us to reduce any $\pCTS{K}$-instance to~$\pCTS{K'}$. Formally:

\begin{definition}[(Filter sequence)]
  \label{def:filter}
   Let $K$ and $K'$ be languages on an alphabet~$A$.
  A \emph{filter sequence} for $K$ and $K'$ is an infinite sequence $(f_n)$ of
  words of~$A^*$ having the following property: for every $n \in \NN$, for every
  word $v \in A^*$ 
  such that $\card{v} = n$,
  we have $v \in K$ iff $(v \shuf f_n) \cap K' \neq
  \emptyset$.
\end{definition}
\begin{figure}
  \centering
  \newcommand{\rv}[1]{\textcolor{green!60!black}{#1}}
\begin{tikzpicture}[>=stealth, remember picture, every node/.style={inner sep=0.5pt}]
  \node (a1)  {a};
  \node (a11) at (a1.west) [anchor=east, xshift=-1em] {$w$};
  \node (b1) at (a1.east) [anchor= west, xshift=2em] {a};
  \node (c1) at (b1.east) [anchor= west, xshift=2em] {b};
  \node (a2) at (c1.east) [anchor= west, xshift=2em] {b};
  \node (b2) at (a2.east) [anchor= west, xshift=2em] {a};
  \node (c2) at (b2.east) [anchor= west, xshift=2em] {a};
  \node (a3) at (c2.east) [anchor= west, xshift=2em] {b};
  \node (b3) at (a3.east) [anchor= west, xshift=2em] {b};
  \node (c3) at (b3.east) [anchor= west, xshift=2em] {a};
  \node (a4) at (c3.east) [anchor= west, xshift=2em] {a};
  \node (b4) at (a4.east) [anchor= west, xshift=2em] {b};

  \draw[->] (a1) -- (b1); 
  \draw[->] (b1) -- (c1);
  \draw[->] (c1) -- (a2);
  \draw[->] (a2) -- (b2);
  \draw[->] (b2) -- (c2);
  \draw[->] (c2) -- (a3); 
  \draw[->] (a3) -- (b3);
  \draw[->] (b3) -- (c3);
  \draw[->] (c3) -- (a4);
  \draw[->] (a4) -- (b4);
  \node (I) at (c2.south) [draw, anchor=north, yshift=-1.5em, xshift=-1.25em, color=green!60!black, draw=black, fill=white!95!black, inner sep=4pt]
  {
    \begin{tikzpicture}
      \node (Ia1) []{b};
      \node (Ib1) at (Ia1.east) [anchor=west, xshift=1.5em]{a};
      \node (Ic1) at (Ib1.east) [anchor=west, xshift=1.5em]{b};
      \node (Ia2) at (Ic1.east) [anchor=west, xshift=1.5em, yshift=1em]{a};
      \node (Ib2) at (Ic1.east) [anchor=west, xshift=1.5em, yshift=-1em]{b};

      \draw[->, color=black] (Ia1) -- (Ib1); 
      \draw[->, color=black] (Ib1) -- (Ic1);
      \draw[->, color=black] (Ic1) -- (Ia2);
      \draw[->, color=black] (Ic1) -- (Ib2);
      \node at (Ia1) [yshift=-1em, color=black,] {$G$};
    \end{tikzpicture}
  };
  \draw [->, dashed,color=red] (a1.south) .. controls ++(0,-0.5) and ++(-0.5,0) .. (Ia1.west);
  \draw [->, dashed,color=red] (Ia1.north) .. controls ++(0,0.3) and ++(0,-0.5) .. ([xshift=-2pt]b1.south);
  \draw  (b1.north) edge[->, bend left, dashed,color=red] (c1.north);  
  \draw [->, dashed,color=red] ([xshift=2pt]c1.south) .. controls ++(0,-0.3) and ++(0,0.5) .. ([xshift=-2pt]Ib1.north);
  \draw [->, dashed,color=red] ([xshift=2pt]Ib1.north) .. controls ++(0,0.3) and ++(1,-0.75) .. ([xshift=-2pt]a2.south);
  \draw  (a2.north) edge[->, bend left, dashed,color=red] (b2.north);  
  \draw [->, dashed,color=red] (b2.south east) .. controls ++(0.,-0.5) and ++(0,0.75) .. ([xshift=-2pt]Ic1.north);
  \draw [->, dashed,color=red] ([xshift=2pt]Ic1.north) .. controls ++(0.3,0.3) and ++(0,-0.) .. (c2.south);

  \draw  (c2.north) edge[->, bend left, dashed,color=red] (a3.north);  
  \draw [->, dashed,color=red] (a3.south east) .. controls ++(0.,-0.5) and ++(0,0.75) .. ([xshift=-2pt]Ia2.north);
  \draw [->, dashed,color=red] (Ia2.north east) .. controls ++(0.5,0.15) and ++(0,-0.75) .. ([xshift=-2pt]b3.south);
  \draw  (b3.north) edge[->, bend left, dashed,color=red] (c3.north);
  \draw [->, dashed,color=red] (c3.south) .. controls ++(-.5,-.5) and ++(2,1) ..
  ([yshift=-2pt]Ib2.north);
  \draw [->, dashed,color=red] ([yshift=2pt]Ib2.south east) .. controls ++(0.5,0.5) and ++(0,-0.75) .. (a4.south);
  
  \draw  (a4.north) edge[->, bend left, dashed,color=red] (b4.north);

  \node at (b2.south east) [xshift=17.5em, yshift=-4em, align=left, fill=white]
  {
    Global resulting word: $a\rv{b}ab\rv{a}ba\rv{b}ab\rv{a}ba\rv{b}ab$\\
    Resulting word on $G$: $\rv{babab}$
  };
\end{tikzpicture}
  \vspace{-.3cm}
  \caption{Example of a shuffle reduction from $K \colonequals (ba)^*b$ to $K'
  \colonequals (ab)^*$}
  \label{fig:reduction}
\end{figure}
In Figure~\ref{fig:reduction}, we can choose $f_5 \colonequals w$ when defining
a filter
sequence for~$(ba)^*b$ and~$(ab)^*$: indeed, if we interleave $w$ with
any DAG $G$ of~$5$ vertices, then a topological sort $\sigma$ of~$G$
achieves~$K$ iff some interleaving $\sigma'$ of~$\sigma$ with~$w$ achieves~$K'$.
We can now define our reduction:
\begin{definition}[(Shuffle reduction)]
  \label{def:reduction}
  We say that a language $K$ \emph{shuffle-reduces} to a language~$K'$
  if there is a filter sequence $(f_n)$ for $K$ and $K'$ such that the
  function $i \mapsto f_i$
  is computable in PTIME (where $i$ is given in unary).

  We say that a
  regular language family $\calK$ \emph{shuffle-reduces} to~$K'$ if each $K$ does, and 
  if we can compute in PTIME the function
  $(K, i) \mapsto f_i^K$, which maps a regular expression $K$ of~$\calK$ and an
  integer $i$ in unary to the
  $i$-th word in a filter sequence $(f_n^K)$ for $K$ and~$K'$.
\end{definition}

\begin{theoremrep}
  \label{thm:reduction}
  For any regular language family $\calK$ and language~$K'$,
  if $\calK$ shuffle-reduces to~$K'$ 
  then 
  we can reduce in PTIME 
  from $\pCTS{\calK}$ to $\pCTS{K'}$, and from
  $\pCSh{\calK}$ to $\pCSh{K'}$.
\end{theoremrep}

\begin{toappendix}
  Of course, note that this result also applies to languages and not just to
  language families, because we can always take $\calK$ to be a singleton
  language family containing only one single language.
\end{toappendix}

\begin{proof}
  We show the result for the CSh-problem; the result for the CTS-problem is
  shown in exactly the same way.
  Fix the family $\calK$ and language $K'$.
  Let $K$ be the input language of~$\calK$, and let 
  $I$ be an input instance of the CSh-problem for~$K$.
  Let $(f^K_n)$ be the filter
  sequence for $K$ and $K'$. 
  Letting $n \colonequals\card{I}$,
  let us call $I'$ the
  instance of the CSh-problem for~$K'$ that contains $I$ and a separate
  string labeled with $f^K_n$: by our computability hypothesis on $(f^K_n)$, this is
  computable 
  in PTIME.
  We now argue that $I'$ is a positive
  instance to the CSh-problem for~$K'$ iff $I$ is a positive instance to the
  CSh-problem for~$K$. Indeed, assuming that there is a word $v$ of~$K$ achieved
  by a topological sort $\sigma$ of~$I$, we have $\card{v} = n$
  by definition,
  so by definition of $(f^K_n)$ we have $(v \shuf f^K_n) \cap K' \neq \emptyset$.
  Hence, let $v'$ be an element of this set. It is in $v \shuf f^K_n$, so it can
  be obtained as a topological sort of~$I'$ by shuffling $f^K_n$ with $\sigma$, 
  and it is in $K'$ so it witnesses that $I'$ is a
  positive instance to the CSh-problem for~$K'$. 
  
  Conversely, if there is a
  topological sort $\sigma'$ of~$I'$ achieving a word~$v' \in K'$,
  then~$\sigma'$ defines a topological sort  $\sigma$
  of~$I$ achieving a word~$v$ such that $v' \in v \shuf f_n$. As we have
  $\card{v} = n$ by definition, and $v'$ witnesses that $v \shuf f_n$ is
  non-empty, we must have $v \in K$, 
  so that $\sigma$ witnesses that $I$ is a positive instance to the
  CSh-problem for~$K$. This establishes correctness, and concludes the
  proof.
\end{proof}

\begin{toappendix}
Note that, for simplicity, we have shown Theorem~\ref{thm:reduction} for PTIME
reductions. This is because we only use the shuffle reduction in this paper to prove
  NP-hardness results. However, Theorem~\ref{thm:reduction} result can also be shown for
  NL reductions if we further assume that the filter sequences can be computed in
  logspace, i.e., the function mapping the
unary representation of~$n$ to the word~$f_n$ is computable by a logspace
transducer.
\end{toappendix}

\begin{toappendix}
  \subsection{Proof of Theorem~\ref{thm:abshuffle}: Hardness of $(ab)^*$}
\end{toappendix}
\myparagraph{Hardness for $(ab)^*$}
We now use the shuffle reduction and the language family of Lemma~\ref{lem:partition} to show the hardness of
$(ab)^*$. This
will be instrumental for our coarser dichotomy in Section~\ref{sec:dich}:

\begin{theoremrep}
  \label{thm:abshuffle}
  The problem $\pCSh{(ab)^*}$ (hence $\pCTS{(ab)^*}$) is NP-hard.
\end{theoremrep}

\begin{proofsketch}
  We shuffle-reduce from the language family $\calK$ of 
  Lemma~\ref{lem:partition}: for the language $K_B = (a^B b^B)^*$ of~$\calK$, we define the
  filter sequence for words of length $2 B n$ by $f^B_{2Bn} \colonequals (b^{B}
  a^{B} ab)^n$. This ensures that, when interleaving $f^B_{2Bn}$ with a word $v$
  of length~$2Bn$ to achieve a word of~$(ab)^*$, we must use~$v$ to insert
  in~$f^B_{2Bn}$ the letters written in bold: $((\ba b)^{B} (a \bb)^{B} ab)^n$.
  This can be done iff $v = (a^B b^B)^n$, i.e., iff $v \in K_B$. We conclude by
  Theorem~\ref{thm:reduction}.
\end{proofsketch}

\begin{proof}
  Let $\calK$ be the family of regular languages defined in
  Lemma~\ref{lem:partition}. We define a filter sequence $(f^B_n)$ for 
  each such language $K_B = (a^B b^B)^*$ for $B \in \NN$. We first explain how
  to define the filter sequence for
  word lengths of the form $2 B n$, for which there are words in~$K_B$ having
  the specified length. For such lengths,
  we define 
  $f^B_{2Bn} \colonequals (b^{B} a^{B} ab)^n$. For other word
  lengths $n' \in \NN$, for which there are no words in~$K_B$, we define
  $f^B_{n'} \colonequals a^{n'+1}$: this ensures that we can never obtain a word
  of~$K_B$ by interleaving $n'$ symbols with~$f^B_{n'}$, which is correct.
  The filter sequence is clearly computable in PTIME.
  So we only have to
  show that, for all $n \in \NN$, the word $f^B_{2Bn}$ is a filter sequence for
  word length $2Bn$.
  
  To do so, fix $n \in \NN$, and consider a word $v$ of length $2 B n$ in~$K_B$.
  For the forward direction, if $v$ is $(a^B b^B)^n$ which is the only word
  of~$K_B$ of length $2 B n$, then we can interleave $v$ with $f^B_{2Bn}$ to form a word
  of $(ab)^*$ by inserting the letters in
  bold: $((\ba b)^{B} (a \bb)^{B} ab)^n$.
  
  Conversely, for the backward direction, we are forced to insert at least these
  letters. More precisely, considering an interleaving of $f^B_{2Bn}$ with a
  word $v$ that achieves a word $w$ of~$(ab)^*$, we know that, in each of the
  $n$ occurrences of $b^B a^B ab$ in~$v$, each of the $B$ first $b$'s must be
  preceded by an~$a$ in~$w$ (so $B$ insertions of~$a$), then each of the $B$
  occurrences of~$a$ must be followed by a~$b$ in~$w$ (so $B$ insertions
  of~$b$). As we have $\card{v} = 2 B n$, we must perform these insertions in
  this order, and as they do not overlap, this completely specifies~$v$: so we know that $(v \shuf
  f^B_{2Bn}) \cap (ab)^*$ is non-empty iff $v = (a^B b^B)^n$. This shows that
  $f^B_{2B n}$ is indeed a filter sequence, which establishes that
  $\pCSh{(ab)^*}$ is NP-hard thanks to Theorem~\ref{thm:reduction} and
  Lemma~\ref{lem:partition}.
\end{proof}

\begin{toappendix}
  \subsection{Hardness Proofs for Other Languages}
\end{toappendix}
\myparagraph{Other hard languages}
From the hardness of $(ab)^*$, we can use the shuffle reduction to show
hardness for many other languages. For instance, we can show hardness for any
language~$u^*$, where $u \in A^*$ is a word with two different letters:

\begin{propositionrep}
  \label{prp:hardu}
  Let $u \in A^*$ such that $|u|_a > 0$ and $|u|_b > 0$ for $a \neq b$ in~$A$.
  Then $\pCSh{u^*}$ (hence $\pCTS{u^*}$) is NP-hard.
\end{propositionrep}

\begin{proofsketch}
  We shuffle-reduce from~$(ab)^*$ with the filter sequence $f_{2n} \colonequals (u u_{-a} u u_{-b} u)^n$, where $u_{-a}$
  (resp.\ $u_{-b}$) is $u$ but removing one occurrence of~$a$ (resp.\ of~$b$).
  If a word $v$ with $\card{v} = 2n$ has an interleaving $w$ with~$f_{2n}$ that
  falls in~$u^*$, then in~$w$ we must intuitively insert one $a$ from~$v$ in
  each $u_{-a}$ and one $b$ from~$v$ in each $u_{-b}$, so that $v = (ab)^n$.
  To formalize this, we first rotate
  $u$ to ensure that its first and last letters are
  different. We then observe that, as $w$ is in~$u^*$,
  any factor $w'$ of length~$\card{u}$ of~$w$
  must be such that $|w'|_a = |u|_a$ and $|w'|_b = |u|_b$. We then
  consider factors of~$w$ of length~$\card{u}$ centered on the
  $u_{-a}$ and $u_{-b}$ in~$f_{2n}$: we argue that in~$w$ we must have inserted
  at least one~$a$ in or around each~$u_{-a}$, and at least one~$b$ in or around
  each~$u_{-b}$, otherwise these
  factors do not have enough $a$'s and enough $b$'s.
\end{proofsketch}

\begin{proof}
  Fix $u \in A^*$ and the two witnessing letters $a$ and~$b$. We first make a
  straightforward preliminary observation: for any word $w$ of~$u^*$ and factor $z$ of $w$
  such that $\card{z} = \card{u}$, we must have $|z|_a = |u|_a$ and $|z|_b =
  |u|_b$. Indeed, when running $w$ through the obvious deterministic finite
  automaton for~$u^*$, we know that, while $z$ is read, the total number of
  $a$-transitions and $b$-transitions will be $|u|_a$ and $|u|_b$.

  We now write $u = xy$ such that the last letter of~$x$ is different from the
  first letter of~$y$; by assumption on~$u$, this is always possible. We can now
  write $u^* = \epsilon + x (u')^* y$, where $u' \colonequals yx$; this ensures
  that the first and last letters of~$u'$ are different.
  
  We now show that
  $(ab)^*$ shuffle-reduces to~$u^*$, by constructing a filter sequence $(f_n)$.
  To this end, we let $u'_{-a}$ be a word obtained by removing some~$a$ in~$u'$,
  and $u'_{-b}$ be defined likewise. 
  Now, to define the filter sequence, we
  first deal with odd numbers as in the proof of Proposition~\ref{prp:aabb}, by
  defining $f_{2n+1}$ for $n\in\NN$
  as something that can never be in~$u^*$ even when inserting $2n+1$ arbitrary
  symbols, e.g., $f_{2n+1}\colonequals a^{(2n+1)\times\card{u}+1}$, which is
  clearly computable in PTIME.

  For even numbers, we define $f_{2n} \colonequals
  x(u' u'_{-a} u' u'_{-b} u')^n y$ for $n\in\NN$: this is 
  clearly computable in 
  PTIME.
  We show that this is a filter sequence by
  picking $n \in \NN$ and letting $v$ be a word such that
  $\card{v} = 2n$.
  If $v = (ab)^n$, we can clearly interleave $v$ and $f_{2n}$ to obtain a word
  of~$u^*$ by inserting each $a$ of~$v$ in $u'_{-a}$ and each~$b$ of~$v$
  in~$u'_{-b}$. Conversely, for an interleaving of any word with~$f_{2n}$ to
  yield a word of~$u^*$, we know that we must at least insert one $a$ in or
  around each $u'_{-a}$, and one $b$ in or around each $u'_{-b}$. Specifically,
  consider a word~$v$ and 
  consider a candidate interleaving~$w \in (v \shuf f_{2n}) \cap u^*$ and assume by contradiction that there
  is a factor~$u'_{-a}$ of~$f_{2n}$ such that~$w$ did not insert any~$a$ from~$v$
  within this factor or adjacently to this factor (the case
  of~$u'_{-b}$-factors is symmetric). Now, consider the factor $w'$ of~$w$ that
  contains this factor~$u'_{-a}$, 
  the neighboring letter from the beginning or end of~$u'$ where we take one such
  letter which is not~$a$ (which is always possible by hypothesis on~$u'$), and
  all inserted elements (which by hypothesis are all $b$'s). The
  number of $a$'s in the factor~$w'$ is $|u'_{-a}|_a$, which is $|u|_a - 1$,
  but $\card{w'} \geq \card{u}$,
  so, by our preliminary observation, this is impossible because we had assumed
  that $w \in u^*$. 
  Hence, indeed, we must insert
  one $a$ in $u'_{-a}$ or adjacently to it, and likewise for the $u'_{-b}$: these
  insertions are distinct, and they use up all letters of~$v$, so for $f_{2n}
  \shuf v$ to intersect $u^*$ nontrivially, the only possibility is that $v = (ab)^n$. This
  shows that $(f_n)$ is indeed a filter sequence, and allows us to conclude by
  Theorem~\ref{thm:reduction} and Theorem~\ref{thm:abshuffle}.
\end{proof}

We can also use the shuffle reduction to show hardness for other languages, e.g., $(aa+bb)^*$:
\begin{propositionrep}
  \label{prp:aabb}
  Let $L \colonequals (aa+bb)^*$. The problem $\pCSh{L}$ (hence $\pCTS{L}$) is
  NP-hard.
\end{propositionrep}

\begin{proofsketch}
  We do again a shuffle reduction from~$(ab)^*$, with the filter
  sequence $f_{2n} = (ab)^n$. If a word $v$ with $\card{v} = 2n$ is such that
  $v \shuf f_{2n}$ intersects $(aa+bb)^*$ nontrivially, it must intuitively insert
  $a$'s and $b$'s in $f_{2n}$ alternatively, so it must be~$(ab)^n$.
  Note that a similar proof would also show hardness for the language $(a^i + b^j)^*$ for any
  choice of  $i, j \geq 2$.
\end{proofsketch}

\begin{proof}
  We show a shuffle reduction from $K \colonequals (ab)^*$ to $K' \colonequals
  (aa+bb)^*$, which concludes by Theorem~\ref{thm:reduction} and
  Theorem~\ref{thm:abshuffle}. We first define the filter sequence for even values of~$n$, and
  show correctness for them; then we explain how to handle the case of odd~$n$.

  For all even $n\in\NN$ we set $f_n \colonequals (ab)^n$,
  which is clearly computable in
  PTIME.
  Let us now show correctness.
  For the forward direction, it is clear that for every even $n \in \NN$, the
  only word of length $n$ of~$(ab)^*$ is $(ab)^{n/2}$ and we can interleave it
  with $f_n$ to form $(aabb)^{n/2}$.

  For the backwards direction, fix $n \in \NN$, take $v \in A^*$ such that
  $\card{v} = n$, and assume that $(v \shuf f_n) \cap K'$ contains some word~$w$.
  For each $1 \leq i \leq n$, consider the position where the $i$-th letter
  of~$f_n$ occurs in~$w$, and call $\phi_i$ the maximal factor of~$w$ which
  contains the $i$-th letter of~$f_n$ and consists only of occurrences of the
  same letter (i.e., is of the form $a^*$ or $b^*$). By definition, these
  factors must occur in~$w$ in the order $\phi_1 \leq \cdots \leq \phi_n$. Now,
  as any two consecutive letters of~$f_n$ are different, we know that the
  $\phi_i$ are disjoint (so we have~$\phi_1 < \cdots < \phi_n$) and that
  there is only one letter in each $\phi_i$ that was taken from~$f_n$, namely,
  the $i$-th letter of~$f_n$: the others were inserted from~$v$.
  Further, by definition of~$K'$, the $\phi_i$ must all be of even length. This
  means that $f_1$ contains at least one inserted $a$, that $f_2$ contains at
  least one inserted~$b$, etc. As we have $\card{v} = n$, this completely
  specifies~$v$, specifically as $n$ is even we must have $v = (ab)^{n/2}$. This
  is a word of $(ab)^*$, which concludes the backward direction and establishes
  correctness for even~$n$.

  There remains to define the filter sequence for odd numbers, i.e., $2n+1$ with $n
  \in \NN$. As there are
  no words of odd length in~$(ab)^*$, it suffices to define $f_{2n+1}$ to be
  something that can never be in~$K'$ even when inserting $n$ arbitrary symbols.
  For instance, we can take $f_{2n+1} \colonequals (ab)^{2n+2}$, which has the
  required property by a variant of the proof for the backward direction above.
  This concludes the proof of the proposition.
\end{proof}

We show a last result that does not use the shuffle reduction but an easy
consideration on the number of letter occurrences. This result will be useful in
Section~\ref{sec:dich}:

\begin{proposition}
  \label{prp:abb}
  The problem $\pCSh{(ab+b)^*}$ (hence $\pCTS{(ab+b)^*}$) is NP-hard.
\end{proposition}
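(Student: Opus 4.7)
The plan is to reduce from $\pCSh{(ab)^*}$, whose NP-hardness was established in Theorem~\ref{thm:abshuffle}, rather than using a full shuffle reduction. The key observation, matching the promised ``easy consideration on the number of letter occurrences'', is that every word $w \in (ab+b)^*$ decomposes into blocks that are each either $ab$ or a standalone $b$, so $|w|_b \geq |w|_a$, with equality precisely when there are no standalone $b$-blocks. Hence, for any $w \in (ab+b)^*$ with $|w|_a = |w|_b$, we actually have $w \in (ab)^*$. Combined with the trivial inclusion $(ab)^* \subseteq (ab+b)^*$, this gives an equivalence on words of balanced letter count.

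The reduction itself is then essentially the identity. Given a tuple of strings $U$ as an input instance of $\pCSh{(ab)^*}$, I would first compute the total counts $|U|_a$ and $|U|_b$ in logspace. If $|U|_a \neq |U|_b$, then $U$ is automatically a negative instance of $\pCSh{(ab)^*}$ (since every word in $(ab)^*$ has equal numbers of $a$'s and $b$'s), so I output any fixed negative instance of $\pCSh{(ab+b)^*}$, for example the single one-letter string $a$, which has $a \notin (ab+b)^*$. Otherwise, I output $U$ itself as the instance of $\pCSh{(ab+b)^*}$.

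For correctness: if $U$ is positive for $\pCSh{(ab)^*}$, then $|U|_a = |U|_b$ and some interleaving lies in $(ab)^* \subseteq (ab+b)^*$, so the output is positive for $\pCSh{(ab+b)^*}$. Conversely, if the output is $U$ itself (i.e.\ $|U|_a = |U|_b$) and some interleaving $w$ of $U$ lies in $(ab+b)^*$, then $|w|_a = |w|_b$, which by the observation above forces $w \in (ab)^*$, so $U$ is positive for $\pCSh{(ab)^*}$. There is no real obstacle here: the reduction is in logspace, and the argument reduces entirely to the block-decomposition remark about $(ab+b)^*$, after which Theorem~\ref{thm:abshuffle} finishes the job and the extension to $\pCTS{(ab+b)^*}$ is automatic since CSh is a special case of CTS.
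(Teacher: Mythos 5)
Your proposal is correct and follows essentially the same route as the paper: a direct reduction from $\pCSh{(ab)^*}$ that checks whether the numbers of $a$'s and $b$'s coincide and otherwise uses the fact that any word of $(ab+b)^*$ with equally many $a$'s and $b$'s must lie in $(ab)^*$. Your version just makes the paper's ``fail'' step explicit by outputting a fixed negative instance, which is fine.
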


\begin{proof}
  We describe an easy PTIME reduction from $\pCSh{(ab)^*}$ to $\pCSh{(ab+b)^*}$.
  Given an instance $I$, check if the number of $a$-labeled and
  $b$-labeled vertices is the same, and fail if it is not.
  Otherwise, then $I$ achieves a word of $(ab+b)^*$ iff it achieves one
  of~$(ab)^*$, because we must enumerate one $a$-labeled vertex with each
  $b$-labeled vertex.
\end{proof}

We believe that the shuffle reduction applies to many other languages, though
we do not know how to characterize them. In particular, we believe that the
following could be shown with the shuffle reduction, generalizing all the above hardness
results except Proposition~\ref{prp:aabb}:

\begin{conjecture}
  \label{con:fstar}
  Let $F$ be a finite language such that, for some letter $a \in A$, the
  language~$F$ contains no power of~$a$ but contains a word which contains~$a$.
  Then $\pCSh{F^*}$ is NP-hard.
\end{conjecture}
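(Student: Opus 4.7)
My plan is to attempt the conjecture via a shuffle reduction (Theorem~\ref{thm:reduction}) from~$\pCSh{(ab)^*}$, which is NP-hard by Theorem~\ref{thm:abshuffle}. This continues the style of Propositions~\ref{prp:hardu} and~\ref{prp:aabb}, where a filter sequence is built as a long concatenation of words from the target language with a controlled number of letters deleted, so that an input can be spliced back into the ``holes'' if and only if it is exactly the alternating word~$(ab)^{n}$.

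Setup: choose $w \in F$ of minimal length among the words of~$F$ containing~$a$. By hypothesis $w$ is not a power of~$a$, so some letter $b \in A \setminus \{a\}$ also occurs in~$w$. Fix one occurrence of each in~$w$ and write $w_{-a}, w_{-b}$ for the words obtained by deleting them. For even input length~$2n$, I would try a filter sequence of the form
\[
  f_{2n} \colonequals \bigl(w^m \, w_{-a} \, w^m \, w_{-b} \, w^m\bigr)^n,
\]
where the padding constant~$m$ depends only on~$F$, its role being to place enough untouched copies of~$w$ between successive holes that no $F^*$-factorization can migrate an insertion outside its intended hole. Odd input lengths are handled by a dummy filter such as $a^{n+1}$, as in Propositions~\ref{prp:hardu} and~\ref{prp:aabb}. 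For the forward direction, if $v = (ab)^n$ is supplied, then inserting the $i$-th~$a$ of~$v$ into the $i$-th $w_{-a}$-block and the $i$-th~$b$ of~$v$ into the $i$-th $w_{-b}$-block recovers $w^{(3m+2)n} \in F^*$.

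The main obstacle is the backward direction: showing that \emph{every} interleaving of~$f_{2n}$ with a length-$2n$ word~$v$ that lands in~$F^*$ forces $v = (ab)^n$. Because $F$ is an arbitrary finite language, $F^*$ admits many distinct factorizations of the same word, and a hostile interleaving may reparse chunks of the filter under an unintended decomposition, invalidating the ``one insertion per hole'' picture. The available leverage is the hypothesis that $F$ contains no power of~$a$: every occurrence of~$a$ in an accepted word of $F^*$ must lie inside an $F$-factor that also contains a non-$a$ letter within a window of size $\max_{f \in F} |f|$. Combined with sufficiently large padding~$m$, this should localize each input letter to its intended block, after which a length-count argument forces the alternating $ab$-structure on~$v$. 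The step I expect to be most delicate is ruling out ``crossing'' factorizations where a single $F$-factor straddles a hole together with material from adjacent padding blocks; handling this cleanly likely requires either a case analysis on the shape of~$w$ and of the other short words of~$F$, or an algebraic argument via the syntactic monoid of~$F^*$. That this is genuinely delicate is presumably why the statement is left open as a conjecture rather than established as a theorem.
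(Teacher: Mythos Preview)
The statement is a \emph{conjecture}: the paper does not prove it. The surrounding text says explicitly that the authors ``believe that the following could be shown with the shuffle reduction'' but leave it open. So there is no paper proof to compare against.

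Your proposal is not a proof either, and you say so yourself. As an \emph{approach}, it is exactly the one the paper hints at: a shuffle reduction from $(ab)^*$ with a filter sequence built from padded copies of a word $w\in F$ with single-letter deletions, in the style of Proposition~\ref{prp:hardu}. The forward direction you describe is fine. The genuine gap is precisely where you locate it: for an arbitrary finite $F$, the language $F^*$ may admit many factorizations of the filter word, and you have not shown that large padding $m$ actually forces every $F^*$-parse of an interleaving to localize each inserted letter to its intended $w_{-a}$ or $w_{-b}$ block. The observation that no $F$-factor is a pure power of $a$ bounds the distance from any $a$ to a non-$a$ letter inside its factor, but this alone does not rule out reparses that shift factor boundaries across the holes or absorb inserted letters into neighboring padding. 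Until that localization argument is made precise (and it may well require further hypotheses on $F$, or a different filter shape), the backward direction remains open---which is consistent with the paper leaving this as a conjecture.
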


\section{Tractability Results}
\label{sec:sigma2}
\begin{toappendix}
  \label{apx:da}
  \subsection{Additional Explanations About $(ab)^* + A^*aaA^*$}
  \label{apx:notmonomial}
  Let $A \colonequals \{a, b\}$.
We first substantiate a claim made in the main text, namely:

\begin{claim}
The regular
language $K = (ab)^* + A^* a a A^*$ cannot be expressed as a union of
monomials.
\end{claim}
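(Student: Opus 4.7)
The plan is to argue by contradiction. Suppose $K = M_1 \cup \cdots \cup M_k$ where each $M_i$ is a monomial. I will exhibit a word $w'$ that lies in some $M_i$ (hence in $K$) but nevertheless is not in $K$, giving the contradiction.

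The crucial tool is the following \emph{insertion property} of monomials, which I would state as a one-line preliminary observation: for any monomial $M = A_1^* a_1 A_2^* \cdots a_{m-1} A_m^*$, if $w = u_1 a_1 u_2 \cdots a_{m-1} u_m \in M$ with each $u_j \in A_j^*$, then replacing any $u_j$ by a longer word $u_j' \in A_j^*$ yields a word still in $M$. This is immediate from the definition.

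Let $m$ be larger than the number of fixed letters in every $M_i$, and choose $n$ sufficiently large. The word $(ab)^n$ lies in $K$, hence in some $M_i$, so it admits a decomposition $(ab)^n = u_1 a_1 u_2 \cdots a_{m_i-1} u_{m_i}$ with $u_j \in A_j^*$. Because $\sum_j |u_j| = 2n - (m_i-1)$ grows with $n$ while the number of blocks is bounded by $m$, an averaging argument forces some block $u_j$ to have length at least $2$. Since $u_j$ is a contiguous factor of $(ab)^n$, which has neither an $aa$ nor a $bb$ factor, $u_j$ alternates between $a$ and $b$ and therefore contains both letters. This forces $A_j = \{a,b\}$, so \emph{any} letter may be freely inserted into $u_j$ without leaving $M_i$.

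To finish, I would pick any occurrence of $b$ in $u_j$ and insert an additional $b$ immediately next to it. By the insertion property, the resulting word $w'$ belongs to $M_i \subseteq K$. On the other hand, $w'$ contains a $bb$ factor (the one just created inside $u_j$), so $w' \notin (ab)^*$; and $w'$ still contains no $aa$ factor, since $(ab)^n$ had none and we inserted only a $b$. Hence $w' \notin K$, the desired contradiction. The only delicate step is verifying that the inserted $b$ really produces a new $bb$ factor in $w'$ without accidentally creating an $aa$ factor at the boundary between $u_j$ and the adjacent fixed letters $a_{j-1}, a_j$; this is a short case check using the alternating structure of $(ab)^n$.
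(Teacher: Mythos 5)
Your argument is correct, but it proceeds quite differently from the paper's proof. You give a direct, self-contained combinatorial argument: assuming $K = \bigcup_i M_i$ with each $M_i$ a monomial, you place the long alternating word $(ab)^n$ inside some $M_i$, use pigeonhole to find a block $u_j$ of length at least~$2$, observe that $u_j$ must then contain both letters so that $A_j = \{a,b\}$, and exploit the insertion property of monomials to double a~$b$ inside that block, producing a word $w'$ that has a $bb$ factor (so $w' \notin (ab)^*$) but still no $aa$ factor (so $w' \notin A^*aaA^*$), contradicting $w' \in M_i \subseteq K$. All steps check out; in fact the ``delicate'' boundary issue you flag is essentially vacuous, since inserting a $b$ only creates adjacencies involving $b$ and hence can never create an $aa$ factor, wherever the insertion lands relative to the fixed letters $a_{j-1}, a_j$. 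The paper instead invokes the algebraic characterization of unions of monomials (languages of level $3/2$) via the ordered syntactic monoid: it rephrases the Pin--Weil profinite identity $x^\omega \geq x^\omega y x^\omega$ (for $x,y$ of the same content) in terms of the syntactic order $\leq_K$, and refutes it with $x = ab$, $y = bab$, $n=1$, noting $ab \in K$ while $ab\cdot bab\cdot ab = abbabab \notin K$. The paper's route is shorter modulo the cited theorem and illustrates the general decision procedure for this language class; your route avoids all algebraic machinery and makes the obstruction concrete at the level of the words themselves, at the cost of not generalizing into a decidability statement. Both are valid proofs of the claim.
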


We have already mentioned that it is decidable to check if a given (regular)
language can be expressed as a union of monomials. We explain how this process
can be applied to~$K$ to prove the claim:

\begin{proof}
  It is shown in Theorem~8.7 of~\cite{pin-weil97} that a regular language $K$ can
  be expressed as a union of monomials
  (equivalently called ``languages of level 3/2'' in the statement of that
  result) if and only if the ordered syntactic monoid of~$K$ satisfies the
  \emph{profinite}
  identity:
  \begin{equation}
  \label{eqn:profinite}
    \text{For all~}x, y\in A^*\text{~having same content,~} x^\omega \geq x^\omega y x^\omega
    \end{equation}
   where ``$x$ and $y$ having the same content'' means that, for each letter $a\in A$, we
   have
   $|x|_a >0$ iff $|y|_a > 0$, and where $\omega$ denotes the idempotent
   power in the free profinite monoid (see~\cite{pin-weil97} for precise definitions).

  This can be rephrased in more elementary terms using
  the notion of \emph{syntactic order} $\leq_K$ induced by~$K$,
  which can be thought of as an ordered version of the Myhill-Nerode congruence.
  Formally, the order $\leq_K$ is defined as follows:
  for all $x, y \in A^*$, we have $x\leq_K y$ 
  iff for all $u,v\in A^*$, $uyv\in K$ implies $uxv\in K$.
  Equation~\ref{eqn:profinite} can then equivalently be rephrased to the following condition:
  for all words $x,y\in A^*$
with same content, and for all integers $n$ such that $x^n \leq_K x^{2n} \leq_K
x^n$, we have
$x^n\geq_K x^n y x^n$.

  For our choice of language $K$, we can show that this rephrased condition does
  not hold, by taking $x\colonequals ab$ and $y\colonequals bab$ and $n
  \colonequals 1$.
  Indeed, we have $(ab)^1 \leq_K (ab)^2 \leq_K (ab)^1$, but the right-hand-side
  of the implication is wrong: we have $x^1 = ab$ in $K$, so we can take $u = v =
  \epsilon$ in the definition of the syntactic order, however we then have
  $x^1yx^1=abbabab$ which is not in~$K$, so we have shown that 
  $x^n\not \geq_K x^n y x^n$.  
\end{proof}

\end{toappendix}

Having shown hardness for several languages,
we now present our tractability results.
We will also rely on some of these results to show our
coarser dichotomy result in the next section.
\subparagraph*{Closure under union.}
The first observation on tractable languages is that they
are closed under union, as follows (recalling the 
definition of CTS and CSh for language \emph{families}):

\begin{lemma}
 \label{lem:closeunion}
  For any finite family of languages $\calK$,
  there is a logspace reduction
  from $\pCTS{\bigcup \calK}$ to $\pCTS{\calK}$, and likewise from
  $\pCSh{\bigcup \calK}$ to $\pCSh{\calK}$.
\end{lemma}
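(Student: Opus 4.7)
\medskip

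The plan is to exploit the fact that $\calK$ is a \emph{finite} family directly: a topological sort~$\sigma$ of a DAG~$G$ achieves a word in $\bigcup \calK$ if and only if there exists some $K \in \calK$ such that $\sigma$ achieves a word in~$K$. This immediately reduces $\pCTS{\bigcup\calK}$ to a disjunction of a constant number of queries to $\pCTS{\calK}$.

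More concretely, I would describe the reduction as a logspace disjunctive truth-table reduction: given an input DAG $G$ for $\pCTS{\bigcup\calK}$, enumerate the finitely many $K \in \calK$ and, for each of them, output the instance $(K, G)$ as a query to an oracle for $\pCTS{\calK}$; accept iff at least one query returns ``yes''. Since $\calK$ is a fixed parameter of the problem (not part of the input), enumerating its regular expressions takes constant space, and emitting each query amounts to writing a constant-size regular expression for~$K$ followed by a logspace copy of~$G$, so the whole procedure is in logspace. The same reasoning, with $G$ specialized to a disjoint union of directed path graphs, handles the reduction from $\pCSh{\bigcup\calK}$ to $\pCSh{\calK}$.

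Correctness is essentially a tautology from the definitions: by the definition of set union, $\lfun(\sigma) \in \bigcup \calK$ iff $\lfun(\sigma) \in K$ for some $K \in \calK$; hence $G$ admits a topological sort achieving $\bigcup\calK$ iff at least one of the queries $(K,G)$ is positive.

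There is no real obstacle here; the only subtlety worth flagging in the write-up is that the reduction is naturally a disjunctive (or Turing) logspace reduction rather than a many-one logspace reduction, but this is enough for the intended use, namely transferring NL upper bounds from $\pCTS{\calK}$ (resp.\ $\pCSh{\calK}$) to $\pCTS{\bigcup\calK}$ (resp.\ $\pCSh{\bigcup\calK}$), since NL is closed under logspace disjunctive truth-table reductions.
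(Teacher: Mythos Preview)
Your proposal is correct and follows essentially the same approach as the paper: enumerate the finitely many $K'\in\calK$ and accept iff the instance is positive for at least one of them. You add a bit more detail (the logspace bookkeeping and the observation that the reduction is disjunctive truth-table rather than many-one), but the underlying argument is identical.
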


\begin{proof}
  To solve a problem for the language~$\bigcup \calK$ on an input instance~$I$, simply enumerate the
  languages $K' \in \calK$, and solve the problem on~$I$ for each~$K'$. Clearly
  $I$ is a positive instance of the problem for~$\bigcup \calK$ iff~$I$ is a positive
  instance of the problem for one of the~$K'$.
\end{proof}

\begin{corollary}
  \label{cor:closeunion}
  For any finite family of languages $\calK$, if $\pCTS{K'}$ is in NL
  for each~$K' \in \calK$, then so is $\pCTS{\bigcup \calK}$. The same is true
  of the CSh-problem.
\end{corollary}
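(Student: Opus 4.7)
The plan is to combine Lemma~\ref{lem:closeunion} with closure of \NL under finitely many disjunctions. Given an input instance~$I$, the reduction in the proof of Lemma~\ref{lem:closeunion} simply asks, for each $K' \in \calK$, whether $I$ is a positive instance of $\pCTS{K'}$, and accepts iff at least one answer is yes. Since $\calK$ is a fixed finite family, there are only constantly many choices of~$K'$, so one can nondeterministically guess a language $K' \in \calK$ in constant space and then invoke the assumed \NL procedure for $\pCTS{K'}$ on the input~$I$. The overall procedure is nondeterministic and uses logarithmic space, establishing that $\pCTS{\bigcup \calK}$ is in~\NL.

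For the CSh case the argument is literally identical, since Lemma~\ref{lem:closeunion} gives the analogous reduction for CSh and the same guess-and-verify strategy applies.

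There is no real obstacle here: the only thing to be careful about is that $\calK$ is finite and fixed (not part of the input), so that the nondeterministic guess among the languages of~$\calK$ really does take only $O(1)$ bits and the constant number of branches does not inflate the space bound.
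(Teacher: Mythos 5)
Your proposal is correct and matches the paper's approach: the corollary is immediate from Lemma~\ref{lem:closeunion} together with the observation that a fixed finite disjunction of NL tests (guess one $K' \in \calK$ in constant space, run its NL procedure) stays in NL. Nothing further is needed.
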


Clearly, tractability is also preserved under the
\emph{reverse operator}, i.e., reversing the order of words in a language;
however tractable languages are \emph{not} closed under many usual operators, as
we will show in Section~\ref{sec:dich}. Still, closure under union will often
be useful in the sequel.

\subparagraph*{Monomials.}
We will now show that CTS is tractable for an important family of languages (and
unions of such languages): 
the \emph{monomial} languages.
Having fixed the alphabet $A$, a
\emph{monomial} is a language 
of the form $A_1^*a_1A_2^*a_2\cdots a_{n}A_{n+1}^*$ with $a_i \in A$ and $A_i
\subseteq A$ for all~$i$. 
In particular, we may have $A_i = \emptyset$ so that $A_i^* = \epsilon$: hence,
for every word $u\in A^*$, the language $A^* u A^*$ is a monomial language,
which intuitively tests whether a word contains the pattern~$u$.
Several decidable algebraic and logical characterizations of these languages are
known; in particular, unions of monomials are exactly the languages that are
definable in the first-order logic fragment $\Sigma_2[<]$ of formulas with
quantifier prefix~$\exists^* \forall^*$, and it is decidable
to check if a regular language is in 
this class~\cite{pin-weil97,papermansemigroup}.
We show:

\begin{toappendix}
  \subsection{Proof of Theorem~\ref{thm:monomial}: Tractability for Monomials}
\end{toappendix}

\begin{theoremrep}
  \label{thm:monomial}
  For any monomial language $K$, the problem
  $\pCTS{K}$ is in NL.
\end{theoremrep}

\begin{proofsketch}
  Let $K$ be $A_1^*a_1A_2^*a_2\cdots A_n^*a_{n}A_{n+1}^*$.
  We can first guess in NL the vertices $v_1, \ldots, v_{n}$ to which the
  $a_1, \ldots, a_{n}$ are mapped, so all
  that remains is to check, for each such guess, whether we can match the
  remaining vertices to the~$A_i$. We
  proceed by induction on~$n$. The base case of~$n=0$ (i.e., $K = A_1^*$) is
  trivial. For the induction step,
  using the fact that NL = co-NL 
  (see~\cite{immerman1988nondeterministic,szelepcsenyi1988method}),
  we check that the descendants of the last element~$v_{n}$
  are all in~$A_{n+1}^*$, and then we compute the set $S$ of vertices that
  \emph{must} be enumerated before~$v_{n}$: they are the ancestors of
  the~$v_i$, and the ancestors of any vertex labeled by a letter
  in~$A\setminus A_{n+1}$. We then use the induction hypothesis to check in NL
  whether $S$ has
  a topological sort that achieves a word in~$A_1^* a_1 \ldots A_{n-1}^* a_{n-1} A_n^*$.
\end{proofsketch}

\begin{proof}
  Let $K$ be $A_1^*a_1A_2^*a_2\cdots A_n^*a_{n}A_{n+1}^*$.
  First, we can guess in NL the vertices $v_1, \ldots, v_n$ of~$G = (V, E,
  \lfun)$ to which the
  $a_1, \ldots, a_n$ are associated, and verify that indeed we have $\lfun(v_i) =
  a_i$ for all~$1 \leq i \leq n$. Hence, up to making such a guess and
  relabeling the vertices, we can assume without loss of
  generality what we call the \emph{fresh pivot} condition on the input
  $A$-DAG: for each $a_i$ in our target language, there is exactly one $v_i$ in
  the input instance such that $\lfun(v_i) = a_i$.

  We now prove by induction on~$n$ that,  for any monomial $K = A_1^* a_1 \cdots
  A_n^* a_n A_{n+1}^*$, given an input $A$-DAG satisfying the fresh pivot
  condition, we can decide in NL whether $A$ has a topological sort
  satisfying~$K$.

  The base case of $n=0$ is trivial because $K$ is of the form $A_1^*$: we
  simply check if all element labels are in~$A_1^*$.
  For the induction step on $n+1$, let $K = A_1^* a_1 A_2^* \cdots a_{n+1} A_{n+2}^*$ and
  $K' = A_1^* a_1 A_2^* \cdots a_{n} A_{n+1}^*$. Let $G = (V, E,
  \lfun)$ be the input $A$-DAG satisfying the fresh pivot condition, and let $v_1,
  \ldots, v_{n+1}$ be the uniquely defined vertices matched to~$a_1, \ldots,
  a_{n+1}$.
  We define the sub-$A$-DAG $G'$ to be the restriction of~$G$ on the following
  vertex set~$V'$:
  \begin{itemize}
    \item the ancestors of the $v_1, \ldots, v_n$, including $v_1, \ldots, v_n$;
    \item the ancestors of~$v_{n+1}$ except $v_{n+1}$ itself;
    \item for each $w$ incomparable
      to~$v_{n+1}$ such that $\lfun(w) \notin A_{n+2}$, the ancestors of~$w$
      (including itself).
  \end{itemize}
  We now claim the following:
  \subparagraph*{Claim.} $G$ is a positive
  instance to~$K$ iff all descendants $z$ of $v_{n+1}$ are such that $\lfun(z) \in
  A_{n+2}$ and $G'$ is a positive instance to~$K'$.\\
  
  Note that $G'$ is always
  computable in NL, and the condition on the descendants of~$v_{n+1}$ can be
  checked in co-NL, hence in NL by the Immerman-Szelepcs\'enyi theorem
  \cite{immerman1988nondeterministic,szelepcsenyi1988method}.
  Hence, once this claim is proved, we have an NL algorithm for
  $\pCTS{K}$ by running the NL algorithm on the descendants of~$v_{n+1}$ and
  running the algorithm given by the
  induction hypothesis on~$G'$, which has been implicitly computed in NL.

  What remains is to prove the claim.
  For the backward direction, if the condition of the claim is respected, then
  we build the topological sort~$\sigma$ of~$G$ satisfying~$K$ by concatenating
  the topological sort~$\sigma'$ of~$G'$ satisfying~$K'$ which exists by
  assumption, the vertex $v_{n+1}$ which achieves $a_{n+1}$, and any topological
  sort of~$G \setminus (G' \cup \{v_{n+1}\})$. We must argue that this a
  topological sort. Indeed, observe first that the condition of the claim and
  the fresh pivot condition ensures that no descendant of~$v_{n+1}$ has a label
  in $a_1, \ldots, a_n$, i.e,. $v_{n+1}$ is not an ancestor of any~$v_i$; in
  particular $v_{n+1}$ is not in~$V'$. However, by definition of~$G'$, all ancestors of~$v_{n+1}$
  are in~$V'$. So we know that we can indeed concatenate~$\sigma'$, $v_{n+1}$,
  and a topological sort of the remaining elements of~$G$, and the result~$\sigma$ is
  indeed a topological sort of~$G$. We now argue
  that~$\sigma$ achieves~$K$: this is because $\sigma'$ achieves~$K'$, $v_{n+1}$
  achieves $a_{n+1}$, and by assumption all remaining vertices are either
  descendants of~$v_{n+1}$ so their label is in~$A_{n+2}$, or they are
  incomparable to~$v_{n+1}$ so their label must be in~$A_{n+2}$ (they would be
  in~$G'$ otherwise). Thus, $\sigma$ is a topological sort of~$G$ that
  achieves~$K$, establishing the backward implication.

  For the forward direction, 
  consider a topological sort $\sigma$ of~$G$ that
  achieves~$K$. Thanks to the fresh pivot condition, we know that $v_{n+1}$ is
  matched to~$a_{n+1}$. Let $U$ be the elements enumerated before~$v_{n+1}$
  in~$\sigma$, and let~$\sigma'$ be the topological sort induced by~$\sigma$
  on~$U$:
  we know that $\sigma'$ satisfies~$K'$. We now claim that $V' \subseteq U$.
  Indeed, first, by the fresh pivot condition, $\sigma'$ must enumerate $a_i$ for all $1
  \leq i \leq n$, so $v_1, \ldots, v_n$ and their ancestors must be in~$V'$.
  Second, as $\sigma$ enumerates $v_{n+1}$ just after $\sigma'$, we know that
  $\sigma'$ must enumerate all ancestors of~$v_{n+1}$ except $v_{n+1}$ itself. Third, assuming by
  way of contradiction that $V'$ does not contain an ancestor of a vertex $w$
  incomparable to~$v_{n+1}$ such that $\lfun(w) \notin A_{n+2}$, we would have
  that $V'$ does not contain $w$ either, and as $w$ is incomparable to~$v_{n+1}$
  it is different from~$v_{n+1}$ so $w$ must be enumerated after $v_{n+1}$
  by~$\sigma$, but $\lfun(w) \notin A_{n+2}$, which is impossible because we are matching
  elements to~$A_{n+2}^*$ after~$v_{n+1}$. So indeed $V' \subseteq U$. Further,
  as~$V'$ contains the $v_1, \ldots, v_n$, we know that the topological sort
  $\sigma''$ of~$V'$ defined as the restriction
  of~$\sigma'$ to~$V'$ also achieves~$K'$: intuitively, given a topological sort
  that achieves~$K'$, we can remove any elements except those matched to the
  $a_i$ and the result still achieves~$K'$. So $\sigma''$ witnesses that $G'$ is a
  positive instance to~$K'$. Now, as $\sigma$ must enumerate all descendants~$z$
  of~$v_{n+1}$ after $v_{n+1}$ which achieves $a_{n+1}$, we know that they must
  be such that $\lfun(z) \in A_{n+2}$, so we have shown the condition and
  established the forward implication.

  We have shown our claim, which concludes the proof of
  Theorem~\ref{thm:monomial}.
\end{proof}

\begin{toappendix}
  \subsection{Proof of Proposition~\ref{prp:trivwidth}: Tractability Based on
  Width}
  \label{apx:widthproof}
\end{toappendix}
\subparagraph*{Tractability based on width.}
While unions of monomials are a natural class, it turns out that they do not
cover all tractable languages. In particular, we can show: 

\begin{proposition}
  \label{prp:abaaaa}
  Let $A \colonequals \{a, b\}$ and $K \colonequals (ab)^* + A^*aaA^*$.
  The problem $\pCTS{K}$
  (hence $\pCSh{K}$) is in NL.
\end{proposition}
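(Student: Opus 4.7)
The plan is to give a direct NL algorithm, since we cannot simply invoke Corollary~\ref{cor:closeunion} on the decomposition $K = (ab)^* \cup A^*aaA^*$: the language $(ab)^*$ is itself already NP-hard by Theorem~\ref{thm:abshuffle}. Instead, I would observe that either $G$ admits a topological sort achieving $A^*aaA^*$ (i.e., containing two consecutive $a$-labels at some point), or the structure of~$G$ is rigid enough that $(ab)^*$-feasibility reduces to a handful of NL-checkable local conditions.

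First, I would characterize when $G$ has a topological sort containing~$aa$: this holds iff there exist two $a$-labeled vertices $u, v$ such that either (i)~$u$ and~$v$ are incomparable in~$G$, or (ii)~$u$ is an ancestor of~$v$ with no vertex lying strictly between them (i.e., no vertex $w \neq u,v$ is both a descendant of~$u$ and an ancestor of~$v$). The forward direction is immediate by inspecting consecutive positions in the sort; the backward direction constructs a topological sort explicitly by first enumerating a well-chosen downward-closed set of ancestors, then~$u$, then~$v$, then the remaining vertices in any topological order. Both (i) and (ii) are checkable in NL, using $\NL = \mathrm{co}\NL$ (see~\cite{immerman1988nondeterministic,szelepcsenyi1988method}) for the ``nothing strictly between'' clause.

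If the above check fails, then a Dilworth-style observation kicks in: there is no antichain of two $a$-vertices (otherwise (i) would witness $aa$), so the $a$-vertices form a chain $a_1 \to a_2 \to \cdots \to a_k$ in the ancestor relation, and between each consecutive pair $a_i, a_{i+1}$ in the chain some vertex lies strictly between. This vertex must be $b$-labeled, since an $a$-labeled vertex strictly between would contradict the chain structure. Under this rigid setup, any topological sort achieving $(ab)^*$ must have the shape $a_1\, w_1\, a_2\, w_2 \cdots a_k\, w_k$ with $k$ $b$-vertices filling the intermediate ``slots''. Letting $B_i$ (for $i < k$) be the set of $b$-vertices that are descendants of~$a_i$ and ancestors of~$a_{i+1}$, each $b$-vertex in $B_i$ is forced into slot~$i$. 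I would then verify in NL the following conditions: (a) the total number of $b$-vertices equals~$k$; (b) $|B_i| = 1$ for each $i < k$ (otherwise either a slot has too many forced $b$-vertices or $aa$ was achievable after all); and (c) the at-most-one remaining unforced $b$-vertex is not an ancestor of any~$a_i$, so that it can be placed in the final slot after~$a_k$.

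The main obstacle is proving sufficiency, i.e., that once these counts and constraints check out, the canonical layout $a_1\, \beta_1\, a_2\, \beta_2 \cdots a_k\, \beta_k$ (with $\beta_i$ the unique element of~$B_i$, and the final slot filled either by a descendant of~$a_k$ or by the unique unforced $b$-vertex) genuinely respects every edge of~$G$. This follows because the chain structure of the $a_i$ together with the forcedness of each~$\beta_i$ (ancestor of $a_{i+1}$, descendant of~$a_i$) forces every edge of~$G$ to point forward in the layout; in particular any ancestor relation $\beta_i \to \beta_j$ must satisfy $i < j$ (otherwise combining with the chain would produce a cycle). Together with the union-closure via Corollary~\ref{cor:closeunion} applied in the opposite direction (accepting if either the $aa$-check or the $(ab)^*$-check succeeds), this yields an NL algorithm for $\pCTS{K}$.
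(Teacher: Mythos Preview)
Your proposal is correct and follows essentially the same approach as the paper: first test whether $A^*aaA^*$ is achievable via either an incomparable pair of $a$-vertices or a comparable pair with nothing strictly between them, and if not, exploit the resulting chain structure on the $a$-vertices to reduce $(ab)^*$-feasibility to simple counting conditions checkable in NL. Your write-up is in fact more explicit than the paper's on the sufficiency side (defining the sets~$B_i$ and verifying that the canonical layout respects all edges), whereas the paper simply asserts the equivalent counting criterion without spelling out the argument.
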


This result is not covered by Theorem~\ref{thm:monomial}, because we can show
that $K$ cannot be expressed as a union of monomials (see
Appendix~\ref{apx:notmonomial}); and the proof technique is different.

\begin{proof}
  Let $G$ be an input $A$-DAG. We first check in NL if~$G$ contains two
  incomparable vertices
  $v_1 \neq v_2$ such that $\lfun(v_1) = \lfun(v_2) = a$.
  If yes,
  we conclude that $G$ is a positive instance, as we can clearly achieve~$K$ by
  enumerating $v_1$ and $v_2$ contiguously.

  If there are no two such vertices, we check in NL if there are two comparable
  $a$-labeled vertices $v_1 \neq v_2$ that can be enumerated contiguously, i.e.,
  there is an edge $v_1 \rightarrow v_2$ but no vertex $w$ that is
  \emph{between}~$v_1$ and $v_2$, i.e., is a descendant of~$v_1$ and an ancestor
  of~$v_2$. If there are two such vertices $v_1$ and~$v_2$, we conclude again
  that $G$ is a positive instance.

  Otherwise, our first test implies that $G$ induces a total order on the 
  $a$-labeled vertices, and our second test
  implies that any two consecutive $a$-labeled vertices in this order must have
  at least one $b$-labeled vertex between them. This ensures that no topological
  sort achieves $A^* aa A^*$, so it suffices to test whether one can achieve
  $(ab)^*$. Clearly this is the case iff all consecutive pairs of~$a$-labeled
  vertices have exactly one $b$-labeled vertex between them, and there is
  exactly one additional $b$-labeled vertex that can be enumerated after the
  last~$a$-labeled vertex. We can test this in NL, which concludes the proof.
\end{proof}

Intuitively, the language of Proposition~\ref{prp:abaaaa}
is tractable because it is easy to solve unless the
input instance has a very restricted structure, namely, all $a$'s are
comparable.
We do not know whether this result generalizes to $(ab)^* + A^* a^i A^*$ for $i
> 2$. However, following the intuition of this proof, we can show the
tractability of a similar kind of regular languages:

\begin{propositionrep}
  \label{prp:trivwidth}
  Let $A \colonequals \{a, b\}$, let~$K'$ be a regular language, let $i \in
  \NN$, and let $K \colonequals K' + A^* (a^i + b^i) A^*$.
  The problem $\pCTS{K}$ (hence $\pCSh{K}$) is in NL.
\end{propositionrep}

As in
Proposition~\ref{prp:abaaaa}, CTS is trivial for the languages in this
proposition unless the input $A$-DAG $G$ has a
restricted shape. Here, the requirement is on the \emph{width} of~$G$, i.e., the
maximal cardinality of a subset of pairwise incomparable vertices (called an
\emph{antichain}), so we can show Proposition~\ref{prp:trivwidth}  by
distinguishing two cases depending on the width of~$G$:

\begin{proofsketch}
  We test in NL whether the input $A$-DAG $G$ contains an antichain $C$ of
  size~$2i$: if it does, then at least $i$ vertices in~$C$ must have the same label,
  and we can enumerate them in succession to achieve $A^* a^i A^*$ or $A^* b^i A^*$, so
  $G$ is a positive instance. Otherwise, $G$ has
  width~$< 2i$, and Dilworth's
  theorem~\cite{dilworth1950decomposition} implies that its elements can be
  partitioned into chains, so that CTS can be solved in NL following a dynamic algorithm
  on them. 
\end{proofsketch}

\begin{toappendix}
To show this result, we will need several preliminary definitions.
Recall from the main text that an \emph{antichain} is a
set $S \subseteq V$ of vertices which are pairwise incomparable, and
the \emph{width} of a DAG is the size of its largest antichain. The main claim
is then the following:

\begin{proposition}
  \label{prp:dynamic}
  For any regular language $K$, the problem $\pCTS{K}$ can be solved in space $O(k \log
  n)$, where $k$ is the width of the input DAG and $n$ is its total size.
  The same bound holds for $\pCSh{K}$ where $k$ is the number of
  input strings.
\end{proposition}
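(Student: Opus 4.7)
My plan is to fix a DFA $\calA$ recognizing~$K$ and design a nondeterministic algorithm that builds a topological sort one vertex at a time, maintaining two pieces of information: the current state of~$\calA$ (constant size since~$K$ is fixed), and a compact encoding of the set $S$ of already-processed vertices. The key observation is that $S$ is necessarily downward-closed in the DAG order, so $S$ is determined by its \emph{frontier} $F$, the antichain of maximal elements of~$S$. Because $F$ is an antichain, its cardinality is bounded by the width~$k$, so storing it takes $O(k \log n)$ bits, exactly the targeted budget.

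At each step, the algorithm nondeterministically guesses a vertex $v$ and uses reachability-style subroutines to check that (a) $v \notin \downarrow F$ and (b) every in-neighbor of~$v$ lies in $\downarrow F$. If both hold, it updates the DFA state by reading $\lfun(v)$ and replaces $F$ by $(F \setminus \{u \in F : u \leq v\}) \cup \{v\}$. A short verification shows that $v$ is maximal in $S \cup \{v\}$ (since $S$ is downward-closed and $v \notin S$) and that the other elements of the new frontier were maximal in~$S$ and are incomparable to~$v$, so the invariant and the size bound are preserved. Acceptance is declared when $\downarrow F = V$ and the DFA state is final, which happens exactly when~$G$ admits a topological sort achieving a word of~$K$. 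The reachability checks require $O(\log n)$ reusable working space, which fits into the overall $O(k \log n)$ budget. For $\pCSh{K}$, the DAG is a disjoint union of~$k$ directed paths, so a downward-closed subset is described directly by a $k$-tuple of integers counting how many letters of each string have been consumed; this gives $O(k \log n)$ space with no reachability needed.

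The main subtlety I anticipate is verifying carefully that the frontier update preserves both downward-closedness of the encoded set and the antichain property; the argument sketched above does this but has to handle the fact that~$v$ may displace several elements of~$F$ at once. A secondary delicate point is to confirm that the reachability subroutines can be implemented while reusing the same auxiliary tape across steps, so the overall space really stays at $O(k \log n)$ rather than accumulating. Crucially, no explicit chain decomposition of~$G$ is required: Dilworth's theorem will enter only implicitly, through the bound on antichain size that controls the memory footprint of~$F$.
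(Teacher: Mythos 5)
Your proof is correct, but it takes a genuinely different route from the paper's. The paper proves this proposition via chain decompositions: it invokes Dilworth's theorem to get a width-$k$ chain partition, and then spends most of its effort showing that a \emph{canonical} (lexicographically minimal) chain partition can be computed implicitly in space $O(k \log n)$ (the functions $\fst$ and $\nxt$, implemented through ``continuation'' checks), so that a nondeterministic simulation of the automaton can walk along the chains while storing only the current head of each chain. You instead encode the prefix $S$ of the topological sort directly by its frontier $F$ of maximal elements: since $S$ is downward-closed, $S = \downarrow F$, and since $F$ is an antichain its size is at most $k$ \emph{by the very definition of width} -- so, contrary to your closing remark, Dilworth's theorem is not needed even implicitly, which makes your argument strictly lighter than the paper's (no canonicity issue arises because the frontier is carried in memory rather than recomputed across oracle calls). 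Your frontier-update analysis is sound: $v$ is maximal in $S \cup \{v\}$ because $v \notin S$ and $S$ is downward-closed, the surviving elements of $F$ are exactly those not below $v$, and checking that all in-neighbors of $v$ lie in $\downarrow F$ suffices to guarantee all ancestors do. The one point you should make explicit is that the test $v \notin \downarrow F$ (and nothing else) is a \emph{non}-reachability check, so it needs $\text{NL} = \text{co-NL}$ (Immerman--Szelepcs\'enyi) to be verified within the nondeterministic $O(\log n)$ auxiliary budget; this is unproblematic and the paper relies on the same theorem for its own subchecks. For $\pCSh{K}$ both approaches coincide, since the $k$ strings give the chain partition and the frontier for free as a $k$-tuple of prefix lengths. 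What the paper's heavier machinery buys is essentially an explicit, reusable implicit representation of a chain partition (of possible independent interest); for the statement at hand, your frontier encoding reaches the same $O(k \log n)$ bound more directly.
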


  We note that a similar task was already known to be in PTIME
  by~\cite[Theorem~17]{amarilli2017possible}, but showing the space bound given
  here will introduce several additional technicalities.
  From Proposition~\ref{prp:dynamic},
  it is easy to show Proposition~\ref{prp:trivwidth}:

  \begin{proof}[Proof of Proposition~\ref{prp:trivwidth}]
  We follow the proof sketch: we test in NL if the input DAG contains an
  antichain of size~$2i$. If it does, as explained in the sketch, the DAG is a
  positive instance. So the only remaining case is when the input DAG has width
  $\leq 2i$, so we can conclude by Proposition~\ref{prp:dynamic}.
\end{proof}

  So all that remains is to show Proposition~\ref{prp:dynamic}, which we do in
  the rest of Appendix~\ref{apx:widthproof}.
  The high-level idea of the proof is to use Dilworth's
theorem~\cite{dilworth1950decomposition}, which essentially shows that the
  \emph{width} of any DAG~$G$ is equal to the minimal cardinality of a
  \emph{chain partition} of $G$, i.e., a partition of~$G$ into disjoint chains,
  where we may additionally have arbitrary edges between the chains.
  We will then perform a logspace algorithm
  following such a partition to guess an accepting path of a 
  (fixed) automaton for~$K$.

We present the complete proof of Proposition~\ref{prp:dynamic} in the rest of Appendix~\ref{apx:widthproof}.
We first define formally the notion of chain partition. 
Let $G = (V, E)$ be a DAG, and let $(V, E')$ be its transitive closure.
A \emph{chain partition} of~$G$
is a partition $V_1 \sqcup \cdots \sqcup V_n$ of~$V$, such
that, for all $1 \leq i \leq n$, the restriction of $E'$ to $V_i \times V_i$ is
the transitive closure of a directed path graph: equivalently, for each pair of
vertices $v \neq v'$ of~$V_i$, either $v$ has a directed path to~$v'$ in~$E$ or $v'$ has
a directed path to~$v$ in~$E$.
We call each of the~$V_i$ a \emph{chain} of~$G$.
Note that, in addition to the edges between vertices of the chain, there may
also be arbitrary edges in~$G$ between $V_i$ and $V_j$
for $i \neq j$.
The \emph{width} of a chain partition is the number of chains that it
contains. The following is then known from partial order theory:

\begin{theorem}[\cite{dilworth1950decomposition}]
  \label{thm:dilworth}
  For any DAG $G$, the width of~$G$ is~$k$ iff there exists a chain partition
  of width~$k$ of~$G$.
\end{theorem}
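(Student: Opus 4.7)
The plan is to establish the two directions separately. The easy direction is immediate: if $G$ admits a chain partition of width $\ell$, then any antichain contains at most one vertex from each chain, so the width of $G$ is at most $\ell$; in particular, any chain partition of $G$ needs at least $k$ chains when $k$ is the width of $G$.

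For the converse, I would proceed by strong induction on $|V|$. Pick a maximal chain $C$ in $G$ and set $G' \colonequals G \setminus C$. If the width of $G'$ is at most $k-1$, the induction hypothesis yields a chain partition of $G'$ into $k-1$ chains, and adjoining $C$ as a new chain gives a partition of $G$ into $k$ chains. Otherwise, the width of $G'$ is still $k$, witnessed by some antichain $A = \{a_1, \ldots, a_k\}$ of $G'$. Define $D^- \colonequals \{v \in V : v \leq a_i \text{ for some } i\}$ and $D^+ \colonequals \{v \in V : v \geq a_i \text{ for some } i\}$. Since $A$ has maximum size in $G$, every vertex is comparable to some $a_i$, so $D^- \cup D^+ = V$ and $D^- \cap D^+ = A$. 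The key point is that the top element of $C$ cannot lie in $D^-$: otherwise it would be below some $a_i$, but then $a_i$ would be comparable to every element of $C$ and could extend it, contradicting maximality of $C$; symmetrically, the bottom of $C$ is not in $D^+$. Hence both $D^-$ and $D^+$ are proper subsets of $V$, each of width $k$, and strictly smaller than $G$, so induction applies.

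In the chain partition of $D^-$ of size $k$ given by induction, each of the $k$ chains contains exactly one $a_i$ by pigeonhole (they cannot contain two by antichain-ness). Moreover, $a_i$ must be the \emph{maximum} of its chain: any strictly larger element $v > a_i$ in the chain would lie in $D^-$, hence satisfy $v \leq a_j$ for some $j$, forcing $a_i < a_j$ and contradicting antichain-ness of $A$. Symmetrically, in the chain partition of $D^+$, each $a_i$ is the minimum of its chain. Gluing the two partitions at the $a_i$'s yields a chain partition of $V$ of size exactly $k$, as required.

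The main subtlety is ensuring that $D^-$ and $D^+$ are \emph{proper} subsets of $V$: without this, the induction does not strictly reduce the problem size. This is precisely where the maximality of the chain $C$ is needed, and it is the step that requires the most care to formalize correctly.
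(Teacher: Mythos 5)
Your proof is correct, but note that the paper does not prove this statement at all: it is quoted directly from Dilworth's paper \cite{dilworth1950decomposition} as a known result, so there is no in-paper argument to compare against. What you have written is essentially Galvin's classical inductive proof: induct on $|V|$, remove a maximal (non-extendable) chain $C$, and either the width drops to $k-1$ (adjoin $C$ as the $k$-th chain) or a $k$-antichain $A$ survives in $G \setminus C$, in which case you split $V$ into the down-set $D^-$ and up-set $D^+$ of $A$, apply induction to each, and glue the two partitions at the elements of~$A$. All the key verifications are in order: every vertex is comparable to some $a_i$ by maximality of the antichain, $D^- \cap D^+ = A$ by antisymmetry, maximality of $C$ forces its top out of $D^-$ and its bottom out of $D^+$ (so both pieces are proper and induction applies), and each $a_i$ is the maximum of its chain in $D^-$ and the minimum in $D^+$, so the gluing yields $k$ chains covering $V$. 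Two cosmetic remarks: in the first branch you assert the induction hypothesis gives exactly $k-1$ chains; strictly it gives $w(G\setminus C)$ chains, but since deleting a chain can lower the width by at most one, $w(G\setminus C)=k-1$ in that branch, so the count is right (worth one sentence). Also, for the paper's setting the comparability relation is reachability in the DAG, i.e.\ the transitive closure used in its definition of chain partition; since this is a partial order, your order-theoretic argument transfers verbatim. One caveat relevant to the paper's use of the theorem: your proof is non-constructive as far as complexity is concerned (it does not yield the low-space computation of a chain partition), which is exactly why the appendix develops a separate NL-style procedure (Lemmas~\ref{lem:nextfun} and~\ref{lem:continuation}) rather than relying on a proof of Dilworth's theorem itself.
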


However, to show our desired space bound,
we need to look closely into the complexity of computing
a chain partition. This task is known to be in PTIME \cite{fulkerson1955note}
but we are unaware of an existing proof to show that it can be done in NL.
Because of this, we must give a custom scheme to compute implicitly a specific chain partition
that meets our logspace requirements. One difficulty will be to ensure that, as
we compute the chain partition implicitly in NL, we are always looking at the
same chain partition each time we recompute it implicitly (i.e., we are not
looking at some random chain partition that was nondeterministically chosen for
this implicit computation). To fix a canonical
choice of chain partition, we look at the \emph{minimal} one in an order that we
will define.

We will see a width-$k$ chain partition as a labeling function $\chi$ from~$V$
to~$\{1, \ldots, k\}$ such that, letting $V_i \colonequals \{v
\in V \mid \chi(v) = i\}$, then $V_1 \sqcup \cdots \sqcup V_k$ is indeed a chain
partition.
Given a DAG $(V, E)$, 
the vertices of~$V$ are integers, each of them represented in binary by a
sequence of size $\log
\card n$, and we
let~$<$ denote the corresponding total order relation on~$V$.
We can then talk
about the topological sort $\sigma$, equivalently seen as a total order
$<_\sigma$, which is minimal according to the lexicographic order defined
by~$<$: namely, $\sigma$ is constructed by picking, at each step, the smallest
possible vertex according to~$<$ which can be picked (i.e., it has not been
picked yet, but all its ancestors
have): we write the
vertices of~$V$ in the order of~$<_\sigma$ as $v_1 < \cdots < v_{\card{V}}$.
We then
lift the total order~$<_\sigma$ on~$V$ to a total order relation on chain
partitions:
we write each chain partition as the word $\chi(v_1) \cdots \chi(v_{\card{V}})$, 
and $<_\sigma$ defines an order on the chain partitions given by the
lexicographic order on words of~$\{1, \ldots, k\}^{\card{V}}$.
Now, we can talk
about the chain partition $\chi_0$ which is \emph{minimal} according to this
total order relation $<_\sigma$ on chain partitions. We will explain how this minimal chain partition
can be computed implicitly in logspace. Again, the reason why we are concerned about minimality is
simply to ensure that, when using the implicitly-computed chain partition
within our logspace algorithm for $\pCTS{K}$, then the chain partition that we
follow is well-defined, i.e., it is the same over all calls to the implicit
nondeterministic logspace chain partition oracle. The specific definition of
minimality that we use does not matter much.

We now describe the specific implicit representation 
that we want for the minimal chain partition $\chi_0$.
We want to show that we can evaluate efficiently two functions: one function
$\nxt$, which takes as input a vertex $v\in V$ and returns the \emph{next vertex} of
the chain of~$v$ in~$\chi_0$, and one function $\fst$, which
takes as input a chain number $1 \leq i \leq k$ and returns the \emph{first
vertex} of~$i$ in~$\chi_0$.
Formally, $\nxt(v)$ for $v \in V$ is defined as follows:
letting $c \colonequals \chi_0(v) \in \{1,
\ldots, k\}$ be the chain to which $v$ belongs in~$\chi_0$, return 
the vertex $\nxt(v)\in V$ which is the successor of~$v$ on chain~$c$ in~$\chi_0$, if
any, or $\top$ if $v$ is the last vertex of chain~$c$. More formally,
$\nxt(v)$ is the
vertex of~$V$ such that $\chi_0(\nxt(v)) = c$,
the edge $v \rightarrow \nxt(v)$ is in~$E'$,
and
there is no $z \in V$ such that $\chi_0(z) = c$ and the edges
$v \rightarrow z$ and $z \rightarrow \nxt(v)$ are in~$E'$.
As for the function $\fst$, for any
chain number $1 \leq c \leq k$, we let $\fst(c)$ be the first
element of the chain~$c$ in~$\chi_0$, that is, 
we have $\chi_0(\fst(c)) = c$ and there is no $z \in V$ such that $\chi_0(z) =
c$ and the edge $z \rightarrow \fst(c)$ is in~$E'$.
We can now claim:
\begin{lemma}
  \label{lem:nextfun}
  For any input to the functions $\nxt$ and $\fst$, we can evaluate them
  in space $O(k \log n)$.
\end{lemma}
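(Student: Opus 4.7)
The plan is to compute the minimal chain partition~$\chi_0$ implicitly, on the fly, by simulating its greedy construction in the lex-min topological order~$\sigma$. The main loop would process the vertices $v_1, \ldots, v_{\card V}$ in $\sigma$-order (itself NL-computable: to recompute~$v_j$ one repeatedly picks the smallest unpicked vertex all of whose ancestors have already been picked), while maintaining a \emph{state} $S = (s_1, \ldots, s_k) \in (V \cup \{\bot\})^k$ in which $s_c$ is the last vertex so far assigned to chain~$c$, or $\bot$ if chain~$c$ is still empty. This state occupies $O(k \log n)$ bits.

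When processing the current vertex~$u$, I would assign to it the smallest chain number $c \in \{1, \ldots, k\}$ such that (a) $s_c = \bot$ or there is a directed path from~$s_c$ to~$u$ in~$G$, and (b) the resulting updated state together with the set of not-yet-processed vertices still admits \emph{some} completion to a width-$k$ chain partition. Condition~(a) is a standard NL reachability test. To see that this greedy rule produces~$\chi_0$, I would argue by induction on the step: any deviation from the smallest extensible label would either contradict the extensibility of~$\chi_0$ (by choosing too large a label) or the minimality of~$\chi_0$ (by choosing too small but still extensible a label).

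The main obstacle will be implementing the extensibility check~(b) in $O(k \log n)$ space. My plan is to handle it by a nested nondeterministic subroutine that continues the simulation from the candidate updated state: iterate $j = i+1, \ldots, \card V$, at each step recomputing~$v_j$ via the $\sigma$-enumeration subroutine, guessing a chain label for~$v_j$, verifying by reachability that the guess is consistent with the current state, and updating the state. The subroutine accepts iff every remaining vertex is successfully assigned. Since it stores only its own $k$-tuple state, a step counter, and the current vertex, it adds only $O(k \log n)$ space, and closure of nondeterministic logspace under composition keeps the overall evaluation within $O(k \log n)$.

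Finally, for $\fst(c)$ I would run the simulation from~$v_1$ and return the first vertex assigned chain label~$c$, or $\top$ if none. For $\nxt(v)$ I would run the simulation up to~$v$, store $c \colonequals \chi_0(v)$ in $O(\log k)$ extra bits, then continue past~$v$ and return the first subsequent vertex assigned to chain~$c$, or $\top$ if none. Since the vertices assigned to chain~$c$ appear in $\sigma$-order in the same order as they appear along the chain, this yields the correct successor. Both routines add only $O(\log n)$ auxiliary bits to the main state and so fit within the claimed $O(k \log n)$ bound.
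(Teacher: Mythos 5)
Your proposal is correct and takes essentially the same approach as the paper: it deterministically simulates the lexicographically minimal chain partition along the lex-min topological order, maintaining the $k$-tuple of chain ends and assigning each vertex the smallest chain label that passes a ``continuability'' test, itself implemented by a nested nondeterministic simulation that guesses chain labels for the remaining vertices (the paper isolates this as a separate lemma), and then reads off $\fst$ and $\nxt$ from the simulation. The one step you leave implicit---reliably detecting that a candidate chain is \emph{not} continuable so the outer loop can move on to the next label---is made explicit in the paper via the Immerman--Szelepcs\'enyi theorem (closure of nondeterministic space under complement), which is what your appeal to composition of nondeterministic logspace subroutines amounts to.
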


We will do two things in the sequel: prove this lemma, and use it to prove
Proposition~\ref{prp:dynamic}. To do this, we need to define the notion of a
\emph{configuration}, which will be useful in our algorithms on chain partitions. A \emph{configuration} is a $k$-tuple $X =
(v_1, \ldots, v_k)$, where each $v_i$ is either an element of~$V$ or~$\bot$.
Intuitively, $X$ describes the lowest element of each chain, with $\bot$
indicating that no element has been assigned to this chain so far; when we
consider a configuration $X$ in an algorithm, we assume that the
\emph{ancestors} of~$X$, meaning all vertices $w$ such that for some~$v_i$ the edge $w
\rightarrow v_i$ is in~$E'$,
have already been assigned to a chain in some fashion. We say that a
configuration $X$ is \emph{continuable} if there exists a chain partition
$\chi$ which is consistent with~$X$, meaning that $\chi(v_i) = i$ for all $1
\leq i \leq k$ such that $v_i \neq \bot$. One useful lemma will be the
following:

\begin{lemma}
  \label{lem:continuation}
  There is an algorithm to decide, given a configuration $X$, whether $X$ is
  continuable, in space $O(k \log n)$.
\end{lemma}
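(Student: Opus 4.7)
The plan is to give a nondeterministic algorithm running in space $O(k \log n)$ that accepts iff $X$ is continuable; since nondeterministic space classes at least $\log n$ are closed under complement by the Immerman--Szelepcs\'enyi theorem~\cite{immerman1988nondeterministic,szelepcsenyi1988method}, this yields a decision procedure in the same asymptotic space.

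The algorithm processes the vertices of $V$ one at a time in a canonical topological order of $G$. Such an order can be streamed in $O(\log n)$ space by sorting the vertices lexicographically by the pair (longest-path depth from a source, vertex identifier): the depth is computable in NL, and the $j$-th vertex of the resulting total order can then be extracted by counting in $O(\log n)$ space. While streaming, the algorithm maintains a $k$-tuple $Y = (y_1, \ldots, y_k)$, initialized to $(\bot, \ldots, \bot)$, where $y_i$ records the last vertex assigned to chain $i$ so far. Upon processing a vertex $v$, the algorithm nondeterministically guesses a chain index $i \in \{1, \ldots, k\}$; if $v$ equals some $v_j$ with $v_j \neq \bot$ in the input~$X$, it forces $i = j$; it then verifies that $y_i = \bot$ or $(y_i, v) \in E'$ (checkable in NL via reachability) and sets $y_i \gets v$. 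The algorithm accepts iff all $n$ steps succeed.

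Correctness relies on the observation that, because iteration is in topological order, the vertices assigned to each chain are enumerated in topological order, so it suffices to check that consecutive ones are related by $E'$: transitivity of ancestry in $G$ then makes the whole chain totally ordered in $G$. Consistency with $X$ is guaranteed by the enforcement step. Conversely, any chain partition $\chi$ consistent with $X$ yields an accepting run by guessing $\chi(v)$ at each step. The total space used is $O(k \log n)$ for the configuration $Y$ plus $O(\log n)$ for the current vertex, the step counter, and the topological-iteration subroutine, matching the claimed bound. The main subtlety to work out carefully is the logspace streaming of the topological order, for which the canonical level-based order is the key device.
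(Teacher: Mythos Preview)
Your proposal is correct and follows essentially the same approach as the paper: maintain a $k$-tuple recording the current last element of each chain, sweep the vertices in a topological order computable in small space, nondeterministically assign each vertex to a chain (forcing the assignment at the $v_j$'s of~$X$), and check comparability with the previous element on that chain via an NL reachability test. The only cosmetic difference is that the paper iterates using the lexicographically-first topological sort~$<_\sigma$ driven by the evolving configuration, whereas you use a fixed level-then-identifier order; both choices are streamable in $O(\log n)$ space and the rest of the argument is identical.
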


We will first show how to use this lemma to prove Lemma~\ref{lem:nextfun}. We
will then explain how to prove Lemma~\ref{lem:continuation}. Last, we will prove
Proposition~\ref{prp:dynamic} from Lemma~\ref{lem:nextfun}.

We start by proving Lemma~\ref{lem:nextfun}.
The intuition is that we can use the  continuation check of
Lemma~\ref{lem:continuation} as a way to compute implicitly the minimal
chain partition, by considering all vertices in the minimal topological sort
$<_\sigma$, and assigning each vertex to the smallest possible
chain such that the resulting configuration is continuable. Formally, we show:

\begin{proof}[Proof of Lemma~\ref{lem:nextfun}]
  We maintain a configuration $X = (v_1, \ldots, v_k)$, initially $(\bot,
  \ldots, \bot)$, and extend it \emph{deterministically} at each step using
  the (nondeterministic) oracle for continuation checking described in
  Lemma~\ref{lem:continuation}. Specifically, at each step of the algorithm,
  we call $S$ the set of vertices which are ancestors of elements in~$X$, and
  we consider the vertex $v$ which is as small as possible according
  to~$<_\sigma$
  and which is not in $S$ but all its strict ancestors are in~$S$: we can
  find this vertex in NL. Now, for each $1 \leq i \leq k$ such that the edge
  $v_i \rightarrow v$ is in~$E'$ or $v_i = \bot$, we check whether the configuration $X_i$ obtained by
  replacing $v_i$ by~$v$ is continuable. We pick the smallest $i$ such that it
  is, and continue the algorithm with $X_i$: specifically, we guess a suitable
  $i$, and guess in co-NL that there is no $i' < i$ which is suitable: this is
  still in NL overall, thanks to the Immerman-Szelepcs\'enyi theorem
  \cite{immerman1988nondeterministic,szelepcsenyi1988method}. At the end of the process, we
  have memorized the successor of the vertex of interest on its chain (i.e.,
  the input to~$\nxt$), or the first vertex of the chain of interest
  (i.e., the input to~$\fst$), and we return this.

  We will soon explain why the algorithm does not get stuck, in the sense
  that, for each vertex~$v$ that we consider, there is a choice of~$i$ for
  which the conditions are respected. However, notice first that, if the
  algorithm does not get stuck, then the algorithm considers all vertices of~$V$
  exactly once, following the order $<_\sigma$ of the minimal topological sort. Indeed, at each step, the set $S$ contains all
  vertices that have been seen so far: the only thing to notice is that,
  whenever we remove a vertex $z$ from the configuration, we replace it by a
  vertex $z'$ such that all of its ancestors are in~$S$ and $z$ is an ancestor
  of~$z'$, so that the new value of $S$ becomes $S \cup \{z\}$. This ensures
  that we are indeed picking at each step the next vertex that $<_\sigma$ has
  picked.

  We now explain why the algorithm does not get stuck, which we show by
  induction. Initially, the configuration is $(\bot, \ldots, \bot)$, and this
  configuration is continuable, as we know by Dilworth's theorem
  (Theorem~\ref{thm:dilworth}). Now, at each step of the algorithm, the
  current configuration $X$ is continuable by induction
  hypothesis, because it was chosen to be continuable at the previous step of the algorithm.
  Now, as $X = (v_1, \ldots v_k)$ is continuable, letting $\chi$ be a
  witnessing chain partition, letting $v$ be the next vertex that we consider,
  we know that, if $v_{\chi(v)} = \bot$, then we can take $i = \chi(v)$. If
  $v_{\chi(v)} \neq \bot$, then $v_{\chi(v)}$ must be an ancestor of~$v$ in
  chain~$i$, and by the condition on the ancestors of~$v$, we know that $v$
  must be the first descendant of $v_{\chi(v)}$ on the chain, justifying that
  the edge $v_{\chi(v)} \rightarrow v$ must exist in~$E'$. Hence, $\chi$ witnesses that the
  algorithm does not get stuck.

  Last, we argue that the values computed by the algorithm are correct. To do
  so, we show by induction that all choices performed by the algorithm
  actually follow $\chi_0$, in the sense that, at each step of the algorithm,
  the current configuration is consistent with $\chi_0$, and, for each vertex
  $v$ that we consider, we take $i \colonequals \chi_0(v)$. We do this by
  mutual induction on these two claims.
  The base case is trivial because $(\bot, \ldots, \bot)$ is
  of course consistent by~$\chi_0$. Now, assuming consistency of the
  configuration, as $\chi_0$ is defined to be minimal following~$<_\sigma$, by
  minimality of the vertex~$v$ picked by both $<_\sigma$ and the algorithm,
  we know that $\chi_0(v)$ is the minimal value
  such that the resulting configuration is continuable. Indeed, if it were not,
  then by taking a smaller continuable value, and taking any
  witnessing continuation afterwards, we would obtain a chain partition which
  would be smaller in the lexicographic order, contradicting the minimality
  of~$\chi_0$. So we have shown that our algorithm actually computes $\nxt$ and
  $\fst$ following $\chi_0$, proving the result.
\end{proof}

We now come back to the proof of Lemma~\ref{lem:continuation}:

\begin{proof}[Proof of Lemma~\ref{lem:continuation}]
  The proof follows similar ideas as in Lemma~\ref{lem:nextfun}: we have a
  current configuration, we consider the vertices following a topological
  order, and we try to assign them to a chain, updating the configuration. The
  only difference is that, instead of assigning the minimal chain number
  following a continuation check, we simply nondeterministically guess a chain
  to which we assign them. When the nondeterministic guesses succeed, we can
  show exactly as in Lemma~\ref{lem:nextfun} (but without worrying about
  minimality) that these guesses witness the
  existence of a chain partition which is consistent with the input
  configuration $X$, so that $X$ is indeed continuable; and conversely,
  whenever such a chain partition exist, these is a sequence of
  nondeterministic guesses which make the algorithm succeed.
\end{proof}

Thanks to Lemma~\ref{lem:nextfun}, we now know that we can implicitly compute
the minimal chain partition within the prescribed time bounds. We are now
ready to prove Proposition~\ref{prp:dynamic}:

\begin{proof}[Proof of Proposition~\ref{prp:dynamic}]
  We fix an automaton $\calA$ for the regular language $K$: remember that, as $K$ is
  fixed, we can compute $\calA$ in constant time, and the size of its state set
  $Q$ and transition relation $\delta \subseteq Q \times A \times Q$ is constant.

  Our state at any stage of the algorithm will consist of a
  \emph{configuration}. Remember that this is a $k$-tuple $X = (v_1, \ldots, v_k)$ such that each $v_i$ is either $\bot$
  or an element of~$V$, which intuitively codes the lowest element for each
  chain, or $\bot$ if no element of the chain has been seen so far: initially
  the configuration is $(\bot, \ldots, \bot)$. The state also contains one state
  $q \in Q$ of the automaton, which is initially some initial state, chosen
  nondeterministically.

  At each stage of the algorithm, we nondeterministically guess one
  chain $1 \leq i \leq k$ to extend. We then replace the current configuration
  $X$ with the new configuration $X_i$ defined as follows: if $v_i = \bot$, then
  we replace $v_i$ in~$X_i$ by $v_i' \colonequals \fst(v_i)$; if $v_i \neq \bot$, then we replace
  $v_i$ in~$X_i$ by~$v_i' \colonequals \nxt(v_i)$ if it is different from~$\top$; otherwise we
  cannot choose this value of~$i$. We also cannot choose a value of~$i$ when
  the $v_i'$ that we have defined cannot be enumerated yet, i.e., if it is not
  the case that all strict ancestors of~$v_i'$ are in $X$ or are ancestors
  of vertices in~$X$. Once we have made an appropriate choice for~$i$, we also replace the current state $q$ with
  some element $q'$ such that $(q, \lfun(v_i'), q') \in \delta$, nondeterministically chosen.
  Intuitively, this means that the automaton processes the letter which is the
  label of the new element $v_i'$ which is read along the chain~$i$.

  The algorithm concludes when we can no longer perform a step, meaning that
  $v_i \neq \bot$ and $\nxt(v_i) = \top$ for each $1 \leq i \leq k$. Then, the
  algorithm accepts if the current state $q$ is final.

  It is clear that, whenever the algorithm succeeds, then the sequence of
  guesses witnesses the existence of a topological sort of~$G$, obtained
  following the vertices that are chosen at each step: the definition of the
  steps that we perform ensure that this sequence indeed respects the edge
  relation of~$G$, for similar reasons as in the proof of
  Lemma~\ref{lem:nextfun}. Conversely, whenever there is a witnessing
  topological sort, then we can decompose it along the minimal chain
  partition~$\chi_0$ defined earlier. Specifically, the sequence of vertices
  given by this topological sort can be expressed as a sequence of operations
  where we enumerate the first vertex of a chain, or enumerate the next vertex
  of a chain from the preceding one. The definition of the algorithm ensures
  that these steps can be mimicked by a sequence of nondeterministic guesses (in
  particular, following these guesses, the algorithm does not ``get stuck'' and
  can always pick the right $v_i'$ at each step),
  and likewise the accepting path in the automaton can be mimicked by
  nondeterministic choices of the states in the transition relation. This
  establishes the correctness of the algorithm, and concludes the proof.
\end{proof}

\end{toappendix}

\begin{toappendix}
  \subsection{Proof of Proposition~\ref{prp:aab}: Other Tractable Case}
\end{toappendix}

\subparagraph*{Other tractable case.}
We close the section with another example of a regular language which is
tractable for the CSh-problem for what appears to be a unrelated reason.

\begin{propositionrep}
  \label{prp:aab}
  Let $A \colonequals \{a, b\}$ and $K \colonequals (aa+b)^*$.
  The problem $\pCSh{K}$ is in NL.
\end{propositionrep}

This is in contrast to $(aa+bb)^*$, for which we showed intractability
(Proposition~\ref{prp:aabb}). We do not know the complexity of the CTS-problem
for $(aa+b)^*$, or the
complexity for either problem of languages of the form~$(a^i+b)^*$ for $i>2$.

\begin{proofsketch}
  We show that the existence of a suitable topological sort can be rephrased to
  an NL-testable equivalent condition, namely, there is no string in the input
  instance whose number of odd ``blocks'' of~$a$-labeled elements dominates the
  total number of $a$-labeled elements available in the other strings. If the condition
  fails, then we easily establish that no suitable topological sort can be
  constructed: indeed, eliminating each odd block of~$a$'s in the dominating
  string requires one~$a$ from the other strings.
  If the condition holds, we can simplify the input strings and show that a
  greedy algorithm can find a topological sort by picking pairs of $a$'s in the
  two current heaviest strings.
\end{proofsketch}

\begin{proof}
  We can first check in NL whether the total number of $a$-elements is even; if
  not, clearly there is no suitable topological sort, so in the sequel we assume
  that it is.

  Note that, if any string consists only of~$b$'s, then we can clearly enumerate
  these $b$'s first, and the result is equisatisfiable; so without loss of
  generality we can always remove
  any input string that consists only of $b$'s as soon as they appear, so we
  never consider such strings.
  Now, if there are less than 3 input strings, then we can conclude in NL by
  Proposition~\ref{prp:dynamic}, so we assume that there are at least 3 strings
  in the input instance (which contain some~$a$ by the assumption that we
  just made).

  Given an input instance $I$ to the CSh-problem for~$K$,
  we call a \emph{block} a maximal contiguous
  sub-sequence of $a$-labeled elements in a string, and call it an \emph{even} or
  \emph{odd} block depending on the number of such elements. The
  \emph{$a$-weight} of a string is its total number of $a$-labeled elements, and
  the \emph{$a$-alternation} of a string is its total number of odd $a$-blocks.

  We claim that $I$ does \emph{not} have a topological sort satisfying~$K$ if
  and only if there is a string whose $a$-alternation is greater than the sum of
  the $a$-weights of all other strings. This condition can clearly be checked in
  NL: compute the maximal $a$-alternation of a string, and compute the $a$-weight
  of the other strings and compare. Hence, all that remains is to show this
  condition.

  The easy direction is the backward one. If there is a string $C$ whose
  $a$-alternation is greater than the sum of the $a$-weights of all other
  strings, we know that any topological sort satisfying $K$ must enumerate one
  element of every odd block of~$C$ together with an $a$-element of another
  string of~$C$: indeed, when enumerating two $a$-labeled elements from $C$, they
  must be in the same block because of the $b$-elements between blocks, so this
  cannot change the parity of a block of~$C$. Hence, under our assumption, a
  topological sort would have to enumerate more $a$-elements in the other
  strings
  than their total $a$-weight, which is impossible; this concludes the backward
  direction.

  To show the forward direction, we show the contrapositive: if, for each string
  $C$, the $a$-alternation of~$C$ is no greater than the total $a$-weight of the
  other strings (which we call assumption (*)),
  then there exist a suitable topological sort.

  We first make a simplifying observation. Given an instance $I$, for any choice
  of two contiguous $a$-elements in a string of~$I$, we let $I'$ be the result of
  removing these two elements. If $I'$ has a suitable topological sort, then so
  does~$I$, because we can just mimic the topological sort on~$I$ and enumerate
  the two adjacent $a$-elements when they become available. Hence, to show that
  there is a suitable topological sort, we can always decide to remove any two
  contiguous $a$'s in a block (even at a stage where they are not available).
  We call this a
  \emph{simplification}. Note, however, that we cannot apply this simplification
  blindly, as the converse implication to the above does not hold in general
  (consider $\{ababa, aaa\}$ vs $\{ababa, a\}$).

  We will define a second assumption (**), and show two things: (i.) given any input
  instance satisfying (*) with an even number of~$a$'s and with at least 3
  strings (containing some $a$), we can rewrite it through simplifications (and
  removal of strings containing only~$b$) to an instance satisfying
  (**), and (ii.) that given an instance satisfying (**) and our preliminary
  assumptions, we can build a suitable
  topological sort. Condition (**) says: for each string $C$, the
  \emph{$a$-weight} of
  $C$ is no greater than the total $a$-weight of the other strings. (Notice the
  difference with (*).)

  We first show (ii.): under our preliminary assumptions on~$I$, any instance satisfying (**) has a suitable topological
  sort. We do so by describing a greedy algorithm which enumerates elements in a
  way that achieves a suitable topological sort. Namely:

  \begin{enumerate}
    \item If we can enumerate a $b$-element, then enumerate it.
    \item Otherwise, pick the two strings whose non-enumerated elements have
      largest $a$-weight and enumerate one $a$ from each of these two strings.
  \end{enumerate}

  If this algorithm does not get stuck, then it clearly constructs a topological
  sort satisfying~$K$. Now, the only way for this algorithm to get stuck is if
  there is only one string left, but this would violate (**).
  Hence, it suffices to show that the algorithm preserves assumption (**).
  Clearly step~1 preserves it, so we focus on step~2. By assumption (**) there
  are at least two strings left: if there are exactly two strings left, then
  condition (**) is preserved as the $a$-weight of both strings is decreased. Assume
  now that there are at least three strings left before applying step~2, and let
  $C, C', C''$ be the strings with the largest $a$-weight (in terms of
  unenumerated elements) and let
  $n \geq n' \geq n''$ be their respective $a$-weights. After step~2, the
  $a$-weights are
  $n-1$, $n'-1$, and $n''$. It is clear that, as condition (**) held of~$C$
  and~$C'$ before step~2, then the condition still holds, as the $a$-weight of each of
  these two strings and the total $a$-weight of the other strings has been decremented,
  then condition (**) still holds of these strings. We must show that it holds of
  the other strings, and clearly it suffices to focus on~$C''$, which has the
  largest $a$-weight in terms of non-enumerated elements. There are three cases,
  depending on the relationship of~$n''$ to~$n$.

  \begin{itemize}
    \item If $n'' < n-1$, then as (**) is still satisfied for~$C$ after the step
      and the $a$-weight of $C''$ is still smaller than~$C$ after the step, then (**) is satisfied
      for~$C''$ too.
    \item If $n'' = n-1$, then after performing the step, $C$ and $C''$ have
      same $a$-weight, and it is obvious that if condition (**) holds of a string~$C$
      then it holds of a string with the exact same $a$-weight (as the
      $a$-weight of the two
      strings is the same, and so is the $a$-weight of the other strings).
    \item If $n'' = n$, then we have $n'' = n' = n$. Now, the only problematic
      case would be if, after performing the step, $n''$ were strictly greater
      than the $a$-weight of all other strings, in particular, we would have $n'' >
      (n-1) + (n'-1)$. But substituting in this inequality we get $n > 2n-2$,
      hence $n <2$. Hence, the only bad situation is when all strings have
      $a$-weight at most~1, but then, remembering that the number of~$a$'s was
      initially even and clearly remains even throughout the enumeration, we
      have at least 2 strings left after the step in this case that all have 
      $a$-weight exactly~$1$, so condition (**) is always
      respected.
  \end{itemize}

  Hence, we have shown that, on any input instance satisfying condition (**) in
  addition to our preliminary requirements, the above algorithm succeeds and
  produces a suitable topological sort.

  The only thing left to show is (i.): given an instance satisfying (*) and our
  preliminary requirements, in particular that of having at least 3 strings
  containing an $a$-element, then we can rewrite it using simplifications to an
  instance satisfying (**). To do so, let us observe that, for any string with
  $a$-alternation $n$ and $a$-weight $m$, we can clearly perform simplifications
  to rewrite it to a string of $a$-weight $p$ for any value $n \leq p \leq m$ of
  the same parity as~$m$ (or of $n$, as $m$ and $n$ have same parity). So let us
  simplify the string $C$ with the greatest $a$-alternation to make its $a$-weight
  equal to its $a$-alternation $n$, and let us rewrite all strings in the
  following way: if the string has $a$-weight $\leq n+1$, we do not change it;
  otherwise we simplify it to $n$ or $n+1$ depending on the parity of its
  $a$-weight.
  Let us show that the result of this transformation satisfies assumption (**).
  Consider a string $C'$ and show the condition. If $C' = C$, then $C$ has
  $a$-weight~$n$, and thanks to condition (*) we know that the sum of $a$-weights are
  greater than~$n$, because the only case where we have reduced the $a$-weight
  of another string $C''$ than $C$ was to bring it down to $n$ or $n+1$, in which case
  $C''$ suffices to witness that (**) is satisfied for~$C$. If $C'$ is different from $C$,
  then its greatest possible $a$-weight is $n+1$ by construction, however, we
  know that $C$ achieves $a$-weight $n$, and thanks to the assumption that we
  have at least 3 strings containing $a$'s, we know that there is another string
  containing some~$a$, hence (**) holds for~$C'$. This establishes that (**) now
  holds after the simplifications, which concludes the proof.
\end{proof}

\section{A Coarser Dichotomy Theorem}
\label{sec:dich}
In the two previous sections, we have established some intractability and
tractability results about the constrained topological sort and constrained
shuffle problems for various languages. Remember that our end goal would be to
characterize the tractable and intractable languages, and show a dichotomy
(Conjecture~\ref{con:maincon}). This is difficult, and one reason is that the
class of tractable languages is not ``well-behaved'': while it is closed under
the union operator (Corollary~\ref{cor:closeunion}), it is 
is \emph{not} closed under intersection, complement, and other common operations. This
makes it difficult to study tractable languages using algebraic
language theory~\cite{Pin97a}.

\begin{toappendix}
  \subsection{Proofs of Proposition~\ref{prop:notclose}: Closure Counterexamples}
  \label{apx:closure}

\end{toappendix}

\begin{propositionrep}\label{prop:notclose} We have the following
  counterexamples to closure:
  \begin{itemize}
  \item \textbf{Quotient.}   There exists a word $u \in A^*$
    and a regular language $K$ such that $\pCSh{K}$ is in NL but $\pCSh{u^{-1}K}$
    is NP-hard. \label{prp:cshquotient}%
  \item \textbf{Intersection.}
  There exists two regular languages $K_1$ and $K_2$ such that $\pCTS{K_1}$ and
  $\pCTS{K_2}$ are both in PTIME but $\pCSh{K_1 \cap K_2}$ is NP-hard
  \item \textbf{Complement.}  There exists a regular language $K$ such that $\pCTS{K}$ is in NL,
  but $\pCSh{A^* \setminus K}$ is NP-hard.

  \item \textbf{Inverse of morphism.} There exists a regular language $K$ and
  morphism $\phi$ such that $\pCTS{K}$ is
  in NL but $\pCSh{\phi^{-1}(K)}$ is NP-hard.
  \end{itemize}
\end{propositionrep}

The three last results
of this proposition
also apply to the
constrained topological sort problem, but the first one does not, and in fact
$\CTS$-tractable languages \emph{are} closed under quotients. This observation
implies that there are regular languages~$K$ such that
$\pCSh{K}$ is tractable but $\pCTS{K}$ is NP-hard; one concrete example is $K
\colonequals b^* A^* + a a A^* + (ab)^*$
(see~Appendix~\ref{apx:closure}). We sketch the proof of
Proposition~\ref{prop:notclose}:

\begin{proofsketch}
  For each operation, we use $(ab)^*$ as our NP-hard
  language (by Theorem~\ref{thm:abshuffle}).
  
  For quotient, we take
  $K \colonequals b A^* + aa A^* + (ab)^*$, and $u \colonequals ab$. We have
  $u^{-1} K =
  (ab)^*$,
  but $\pCSh{K}$ is in NL because
  any shuffle instance with more than one string satisfies~$K$.
  
  For intersection, we take $K_1 \colonequals (ab)^*(\epsilon + bA^*)$
  and $K_2 \colonequals (ab)^*(\epsilon + aaA^*)$. We have $K_1 \cap K_2 = (ab)^*$, but
  $\pCSh{K_1}$ and $\pCSh{K_2}$ are in PTIME using an ad-hoc greedy
  algorithm.
  
  For complement,
  we take $K \colonequals bA^* \cup A^* a \cup A^* aa A^* \cup A^* bb A^*$. As~$K$ is a
  union of monomials, we know by Theorem~\ref{thm:monomial} that $\pCTS{K}$ is in
  NL, but we have $A^* \setminus K = (ab)^*$.
  
  For inverse of morphism,
  we take $A \colonequals \{a, b\}$ and $K \colonequals (ab)^* + A^* (a^3+b^3) A^*$.
  We know that $\pCTS{K}$ is in PTIME by Proposition~\ref{prp:trivwidth}.
  Now, defining $\phi:A^* \to A^*$ by $\phi(a) \colonequals aba$
  and $\phi(b) \colonequals bab$, we have $\phi^{-1}(K) = (ab)^*$ because no word
  in the image of~$\phi$ has three identical consecutive symbols.
\end{proofsketch}

\begin{toappendix}
  
  First, we show that tractable languages for $\CSh$ are not closed under left
  quotient.
  Recall that the \emph{left quotient} of a language $K$ by a word $u \in A^*$
  is the language $u^{-1}K \colonequals \{v \in A^* \mid uv \in K\}$; right
  quotients are defined analogously. We only consider left quotients, but of
  course the same result holds for right quotients because both our problems are
  symmetric under the reverse operator:
  \begin{proposition}
  There exists a word $u \in A^*$
  and a regular language $K$ such that $\pCSh{K}$ is tractable but
  $u^{-1}K = (ab)^*$, so that $\pCSh{u^{-1}K}$ is NP-hard by
   Theorem~\ref{thm:abshuffle}.
  \end{proposition}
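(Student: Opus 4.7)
The plan is to exhibit an explicit pair $(u, K)$ witnessing the failure of closure under left quotients. I would take $u \colonequals ab$ and $K \colonequals bA^* \cup aaA^* \cup (ab)^*$ over the alphabet $A = \{a,b\}$. The intuition is that $K$ consists of three disjuncts that are all cheap to hit --- two of them are monomials of the kind handled by Theorem~\ref{thm:monomial}, and the third is the hard language $(ab)^*$ --- but prefixing by $ab$ kills the monomial disjuncts, leaving only $(ab)^*$.

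The first step is to verify $u^{-1}K = (ab)^*$ by direct inspection: for any $v \in A^*$, the word $abv$ starts with $a$, so it is not in $bA^*$; its second letter is~$b$, so it is not in $aaA^*$; therefore $abv \in K$ iff $abv \in (ab)^*$, which holds iff $v \in (ab)^*$. Combined with Theorem~\ref{thm:abshuffle}, this gives that $\pCSh{u^{-1}K}$ is NP-hard.

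The second step is to give an NL algorithm for $\pCSh{K}$. Given an input tuple of strings $U$, count the number of nonempty strings (this is an easy NL test). If this count is $0$, the only interleaving is $\epsilon \in (ab)^* \subseteq K$, so accept. If this count is exactly $1$, then the unique interleaving is the unique nonempty string itself, and checking membership in the fixed regular language $K$ is in NL (indeed in $\AC$). If the count is at least $2$, I claim we can always achieve a word in $K$: if any nonempty string starts with $b$, enumerate one letter of that string first, followed by any topological completion, so that the resulting word lies in $bA^*$; otherwise every nonempty string starts with $a$, so we can enumerate the first letters of two distinct nonempty strings in succession and complete arbitrarily, landing in $aaA^*$. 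Correctness of the reverse direction is trivial as $bA^*, aaA^* \subseteq K$.

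The only substantive point is checking that the three cases above truly cover all instances and that the $K$-membership test in the middle case is in NL; both are routine. There is no real obstacle here --- the difficulty of the overall closure-counterexample proposition lies in finding the right $K$, and once one realizes that disjoining $(ab)^*$ with ``escape hatches'' killed by the prefix $ab$ works, the verification is mechanical.
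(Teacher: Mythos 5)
Your proposal is correct and matches the paper's own proof essentially verbatim: the same witnesses $u = ab$ and $K = bA^* + aaA^* + (ab)^*$, the same verification that $u^{-1}K = (ab)^*$, and the same case analysis for tractability (a string starting with $b$ gives $bA^*$, two or more strings starting with $a$ give $aaA^*$, and a single string is checked directly in NL). The only difference is cosmetic ordering of the cases, so there is nothing further to add.
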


\begin{proof}
  Take $A \colonequals \{a, b\}$ and $K \colonequals b A^* + aa A^* + (ab)^*$.
  Take $u \colonequals ab$. It is clear that $u^{-1} K = (ab)^*$. However,
  $\pCSh{K}$ is tractable by the following reasoning. Consider an input instance
  to $\pCSh{K}$. If there is a string that starts with~$b$, then we can clearly
  always construct a topological sort achieving $bA^*$. Hence, we can assume
  that all strings start with~$a$. If there are two strings or more, by taking
  their first letters, we can clearly always construct a topological sort
  achieving $aaA^*$. Hence, we can assume that there is only one string, and
  we can clearly check in NL whether the only possible topological sort
  achieves~$K$.
\end{proof}

  However, we point out that the tractable languages for the $\CTS$-problem
  \emph{are} closed under quotient:

  \begin{propositionrep}
    \label{prp:ctsquotient}
    For any word $u \in A^*$ and regular language $K$, there is an logspace reduction from
    $\pCTS{u^{-1}K}$ to $\pCTS{K}$.
  \end{propositionrep}
  
Thus, for the language
$K \colonequals b^* A^* + a a A^* + (ab)^*$
used in the proof of Proposition~\ref{prp:cshquotient}, we
know that $\pCTS{K}$ is NP-hard but $\pCSh{K}$ is in NL: hence, $K$ separates
the problems CSh and CTS.

  \begin{proof}[Proof of Proposition~\ref{prp:ctsquotient}]
    Fix $u \in A^*$ and $K$. Given an $A$-DAG $G$, to solve $\pCTS{u^{-1}K}$
    on~$G$, construct the DAG $G'$ obtained by adding a directed path of elements whose
    label is~$u$ and adding edges from each element of the directed path to all
    elements of~$G$. It is
    obvious that there is a topological sort of~$G'$ achieving~$K$ iff there is a
    topological sort of~$G$ achieving $u^{-1}K$, which concludes the proof.
  \end{proof}

Second, we illustrate that tractable languages are not closed under the
intersection operator, for both problems:

\begin{proposition}
  \label{prp:intersection}
  There exists two regular languages $K_1$ and $K_2$ such that $\pCTS{K_1}$ and
  $\pCTS{K_2}$ are both in PTIME, but $K_1 \cap K_2 = (ab)^*$, so that $\pCSh{K_1 \cap
  K_2}$ is NP-hard by Theorem~\ref{thm:abshuffle}.
\end{proposition}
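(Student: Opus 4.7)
The plan is to take the languages from the sketch, $K_1 \colonequals (ab)^*(\epsilon + b A^*)$ and $K_2 \colonequals (ab)^*(\epsilon + aa A^*)$. Together with Theorem~\ref{thm:abshuffle} (NP-hardness of $\pCSh{(ab)^*}$), the proposition will follow once I verify three claims: $K_1 \cap K_2 = (ab)^*$, and $\pCTS{K_1}, \pCTS{K_2}$ are in PTIME.

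For the intersection equality, I argue via the longest $(ab)^*$-prefix of a word $w \notin (ab)^*$, which has some uniquely determined even length $2n$. Membership of $w$ in $K_1$ forces the letter at position $2n+1$ to be $b$ (triggering the $bA^*$ escape), while membership in $K_2$ forces that letter to be $a$: under $K_2$ the first deviation from the $(ab)^*$ pattern occurs at position $2n+2$ via the doubled $a$, so position $2n+1$ must still be $a$. These requirements are incompatible, whence $(K_1 \cap K_2) \setminus (ab)^* = \emptyset$ and the intersection is exactly $(ab)^*$.

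To show $\pCTS{K_1}$ is in PTIME (the argument for $K_2$ being symmetric, with $a$ and $b$ swapped at the relevant automaton states), I decompose $K_1 = (ab)^* \cup \bigcup_{n \geq 0} (ab)^n b A^*$. Since each $(ab)^n b A^*$ is a monomial language, Theorem~\ref{thm:monomial} gives an NL algorithm for $\pCTS{(ab)^n b A^*}$; iterating $n$ from $0$ to $\lfloor |V(G)|/2 \rfloor$ therefore tests in polynomial time whether $G$ admits a topological sort in the escape branch $(ab)^* b A^*$. If some $n$ succeeds, $G$ is a positive instance; otherwise I still need to decide whether $G$ has a topological sort lying in $(ab)^*$ itself.

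The main obstacle is this last step, because $\pCTS{(ab)^*}$ is NP-hard without further assumptions (Theorem~\ref{thm:abshuffle}). The key observation to exploit is that the failure of every escape branch imposes strong rigidity on $G$: at every intermediate stage of a candidate topological sort whose prefix already lies in $(ab)^*$, at most one $b$-labeled source can be available, as otherwise one of them would extend the prefix to $(ab)^n b$ and trigger the escape. Under this rigidity, the enumeration of $(ab)$-pairs is essentially determined, and I would prove correctness of the deterministic greedy procedure that, at each such stage, picks an $a$-source whose removal exposes a $b$-source and then picks that $b$-source. Establishing completeness of this greedy under the ``no escape'' hypothesis is the core technical lemma; combined with the intersection identity and the symmetric argument for $K_2$, it yields the proposition.
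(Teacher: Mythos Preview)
Your approach chooses the same witness languages as the paper but takes a different route: you first test the escape branches $(ab)^n b A^*$ individually via the monomial algorithm, then run a greedy for $(ab)^*$ under the residual ``no escape'' hypothesis. The paper instead runs a single greedy for all of~$K_1$ at once: in state~$\alpha$ (prefix in $(ab)^*$) it wins immediately if a $b$-source is available, and otherwise picks any ``profitable'' $a$-source (one whose removal exposes a~$b$); completeness is proven by an exchange argument showing that the greedy's chosen pair $(v,w)$ can always be moved to the front of any witnessing topological sort~$\sigma$ while keeping it in~$K_1$, using that $w$'s only remaining parent at that stage is~$v$. Your two-phase scheme would need essentially the same exchange argument to justify the greedy under ``no escape'', so the separation does not save work --- and you have left precisely that lemma unproven. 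Also, your rigidity statement is slightly off: under ``no escape'', at a stage with prefix in $(ab)^*$ there are \emph{zero} $b$-labeled sources (a single one already triggers an escape), not ``at most one''; the useful uniqueness is rather that after removing any $a$-source, the exposed $b$-source is unique.

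The more serious gap is the treatment of~$K_2$. The escapes in~$K_1$ (a single~$b$) and in~$K_2$ (two consecutive~$a$'s) are not related by swapping $a$ and~$b$, so ``symmetric'' is not a proof and the rigidity structure one obtains from ``no escape'' for~$K_2$ is genuinely different. The paper handles this cleanly by invoking closure of CTS-tractability under quotients (Proposition~\ref{prp:ctsquotient}): it observes that $a^{-1} K_1 b^{-1}$ coincides with~$K_2$ after swapping the two letters, so the PTIME algorithm for~$K_1$ transfers to~$K_2$. You should either use this quotient reduction or supply a separate greedy analysis tailored to~$K_2$; the hand-wave does not suffice.
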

  
Note that we do not show that $\pCTS{K_1}$ and $\pCTS{K_2}$ are in NL, although we conjecture that this should hold.

\begin{proof}[Proof of Proposition~\ref{prp:intersection}]
  We fix $A \colonequals \{a, b\}$ and
  take $K_1 = (ab)^*(\epsilon + bA^*)$
  and $K_2 = (ab)^*(\epsilon + aaA^*)$.
  It is clear that $K_1 \cap K_2 = (ab)^*$, so we only need to show that
  $\pCTS{K_1}$ and $\pCTS{K_2}$ are tractable.
  Now, observe that $a^{-1}K_1 b^{-1} = (ba)^*(\epsilon + bb A^*)$, which is the
  result of swapping the symbols $a$ and~$b$ in~$K_2$. Hence, if we establish
  that $\pCTS{K_1}$ is in PTIME, then by Proposition~\ref{prp:ctsquotient}, as
  PTIME-membership is clearly preserved by renaming the symbols, we have also shown
  that $\pCTS{K_2}$ is in PTIME. So we focus on~$K_1$.

  We will show a greedy algorithm in PTIME to solve $\pCTS{K_1}$, and explain why it
  succeeds. The algorithm has two states:

  \begin{itemize}
    \item State $\alpha$ (the initial state), where:
      \begin{itemize}
        \item being out of symbols means that we have succeeded, i.e., we have
          constructed a topological sort in~$(ab)^*$;
        \item enumerating an $a$ allows us to move to state $\beta$;
        \item enumerating a $b$ allows us to
      ``win'', i.e., that we can continue the topological sort in any way and
      remain in~$K_1$.
      \end{itemize}
    \item State $\beta$, where:
      \begin{itemize}
        \item being out of symbols means that we have failed, i.e., the word
          that we have formed is of the form $(ab)^* a$ and not in~$K_1$;
        \item enumerating an~$a$ is forbidden;
        \item enumerating a~$b$ allows us to move back to state~$\alpha$.
      \end{itemize}
    \end{itemize}

  We accordingly design the algorithm as follows:

  \begin{itemize}
    \item In state~$\beta$:
      \begin{itemize}
        \item if there is an available~$b$, enumerate any of them and move to
          state~$\alpha$;
        \item otherwise fail.
      \end{itemize}
    \item In state~$\alpha$:
      \begin{itemize}
        \item if there is an available~$b$, enumerate it and succeed;
        \item otherwise, if there is an available~$a$ such that, when
          enumerating this~$a$, there is an available~$b$ (call this a
          \emph{profitable}~$a$), then enumerate any
          one of these~$a$'s and move to state~$\beta$;
        \item otherwise, if there are no symbols left, succeed;
        \item otherwise fail.
      \end{itemize}
  \end{itemize}

  If the algorithm succeeds, then it clearly builds a suitable topological sort,
  hence we have to argue for the other direction: if there is a suitable
  topological sort then the algorithm will find it. To do so, we must justify
  that the choices made by the algorithm are without loss of generality, i.e.,
  any suitable topological sort can be modified to follow the rules of the
  algorithm, so as to witness that the algorithm succeeds.

  Let us thus consider a witnessing topological sort $\sigma$, and consider the first
  point at which $\sigma$ disagrees with the actions of the
  algorithm, and change $\sigma$ to continue like the
  algorithm did and still achieve~$K$: we can then repeat the argument until
  $\sigma$ is exactly what the algorithm does, which allows us to conclude that
  the algorithm succeeds. When the algorithm did the choice that
  disagrees with~$\sigma$, either it was in state~$\alpha$ or in state~$\beta$;
  note that
  if the algorithm had already decided that it had succeeded, then there is
  nothing left to show as indeed the topological sort is suitable no matter how
  it continues.

  If the algorithm was in state~$\beta$, as $\sigma$ is suitable, there must be
  an available~$b$. If there
  is only one available~$b$, then the algorithm and the topological sort cannot
  disagree, so the only thing to consider is the case where the algorithm picks
  one $b$-labeled element~$v$ and $\sigma$ picks another~$v'$. However, in this
  case, we can modify~$\sigma$ to pick~$v'$ and then pick~$v$ (which is
  available), and this ensures that~$\sigma$ succeeds immediately, so it is
  still suitable. So we have modified 
  modify~$\sigma$ to do like the algorithm does (and succeed immediately).

  If the algorithm was in state~$\alpha$, if there is an available~$b$, then we can
  always modify~$\sigma$ to take it and succeed.
  Likewise, if there is no available symbol, then $\sigma$ and the algorithm are
  both finished and both succeed. Hence, the only possible disagreement is
  if~$\sigma$ picks a different~$a$ than what the algorithm chose, of
  if~$\sigma$ picked an unprofitable~$a$ whereas the algorithm failed. However,
  note that, as $\sigma$ is a suitable topological sort, it cannot pick an
  unprofitable~$a$, as it will necessarily be stuck afterwards (only $a$'s, if
  anything, will be available, and we will be in state~$\beta$), so the second
  case is impossible by our assumption that~$\sigma$ is suitable. So the only case to consider is
  the first case, and we will explain how to modify~$\sigma$ to pick immediately
  the profitable~$a$ that the algorithm enumerates (call it $v$), followed by
  the~$b$ that the algorithm enumerates after~$v$ (call it $w$).
  
  To do this, consider the later moment at which $\sigma$ enumerates
  $v$. It is possible that, when $\sigma$ enumerates~$v$, it has already succeeded (meaning, there were
  two contiguous $b$'s after an~$a$ earlier in~$\sigma$), but in this case there is no constraint
  on~$\sigma$ and we can simply move $v$ and $w$ in~$\sigma$ to enumerate them at the
  moment the algorithm does, and~$\sigma$ is still suitable. If $\sigma$ has not
  already succeeded when it enumerates~$v$, then either $\sigma$ enumerates $w$ just after~$v$ (subcase~1), or it
  does not (subcase~2). If it does (subcase~1), then we can modify~$\sigma$ by moving $v$ and $w$ to the
  beginning: $\sigma$ is still a topological sort after this change (indeed we
  can enumerate $v$ and $w$ because the algorithm does it, and for the other
  elements there is nothing to show), and $\sigma$ is still suitable (we have
  added an $ab$-factor at the beginning, and removed an $ab$-factor in what
  follows but this has no effect on the fact that $\sigma$ realizes~$K_1$).
  Now, if $\sigma$ does not enumerate~$w$ immediately after~$v$ (subcase 2), then let $w'$ be the
  element that $\sigma$ enumerates instead; it must be $b$-labeled (by our
  assumption that $\sigma$ has not already succeeded when it enumerates~$v$).
  But we know from what the algorithm does that $\sigma$ can only enumerate $w$
  after having enumerated $v$, and not before it has enumerated~$v$, so $\sigma$
  must enumerate $w$ somewhere after~$w'$. We modify $\sigma$ to enumerate~$w$
  instead of~$w'$ and enumerate $w$ immediately after: this is still a
  topological sort as we just explained, and $\sigma$ is still suitable (no
  matter what happens afterwards)because
  now it enumerates two consecutive $b$'s (namely, $w$ and $w'$) immediately after
  an~$a$ (namely, $v$). We are now back to
  subcase~1, because $\sigma$ now enumerates~$w$ just after~$v$, so we can conclude as in that
  subcase.
  This concludes the correctness proof.

  Note that the algorithm described here is not in NL; we conjecture that
  $\pCTS{K_1}$ is in NL, but we do not know how this can be shown.
\end{proof}

Third, we show that tractable languages are not closed under
complement:

\begin{proposition}
  There exists a regular language $K$ such that $\pCTS{K}$ is in NL,
  but $A^* \setminus K = (ab)^*$, so that $\pCSh{A^* \setminus K}$ is NP-hard by
  Theorem~\ref{thm:abshuffle}.
\end{proposition}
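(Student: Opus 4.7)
The plan is to exhibit an explicit $K$ whose complement is precisely $(ab)^*$, and such that $K$ is a finite union of monomials so that our tractability result for $\pCTS{}$ on monomials applies. On the alphabet $A = \{a,b\}$, a word lies in $(ab)^*$ if and only if it is empty, or it starts with $a$, ends with $b$, and avoids the factors $aa$ and $bb$. Negating these conditions suggests taking
\[
  K \;\colonequals\; bA^* \;\cup\; A^*a \;\cup\; A^*aaA^* \;\cup\; A^*bbA^*.
\]
Each of the four components is of the shape $A_1^*a_1 A_2^* a_2 \cdots A_n^* a_n A_{n+1}^*$ with $A_i \in \{\emptyset, A\}$, so each is a monomial in the sense of Section~\ref{sec:sigma2}.

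First I would verify the set-theoretic identity $A^* \setminus K = (ab)^*$. The inclusion $(ab)^* \subseteq A^* \setminus K$ is immediate since a word of $(ab)^*$ is either empty or starts with $a$, ends with $b$, and has no $aa$ or $bb$ factor. For the converse, take $w \notin K$: then $w$ does not start with $b$ and does not end with $a$, so $w$ is either empty or of the form $a\cdots b$; moreover $w$ contains neither $aa$ nor $bb$. A straightforward induction on $|w|$ then forces $w \in (ab)^*$.

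Next I would invoke Theorem~\ref{thm:monomial} to conclude that $\pCTS{K'}$ is in NL for each of the four monomials $K'$ appearing in the union defining $K$, and then apply Corollary~\ref{cor:closeunion} to finitely union these NL algorithms, obtaining $\pCTS{K} \in \mathrm{NL}$. Finally, since $A^* \setminus K = (ab)^*$, Theorem~\ref{thm:abshuffle} directly yields that $\pCSh{A^* \setminus K}$ is NP-hard, which is exactly the conclusion of the proposition.

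There is no real obstacle here: the whole argument reduces to picking the right monomials and a short verification of the complement identity. The only mild care needed is to make sure all four components are indeed monomials on $A$ (in particular $bA^*$ and $A^*a$, which have a single pivot letter and a single free subalphabet factor on one side), so that Theorem~\ref{thm:monomial} is directly applicable without any additional argument.
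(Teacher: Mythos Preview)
Your proposal is correct and matches the paper's own proof essentially verbatim: the paper also takes $K = bA^* \cup A^*a \cup A^*aaA^* \cup A^*bbA^*$, notes it is a union of monomials so that Theorem~\ref{thm:monomial} (together with closure under union) gives $\pCTS{K}\in\mathrm{NL}$, and observes $A^*\setminus K = (ab)^*$. Your explicit verification of the complement identity and your invocation of Corollary~\ref{cor:closeunion} are slightly more detailed than the paper's one-line proof, but the argument is the same.
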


\begin{proof}
  Take $K = bA^* \cup A^* a \cup A^* aa A^* \cup A^* bb A^*$. As~$K$ is a
  union of monomials, we know by Theorem~\ref{thm:monomial} that $\pCTS{K}$ is in
  NL, however by construction we have $A^* \setminus K = (ab)^*$.
\end{proof}

Fourth, we show that tractable languages are not closed under \emph{inverse
morphisms}. Recall that a \emph{morphism} from alphabet~$B$ to alphabet~$A$
is a function $\phi: B^*\to A^*$ such that $\phi(uv) = \phi(u) \phi(v)$ for all $u,
v \in B^*$; note that a morphism is completely defined by the image of each
letter of~$B$.
The \emph{inverse image} of a language $K$ over alphabet~$A$ by a morphism $\phi$ is the
language over alphabet~$B$ defined by $\phi^{-1}(K) \colonequals \{v \in B^* \mid
\phi(v) \in K\}$. We show:

\begin{proposition}
  There exists a regular language $K$ and morphism $\phi$ such that $\pCTS{K}$ is
  in NL, but $\phi^{-1}(K) = (ab)^*$, so that $\pCSh{\phi^{-1}(K)}$ is NP-hard by
  Theorem~\ref{thm:abshuffle}.
\end{proposition}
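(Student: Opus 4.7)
The plan is to exhibit explicit $K$ and $\phi$, following the same pattern as the other closure counterexamples: cook up a language $K$ that is tractable by our earlier results but such that $\phi^{-1}(K) = (ab)^*$, so that NP-hardness of $\pCSh{\phi^{-1}(K)}$ follows from Theorem~\ref{thm:abshuffle}. Concretely, I would take $A \colonequals \{a,b\}$, $K \colonequals (ab)^* + A^*(a^3 + b^3)A^*$, and the morphism $\phi: A^* \to A^*$ defined by $\phi(a) \colonequals aba$ and $\phi(b) \colonequals bab$. Tractability $\pCTS{K} \in$ NL is immediate from Proposition~\ref{prp:trivwidth} applied with $K' = (ab)^*$ and $i = 3$, so the whole proof reduces to verifying $\phi^{-1}(K) = (ab)^*$.

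My key lemma would be: for every $w \in A^*$, the word $\phi(w)$ contains neither $aaa$ nor $bbb$. A short boundary analysis suffices: each image $\phi(c) \in \{aba, bab\}$ has no three equal consecutive letters internally, and at a concatenation $\phi(c_i)\phi(c_{i+1})$, whenever the last letter of $\phi(c_i)$ agrees with the first letter of $\phi(c_{i+1})$ (i.e.\ $\phi(a)\phi(a) = aba\cdot aba$ or $\phi(b)\phi(b) = bab\cdot bab$), the second-to-last letter of $\phi(c_i)$ and the second letter of $\phi(c_{i+1})$ are the opposite letter, so the matching block has length exactly $2$. Hence $\phi(w) \notin A^*(a^3+b^3)A^*$ for every $w$, and therefore $\phi(w) \in K$ iff $\phi(w) \in (ab)^*$.

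It then remains to show $\phi(w) \in (ab)^*$ iff $w \in (ab)^*$. Forward: $\phi(ab) = aba\cdot bab = ababab$, so by the morphism property $\phi((ab)^n) = (ab)^{3n} \in (ab)^*$. Conversely, if $w \notin (ab)^*$, then either $w$ does not start with $a$, in which case $\phi(w)$ starts with $b$; or $w$ does not end with $b$, in which case $\phi(w)$ ends with $a$; or $w$ contains a factor $aa$ or $bb$, in which case by the same boundary analysis as above, $\phi(w)$ contains a factor $aa$ or $bb$. In each case $\phi(w) \notin (ab)^*$. This yields $\phi^{-1}(K) = (ab)^*$, and the NP-hardness of $\pCSh{\phi^{-1}(K)}$ then follows from Theorem~\ref{thm:abshuffle}.

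The main (only) obstacle is the boundary case analysis in the key lemma and in the converse direction above; it is short but must be done explicitly, as one needs to rule out not only a run of three identical letters but also any appearance of a forbidden factor at concatenation seams. Once that bookkeeping is in place, the rest is a direct consequence of the morphism property and of the previously established Propositions~\ref{prp:trivwidth} and~\ref{thm:abshuffle}.
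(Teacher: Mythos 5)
Your proposal is correct and is essentially identical to the paper's own proof: the paper uses exactly the same $K = (ab)^* + A^*(a^3+b^3)A^*$ and the same morphism $\phi(a)=aba$, $\phi(b)=bab$, citing Proposition~\ref{prp:trivwidth} for tractability and arguing that no word in the image of~$\phi$ has three identical consecutive letters, so $\phi^{-1}(K)=\phi^{-1}((ab)^*)=(ab)^*$. Your write-up merely spells out the boundary-case bookkeeping that the paper leaves implicit, and that bookkeeping is correct.
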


\begin{proof}
  We take $A \colonequals \{a, b\}$ and $K \colonequals (ab)^* + A^* (a^3 + b^3) A^*$, as
  in Proposition~\ref{prp:trivwidth}. We know by this proposition that $\pCTS{K}$ is
  in NL. However, let $\phi:A^* \to A^*$ be defined by $\phi(a) \colonequals aba$
  and $\phi(b) \colonequals bab$. We then have $\phi^{-1}(K) = (ab)^*$, for
  which the CSh-problem is NP-hard by Theorem~\ref{thm:abshuffle}. Indeed,
  no word in the image of~$\phi$ has three consecutive $a$'s or three
  consecutive $b$'s, so $\phi^{-1}(K) =
  \phi^{-1}((ab)^*)$, and this is clearly equal to~$(ab)^*$.
\end{proof}

\end{toappendix}

Proposition~\ref{prop:notclose} suggests that tractable languages would be
easier to study algebraically if we ensured that they were closed under all
these operations, i.e., if they formed a variety~\cite{Pin97a}. In this section,
we enforce this by moving to an alternative phrasing of the CTS and CSh
problems. This allows us to leverage algebraic techniques and show a dichotomy
theorem in this alternative phrasing, under an additional \emph{counter-free}
assumption. We first present the alternative phrasing, and then present the
additional assumption and our dichotomy result.

\subparagraph*{Alternative phrasing.}
The first change in our alternative phrasing is that the input DAG~$G$ will now be an
\emph{$A^*$-DAG}, i.e., a DAG labeled with words of~$A^*$ rather than letters
of~$A$. As before, a
topological sort $\sigma$ of~$G$ \emph{achieves} a word $\lfun(\sigma) \in A^*$
obtained by concatenating the $\lfun$-images of the vertices of~$G$ in the order
of~$\sigma$: but vertex labels are now ``atomic'' words
whose letters cannot be interleaved with anything else.
The \emph{multi-letter} CTS and CSh problems are the variants defined with
$A^*$-DAGs; intuitively, this ensures that tractable languages are  closed
under inverse morphisms.

The second change is that we will not fix one single target language, but a
\emph{semiautomaton}~\cite{holcombe_1982}, i.e.,
an automaton where initial and final states are not
specified. Formally, a semiautomaton is a tuple $(Q,A,\delta)$ where $Q$ is the set of states,
$A$ is the alphabet, and $\delta:Q\times A \to Q$
is the transition function; we extend $\delta$ to words as usual by setting $\delta(q, \epsilon)
\colonequals q$ and $\delta(q, u_1 \cdots u_{n+1})
\colonequals \delta(\delta(q, u_1), u_2 \cdots u_{n+1})$.
We will fix the target semiautomaton, and the initial and final states will be
given in the input instance (in addition to the DAG). This 
enforces closure
under quotients (by
choosing the initial and final states) and complement (by toggling the final
states). Further, to impose closure under intersection, the input
instance will specify a \emph{set} of pairs of initial-final states, with a
logical AND over them. The question is to determine whether the input DAG
achieves a word accepted by \emph{all} the corresponding automata; and this enforces
closure under intersection.

We can now summarize the formal definition of our problem variants.
The \emph{multi-letter $\CTS$-problem for a fixed semiautomaton}
$S=(Q,A,\delta)$ takes as input an $A^*$-DAG and
a set $\{(i_1,F_1),\ldots,(i_k,F_k)\}$ of initial-final state pairs, where $i_j\in Q$ and $F_j
\subseteq Q$ for all $1 \leq j \leq k$.
The input is accepted if there is a topological sort $\sigma$ of~$G$ such that,
for all $1 \leq j \leq k$,
the word~$\lfun(\sigma)$ is accepted by the automaton $(Q,A,\delta,i_j,F_j)$,
i.e., $\delta(i_j, \lfun(\sigma)) \in F_j$. The
\emph{multi-letter $\CSh$-problem for a fixed semiautomaton} is defined in the same way, imposing that the input
$A^*$-DAG is a union of directed path graphs.

\myparagraph{Dichotomy result}
Our dichotomy will apply to the multi-letter CTS and CSh problem for
semiautomata. However, we will need to make an additional assumption, namely,
that the semiautomaton is
\emph{counter-free}. This assumption means that our dichotomy will only apply
to a well-known subset of regular languages, namely, the \emph{star-free
languages}, that are better understood algebraically; it excludes in particular
the tricky case of \emph{group languages} that we will
study separately in Section~\ref{sec:group}.
Formally, a semiautomaton is \emph{counter-free} if, 
for every state $q$ and word $u \in A^*$, if $\delta(q,u^n)= q$ for some $n>1$, then we have
$\delta(q, u) = q$.
Under the counter-free assumption, we can prove the following dichotomy, using our hardness
and tractability results in Sections~\ref{sec:hardness} and~\ref{sec:sigma2}:

\begin{toappendix}
  \subsection{Proof of Theorem~\ref{thm:dich}: Coarser Dichotomy Theorem}
  Recall from the main text the definition of the \emph{transition monoid} $T(S)$
of a semiautomaton $S$.
We call \emph{transition morphism} the
morphism $\eta:A^*\to T(S)$ defined by $\eta(u)= f_u$ for all $u \in A^*$: by
construction, this morphism is surjective. Recall that our counter-free
assumption on~$S$ is equivalent to requiring that $T(S)$ is an \emph{aperiodic
monoid}: formally, it 
for all $x \in T(S)$, we have
$x^\omega=x^{\omega+1}$, where $\omega\in\NN$
is the \emph{idempotent power} of~$M$, i.e.,
the least integer $\omega\in\NN$ such that for every element $x$ in $M$, we have
$x^\omega=x^{2\omega}$.

Our characterization of tractable semiautomata in Theorem~\ref{thm:dich} is
based on the class \DA of monoids~\cite{tesson2002diamonds}, which is a subset of \A.
A monoid $M$ is in \DA iff it satisfies the equation
$(xy)^\omega x (xy)^\omega = (xy)^\omega$ for all $x, y \in M$, where $\omega$ again refers to the
idempotent power of~$M$; this implies in particular that $M$ is aperiodic.
Our dichotomy result relies on the following
characterization of \DA:

\begin{theorem}[(\cite{tesson2002diamonds}, Theorem~5 and Theorem~11)]
  \label{thm:DA}
  Let $K$ be a regular language of $A^*$. The following conditions are
  equivalent:
  \begin{itemize}
    \item $K$ is an union of \emph{unambiguous} monomials, i.e., of monomials
      $K = A_1^* a_1 \cdots A_n^* a_n A_{n+1}^*$ such that every word $u \in K$
      has a unique decomposition $u_1 a_1 \cdots u_n a_n u_{n+1}$ where $u_i \in
      A_i^*$ for all $1 \leq i\leq n+1$.
  \item There exists a monoid $M$ in \DA and a morphism $\phi:A^*\to M$ such that
    $K$ is \emph{recognized} by~$M$, meaning that 
   $K = \phi^{-1}(P)$ for some subset $P \subseteq M$.
  \end{itemize}
\end{theorem}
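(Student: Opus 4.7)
The plan is to prove the two implications of the equivalence separately, with the \DA identity $(xy)^\omega x (xy)^\omega = (xy)^\omega$ as the main algebraic lever in both directions.

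For the forward direction (unambiguous monomials $\Rightarrow$ recognition by some monoid in \DA), I would first invoke the fact that the class of languages recognized by monoids in \DA forms a variety of languages, and hence is in particular closed under finite unions (this follows from Eilenberg's variety theorem together with the closure of \DA under submonoids, quotients, and finite direct products). It then suffices to handle a single unambiguous monomial $L = A_1^* a_1 \cdots A_n^* a_n A_{n+1}^*$. I would construct the natural left-to-right position-tracking deterministic automaton with $n{+}2$ ``live'' states plus a sink, where being in state $i$ means the prefix read so far lies in $A_1^* a_1 \cdots a_{i-1} A_i^*$. Unambiguity of the monomial is precisely the property that makes this automaton deterministic and correct; a direct verification of the \DA identity on its (finite) transition monoid then completes this direction.

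For the backward direction (recognition by $M \in \DA$ $\Rightarrow$ union of unambiguous monomials), given $\phi: A^* \to M$ with $K = \phi^{-1}(P)$, it suffices to show each fiber $\phi^{-1}(m)$ is a finite union of unambiguous monomials. I would proceed by induction on $|A|$, the base case $|A|=0$ being immediate. For the inductive step, I would partition each fiber by the content $\alpha(w) \subseteq A$ of its words, then for each fixed content $B$ decompose every $w$ with $\alpha(w) = B$ canonically as $w = u_1 a_1 u_2 a_2 \cdots u_k a_k u_{k+1}$ where each $a_i$ is the \emph{first} occurrence in $w$ of some letter not previously seen among $u_1 a_1 \cdots u_{i-1}$; by construction each $u_i$ lives over a strictly smaller alphabet, so the induction hypothesis gives an unambiguous-monomial description of the $\phi(u_i)$-classes. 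The \DA identity is then invoked to handle the tail $u_{k+1}$, in which the full content $B$ has stabilized: the identity $(xy)^\omega x (xy)^\omega = (xy)^\omega$ implies that $\phi(u_{k+1})$ depends only on $\alpha(u_{k+1}) = B$ rather than on the ordering or multiplicities of its letters, collapsing that tail into a simple $B^*$ factor.

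The main obstacle will be certifying unambiguity after gluing: when one concatenates the unambiguous-monomial pieces produced inductively for each $u_i$ around the chosen pivots $a_i$, the resulting global decomposition $A_1^* a_1 \cdots A_{n+1}^*$ of $w$ must itself be unambiguous. The ``first occurrence of a new letter'' rule is specifically designed to force uniqueness of the pivot positions, but propagating this uniqueness through the nested induction and verifying that the alphabets $A_i$ assigned to each intermediate position are genuinely disjoint from the remaining pivot letters requires careful bookkeeping, and it is precisely there that the fine structure of \DA, rather than mere aperiodicity, becomes indispensable.
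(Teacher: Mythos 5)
First, a remark on the comparison itself: the paper does not prove this statement — it is imported verbatim from Tesson and Thérien \cite{tesson2002diamonds} (their Theorems~5 and~11) and used as a black box — so there is no in-paper proof to match; your proposal is an attempt to reprove a known and genuinely nontrivial theorem of algebraic language theory, and it has to be judged on its own.

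Both directions of your sketch have real gaps. In the forward direction, the ``position-tracking'' automaton is not deterministic, and unambiguity of the monomial does not make it so: unambiguity says that each word of~$K$ has a unique \emph{global} factorization, but a prefix can lie in several of the sets $A_1^* a_1 \cdots a_{i-1} A_i^*$ simultaneously (already for the unambiguous monomial $A^* a$ over $A=\{a,b\}$, the prefix $ba$ lies in both $A^*$ and $A^*a$), so ``being in state $i$'' is not well defined; and the actual content of this direction — that the transition monoid of a correct recognizer of an unambiguous product satisfies $(xy)^\omega x (xy)^\omega = (xy)^\omega$ — is precisely what is hidden behind ``a direct verification''. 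The backward direction contains a more serious error: the claim that for $M \in \DA$ the image $\phi(u_{k+1})$ of the full-content tail depends only on its content is false. The syntactic monoid of $A^*a$ is in \DA (indeed $A^*a$ is itself an unambiguous monomial), yet $\phi(ab) \neq \phi(ba)$ although these two words have the same content; membership of a full-content word in $A^*a$ depends on its last letter, which your factorization by first occurrences of new letters never records, so the tail cannot be collapsed to a plain $B^*$ factor and the induction breaks down. The known proofs of this direction are considerably more delicate: they induct simultaneously on the alphabet and on the monoid (via its ideal/Green structure), use factorizations by first \emph{and} last occurrences, and invoke the \DA identity only to absorb factors whose content is stabilized by a suitable idempotent — not to make images content-determined. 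If you want a self-contained proof, follow the argument of Tesson and Thérien \cite{tesson2002diamonds} rather than this route.
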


We will also rely on a characterization of monoids that are \emph{not} in \DA:

\begin{proposition}[(\cite{tesson2001}, Lemma 10)]
  \label{prop:DAab}
  An aperiodic monoid $M$ is not in \DA iff there exists a morphism
  $\theta:\{a,b\}^*\to M$ and $P\subseteq M$
  such that $\theta^{-1}(P)$ is either $(ab)^*$ or $(ab + b)^*$.
\end{proposition}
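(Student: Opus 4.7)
The plan is to prove both directions separately, with the backward implication being routine and the forward implication carrying the real content. For the \emph{backward} direction, I would first compute the syntactic monoids of $(ab)^*$ and of $(ab+b)^*$ explicitly---both are small aperiodic monoids---and exhibit concrete witnesses that violate the \DA identity $(xy)^\omega x (xy)^\omega = (xy)^\omega$; in each case taking $x$ to be the class of~$a$ and $y$ the class of~$b$ should suffice. Since \DA is closed under division, and since the syntactic monoid of a language divides every monoid recognizing it, any $M$ admitting such a $\theta$ and~$P$ recognizes one of the two bad languages and hence cannot itself lie in \DA.

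For the \emph{forward} direction, the starting point is the very failure of membership in \DA: because $M$ is aperiodic and not in \DA, there must exist $x,y \in M$ with $e\,x\,e \neq e$, where $e \colonequals (xy)^\omega$. I would then define $\theta\colon\{a,b\}^* \to M$ by $\theta(a)\colonequals x$ and $\theta(b)\colonequals y$, and choose $P \subseteq M$ so that $\theta^{-1}(P)$ is forced to coincide with one of the two target languages. A natural candidate for $P$ is the set of elements $\mathcal{J}$-equivalent to~$e$ (or a suitable refinement), so that words whose image ``stays at the top'' of the $\mathcal{J}$-order are exactly those in~$P$. The dichotomy between the two target languages would then be governed by the auxiliary condition $ey = e$: if this holds, extra~$b$'s are absorbed and the preimage contains $(ab+b)^*$; if it fails, no extra~$b$'s may be inserted and the preimage collapses to~$(ab)^*$.

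The main obstacle, as always in this kind of algebraic characterisation, is showing that $\theta^{-1}(P)$ is \emph{exactly} the target language, with no spurious words creeping in. The key algebraic lever is aperiodicity, which forces $\mathcal{H}$-classes to be trivial, combined with the identities $e\,xy = xy\,e = e$ coming from $e = (xy)^\omega$. Using these, one argues that any deviation from the strictly alternating $ab$-pattern---say the appearance of a factor~$aa$, or of~$bb$ in the $(ab)^*$ case---forces the product to drop into a strictly lower $\mathcal{J}$-class and thus to leave~$P$. The detailed case analysis required to close this off is delicate; it is essentially the argument in Tesson's thesis, which I would adapt rather than redevelop, since a from-scratch proof would require reworking a fair amount of structural semigroup theory about \DA and Green's relations.
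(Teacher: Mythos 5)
Your backward direction is fine, and it is essentially the same argument the paper relies on (the syntactic monoids of $(ab)^*$ and $(ab+b)^*$ violate the \DA identity, \DA is closed under division, and the syntactic monoid of a recognized language divides the recognizer). The real issue is the forward direction. The paper proves nothing structural there: it quotes Lemma~10 of \cite{tesson2001} in its original form --- an aperiodic monoid lies outside \DA iff it is divided by $\mathit{BA}_2$ or by $U$, the syntactic monoids of $(ab)^*$ and $(ab+b)^*$ --- and then converts division into recognition via the standard fact that a language is recognized by $M$ iff its syntactic monoid divides $M$ (\cite[Theorem V.1.3]{straubing1994}). No choice of $\theta$, $P$, or Green's-relations analysis is needed; the whole proof is a terminology translation.

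Your sketched construction for the forward direction has a genuine gap. Taking $\theta(a) \colonequals x$, $\theta(b) \colonequals y$ for the raw witnesses of the failed identity and $P$ the $\mathcal{J}$-class of $e = (xy)^\omega$ cannot yield $\theta^{-1}(P) = (ab)^*$ or $(ab+b)^*$: both target languages contain the empty word, so $1 = \theta(\epsilon)$ would have to lie in $P$; but in a finite monoid the $\mathcal{J}$-class of $1$ is the group of units, which is trivial by aperiodicity, so $1 \mathrel{\mathcal{J}} e$ forces $e = 1$, hence $xy = 1$, hence $x$ and $y$ are units, hence $x = y = 1$ and $exe = e$, contradicting the choice of witnesses. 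More importantly, the entire content of the statement is that the preimage is \emph{exactly} one of the two languages for \emph{some} morphism and some $P$, i.e., that one of them is saturated; your sketch never argues this for your specific $\theta$, the proposed dichotomy via $ey = e$ is unsupported, and the cited lemma does not claim that the identity witnesses themselves can serve as the images of $a$ and $b$ --- establishing that (or finding the right derived elements) is precisely the delicate structural work you say you would defer to Tesson. Since you are deferring to that source anyway, the clean repair is to drop the ad hoc construction and do what the paper does: cite the division form of the lemma and pass to the recognition form via the syntactic-monoid/division correspondence.
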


\begin{proof}
  This result follows from \cite{tesson2001}, Lemma~10, but the latter result is
  presented in slightly different terminology. Specifically, that result
  states that an aperiodic monoid
  is not in \DA iff it is \emph{divided} by two monoids $\mathit{BA}_2$
  and~$U$, that are respectively the syntactic monoid of $(ab)^*$
  and $(ab+b)^*$ (up to relabeling the symbols of Figure~2 of~\cite{tesson2001}).
  A monoid $N$ \emph{divides} another monoid $M$ iff there exists a submonoid $K$ of $M$ such that
$N$ is a quotient of~$K$.
  Our lemma follows from this result thanks to the well-known fact that a language $K$ is recognized
by a monoid $M$ iff its syntactic monoid {divides}
  $M$: see~\cite[Theorem V.1.3]{straubing1994}.
\end{proof}

  We are now ready to state and prove our dichotomy theorem:
\end{toappendix}

\begin{theoremrep}
  \label{thm:dich}
  Let $S$ be a counter-free semiautomaton. Then 
  the multi-letter CSh-problem and CTS-problem for $S$ are either both in NL, or
  both NP-complete. The dichotomy is effective: given $S$, it is PSPACE-complete
  to decide which case applies.
\end{theoremrep}

We conclude the section by introducing some technical tools used for this result
and for Section~\ref{sec:group}, and by giving a proof sketch. The criterion of
the dichotomy on~$S$ is phrased in terms of the \emph{transition monoid}
of~$S$, which we now define (see, e.g., \cite{Pin97a} for details). Remember that a monoid is a set that has an associative
binary operation and a neutral element. The \emph{transition monoid} $T(S)$ of a semiautomaton~$S =
(Q, A, \delta)$
is the set of functions $f: Q\to Q$ that are ``achieved'' by $S$ in the
following sense: there is a word $u \in A^*$ such that $\delta(q, u) = f(q)$ for
all~$q \in Q$. In particular, the neutral element is the identity function,
which is achieved by taking $u \colonequals \epsilon$; and the binary operation
on~$T(S)$ is function composition, which is associative. Note that the
transition monoid is finite and can be computed from~$S$.

We assumed that $S$ is counter-free, and this is
equivalent~\cite{mcnaughton-papert71} to saying that $T(S)$ is in the class \A
of \emph{aperiodic} finite monoids (formally defined by the
equation $x^{\omega + 1} = x^\omega$ where $\omega$ is the idempotent
power~\cite{Pin97a} of the monoid). Within \A, our dichotomy criterion on~$T(S)$
is based on a certain subclass of~\A, called \DA
(see~\cite{tesson2002diamonds}): 
$S$ is tractable iff $T(S)$ is in \DA, and it is
PSPACE-complete~\cite{therien1998over} to test whether this holds (using the
formal
definition of \DA by the equation $(xy)^\omega x (xy)^\omega= (xy)^\omega$).
We can now sketch the proof of 
Theorem~\ref{thm:dich}:

\begin{proofsketch}
  We first show that if $T(S)$ is in \DA then the multi-letter CTS and CSh
  problems for~$S$ are in NL. For this, we rely on one characterization of \DA
  (from~\cite{tesson2002diamonds}): if $T(S)$ is in~\DA then the regular languages
  recognized by~$S$ (for any set of initial-final states) are unions of
  \emph{unambiguous monomials}, in particular they are unions of monomials,
  so we have tractability by Corollary~\ref{cor:closeunion} and Theorem~\ref{thm:monomial}.

  For the converse direction, we use a second characterization of \DA
  (from~\cite{tesson2001}): if $T(S)$ is \emph{not} in \DA then there is a
  choice of initial-final state pairs for which $S$ computes a language~$K$ 
  whose inverse image by some morphism is either $(ab)^*$ or $(ab+b)^*$. We know
  that these languages are intractable (Theorem~\ref{thm:abshuffle} and
  Proposition~\ref{prp:abb}) so we conclude by showing a PTIME reduction from
  one of these two languages: this is
  possible in our alternative problem phrasing, in particular using the
  multi-letter labels to invert the morphism.
\end{proofsketch}

\begin{toappendix}
\begin{proof}
\newcommand{\SL}{\textrm{SL}}
  Fix the input semiautomaton~$S$.
  We wish to show that 
the multi-letter CTS-problem is tractable for~$S$
  iff the transition monoid $T(S)$
  of~$S$ is in \DA.
  We call $\SL(K)$ the set of possible languages that can be defined from $S$
  depending on the input instance, namely, depending on the set $\{(i_1, F_1),
  \ldots,  (i_k, F_k)\}$ of pairs of initial and final states.
For one direction we prove that: (a) 
  if $T(S)$ is in \DA, then
  for any language $K$ in $\SL(S)$,
  the multi-letter CTS-problem for~$K$ is in NL.
For the converse direction we prove that:
  (b) if $T(S)$ is not in \DA, then there exists a language $K$ in $\SL(S)$ whose
multi-letter CSh-problem is NP-complete, so we can show NP-hardness by
  restricting to input instances that use this language.

\textbf{Proof of (a).} Assume that
  $M \colonequals T(S)$ is in \DA. We denote by
$\eta:A^*\to M$ the transition morphism of~$S$ 
  and by $\psi:M^*\to M$ the morphism on words over the alphabet~$M$ defined by
  $\psi(m) \colonequals m$ for all $m\in M$. Intuitively, applying $\psi$ to a
  sequence of elements of~$M$ simply evaluates the sequence in~$M$.

Let $I = (G,(i_1,F_1),\ldots,(i_k,F_k))$ be an instance of the semiautomaton CTS-problem
  and let $K_j$ be the language recognized by the automaton
$(Q,A,\delta,i_j,F_j)$ for all $1\leq j\leq k$.
  We must determine whether $G = (V, E, \lfun)$ has a topological sort in~$K \colonequals \bigcap_j K_j$. We will
reduce this to our original definition of the CTS-problem for regular languages,
  with a language that we know to be in NL. 
  Specifically, we will work on the alphabet $M$ of the transition monoid, and
  the language that we will use is $K' \colonequals \{u \in M^* \mid \psi(u) =
  \eta(K)\}$. In other words, $K' = \psi^{-1}(\eta(K))$, so $K'$ is recognized
  by~$M$ which is a monoid in \DA: by Theorem~\ref{thm:DA}, we know that $K'$ is
  a union of monomials.
  
  Our goal is then to reduce to $\pCTS{K'}$. Formally,
  we construct from the $A^*$-DAG $G = (V, E, \lfun)$
  the $M$-DAG $G' = (V, E, \lfun')$ where we define $\lfun'(v) \colonequals
  \eta(\lfun(v))$ for all $v \in V$. Intuitively, we have relabeled the
  multi-letter labels of~$G$ to single-letter labels in~$M$. We claim that
  $I$ is a positive instance to the CTS-problem for~$S$
  iff $G'$ is a positive instance to $\pCTS{K'}$. This will allow us to conclude,
  because, by Theorem~\ref{thm:monomial} and Corollary~\ref{cor:closeunion},
  we know that $\pCTS{K'}$ is in NL.

  To show the equivalence, we will show that for any topological sort $\sigma$ of
  $(G, V)$, the word $\lfun(\sigma)$ achieved by~$\sigma$ in $G$ is in $K$ iff
  the word $\lfun'(\sigma)$ achieved by~$\sigma$ in~$G'$ is in~$K'$. In other words,
  letting $w_1 \cdots w_n \colonequals \lfun(\sigma)$, we must show that $w_1 \cdots
  w_n \in K$ iff $\eta(w_1) \cdots \eta(w_n) \in K'$.
  The forward direction is immediate by applying the morphism~$\eta$.
  For the backward
  direction, we have $\psi(\eta(w_1) \cdots \eta(w_n))) \in
  \eta(K)$, and the left-hand-side is $\eta(w_1) \cdots \eta(w_n)$, which is
  $\eta(w_1 \cdots w_k)$ because $\eta$ is a morphism, so applying $\eta^{-1}$
  concludes. We have shown the equivalence, so we can reduce in NL to $\pCTS{K'}$
  with $K'$ a union of monomials, which establishes NL-membership.
  
\textbf{Proof of (b).}
Assume that $T(S)$ is not in \DA. Remember that $T(S)$ is still aperiodic
because $S$ is counter-free.
Hence, we can apply Proposition~\ref{prop:DAab}:
there exists a morphism $\theta:\{a,b\}^*\to M$, a set $P\subseteq M$, 
and a regular language $H \in \{(ab)^*, (ab+b)^*\}$
such that $\theta^{-1}(P) = H$. 
Our goal is to use $\theta$ and $P$ to define a set of pairs of initial and
final states of~$S$ so that the CSh-problem for~$S$ with these states reduces in
logspace to the corresponding problem for~$H$. 
To do this, let $x \colonequals \theta(a)$ and $y \colonequals \theta(b)$.
As these are elements of the transition monoid, we can pick $u, v\in A^*$
such that $f_u = x$ and $f_v = y$, which we will use to define our reduction.

Let $G = (V, E, \lfun)$ be an instance of the CSh-problem for~$H$.
Let us build $G' = (V, E, \lfun')$ where we define $\lfun'(w) \colonequals
\theta(\lfun(w))$ for all $w\in V$.
For each function $f \in P$, let us define
an instance $I_f$ of the semiautomaton CSh-problem of $S$
by
$I_f=(G',(q_1,\{f(q_1)\}), \ldots, (q_n,\{f(q_n)\}))$
where $(q_i)_{i=1,\ldots,n}$ is an arbitrary enumeration of $Q$, the set of states
of $S$.
Note that a word $z\in A^*$ is accepted by~$S$ for the choice of initial and
final states in~$I_f$ iff $f_z = f$ in~$M$.
This construction is in NL. Let us show that $G$ is a positive instance to
$\pCSh{H}$ iff one of the $I_f$ is a positive instance to
the semiautomaton CSh-problem of~$S$, which shows that our reduction is correct
(but note that this is not a many-one reduction). 

For the forward direction, assume that we have a topological sort $\sigma$ of~$(V,
E)$
achieving a word $z \colonequals \lfun(\sigma)$ of~$H$,
and let us consider the word $\lfun'(\sigma) = \theta(z_1) \cdots \theta(z_n) =
\theta(z_1 \cdots z_n)$ because $\theta$ is a morphism. As $z \in H$ and
$\theta(H) = P$, we know that
$f \colonequals \theta(z_1 \cdots z_n)$ is in~$P$.
Hence, consider the
instance $I_f$. We know that $f_z = f$ by definition, hence $\sigma$ witnesses
that $I_f$ has a suitable topological sort.

For the backward direction, assume that there is $f \in P$
such that we have a solution of $I_f$. This means that
there is a topological sort $\sigma$ of $(V, E)$ such that the word $z
\colonequals \lfun'(\sigma)$
achieved by~$\sigma$ in~$G'$ is such that $f_z = f$. Now, we know that
$\theta^{-1}(f) \subseteq H$. Hence, the word $\lfun(\sigma)$ achieved by~$\sigma$
in~$G$ is in~$H$, so $G$ is a positive instance to $\pCSh{H}$, which
establishes the desired equivalence.

We have thus shown a reduction from $\pCSh{H}$ to the multi-letter CSh-problem
for the semiautomaton~$S$. We can then conclude that the latter problem
is NP-hard, because $\pCSh{H}$ is NP-hard: either $H = (ab)^*$ and this follows
from Theorem~\ref{thm:abshuffle}, or $H = (ab+b)^*$, in which case
we conclude from Proposition~\ref{prp:abb}.
\end{proof}
\end{toappendix}

\section{Lifting the Counter-Free Assumption for CSh}
\label{sec:group}
\begin{toappendix}
  \label{apx:group}
\end{toappendix}
Our dichotomy theorem in the previous section (Theorem~\ref{thm:dich})
was shown for an alternative
phrasing of our problems (with semiautomata and multi-letter inputs), and made
the additional assumption that the input semiautomaton is counter-free. In this section, we
study how to lift the counter-free assumption. In exchange for this, we
restrict our study to the constrained shuffle problem (CSh) rather than CTS.

To extend Theorem~\ref{thm:dich} for the CSh-problem, we will again classify
the semiautomata~$S$
based on their transition monoid $T(S)$. However, instead of \DA, we will use
the two classes \DO and \DS introduced
in~\cite{schutzenberger1976} (formally \DO is defined by the equation
$(xy)^\omega (yx)^\omega (xy)^\omega = (xy)^\omega$ and \DS by the equation
$((xy)^\omega (yx)^\omega (xy)^\omega)^\omega = (xy)^\omega$ for $\omega$ the
idempotent power). Both \DO and \DS are supersets of \DA, specifically we have
$\DA \subseteq \DO \subseteq \DS$, and we can test in PSPACE in~$S$ whether
$T(S)$ is in each of these classes~\cite{therien1998over}. Our main result is
then:

\begin{theorem}
  \label{thm:groupdich}
  Let $S$ be a semiautomaton.
  If $T(S)$ is in \DO, then 
  the multi-letter CSh-problem for $S$ is in NL. If 
  $T(S)$ is not in \DS, then it is NP-complete.
\end{theorem}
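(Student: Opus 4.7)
The plan is to handle the two halves separately, mirroring the structure of Theorem~\ref{thm:dich}: the NL-upper bound when $T(S)\in\DO$ will be derived from a structural characterization of \DO-recognizable languages together with the promised tractability for unions of district group monomials, while the NP-hardness when $T(S)\notin\DS$ will follow from an analogue of Proposition~\ref{prop:DAab} together with a multi-letter shuffle reduction in the style of the proof of Theorem~\ref{thm:dich}.

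For the hardness half, I would invoke the known algebraic characterization of monoids outside \DS: if $T(S)\notin\DS$ then there is a morphism $\theta\colon\{a,b\}^*\to T(S)$ and a subset $P\subseteq T(S)$ such that $\theta^{-1}(P)$ is one of a short list of ``bad'' languages, each of which has already been shown NP-hard for CSh in Section~\ref{sec:hardness} (typically $(ab)^*$ or a close variant, via Theorem~\ref{thm:abshuffle} and Proposition~\ref{prp:abb}). Then, exactly as in the proof of Theorem~\ref{thm:dich}, pick words $u,v\in A^*$ realizing $\theta(a)$ and $\theta(b)$ through the transition morphism, relabel the input $\{a,b\}^*$-DAG by these multi-letter labels, and query the semiautomaton with the family of initial-final pairs $\{(q_i,\{f(q_i)\}) : q_i\in Q\}$ for each $f\in P$. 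The multi-letter label mechanism is exactly what lets us invert $\theta$ in PTIME, and the conjunction of initial-final pairs lets us pin down the transition monoid element of the topological sort.

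For the tractability half, the plan is to reduce to the special case promised in the introduction, namely that $\pCSh{K}$ is in \NL{} when $K$ is a union of district group monomials. The key classical fact I would invoke is the analogue of Theorem~\ref{thm:DA} for \DO: every language recognized by a monoid in \DO is a finite union of \emph{unambiguous monomials over group languages}, i.e.\ of expressions of the form $K_0 a_1 K_1 a_2 \cdots a_n K_n$ where each $K_i$ is a group language over some subalphabet $A_i$ and the decomposition of each accepted word is unique. These are precisely the district group monomials of Section~\ref{sec:group}. Given the input instance $(G,(i_1,F_1),\ldots,(i_k,F_k))$, the set of words accepted by $S$ with these initial-final pairs is an intersection of $k$ \DO-recognizable languages, hence itself \DO-recognizable (since \DO is closed under intersection), hence a union of district group monomials. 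Closure under union (Corollary~\ref{cor:closeunion}) plus the district-group-monomial tractability then delivers NL.

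The main obstacle is clearly the tractability of district group monomials itself, which the introduction flags as the principal technical contribution of the paper and whose proof is promised to be ``rather involved''; my plan treats it as a black box here, but designing that algorithm (presumably some sophisticated guessing of the pivot positions matched to the $a_i$, combined with a group-valued flow argument to check that each inter-pivot block can be scheduled to evaluate to the prescribed group element) is where the real work lies. A secondary subtlety is verifying that the multi-letter inputs cause no trouble: each vertex label $w\in A^*$ must be mapped, via the transition morphism, to a single transition monoid element, and the district-group-monomial algorithm must be run over the alphabet $T(S)$ rather than $A$, which requires checking that the structural characterization of \DO applies to $\psi^{-1}(P)$ for $\psi\colon T(S)^*\to T(S)$ the evaluation morphism, exactly as in the counter-free proof of Theorem~\ref{thm:dich}.
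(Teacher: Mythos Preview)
Your proposal is correct and follows essentially the same approach as the paper: the paper invokes a \DO{} characterization (their Theorem~\ref{thm:DO}, from Tesson~2005) giving unions of unambiguous district group monomials and then appeals to Theorem~\ref{thm:group} plus closure under union for the upper bound, and for the lower bound uses the exact analogue of Proposition~\ref{prop:DAab} for \DS{} (their Proposition~\ref{prop:DSab}, yielding precisely $(ab)^*$ or $(ab+b)^*$) together with the same multi-letter reduction as in Theorem~\ref{thm:dich}. Your identification of the transition-monoid relabeling trick and of Theorem~\ref{thm:group} as the black-box heavy lifting is also exactly how the paper proceeds.
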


This result generalizes
Theorem~\ref{thm:dich} for the CSh-problem, because both \DO and \DS collapse to \DA for aperiodic
monoids (see~\cite{schutzenberger1976} and~\cite[Chapter 8]{almeida1994finite}); formally, $\DO \cap \A = \DS \cap \A = \DA$.
However, \DO covers more
languages than \DA: the main technical challenge 
to prove
Theorem~\ref{thm:groupdich} is to show that CSh is tractable for these
languages. One important example are the \emph{group
languages} over~$A$: these are the regular languages recognized, for some choice of
initial-final state pairs, by a semiautomaton $S$ over~$A$ such that $T(S)$ is a group.
A more general example are \emph{district group monomials}, which are the
languages of the form $K_1 a_1 \cdots K_n a_n
K_{n+1}$ where, for all~$i$, we have $a_i \in A$ and $K_i$ is a group language
over some alphabet $A_i
\subseteq A$. Note that district group monomials are more expressive than
the \emph{group monomials} defined in earlier work~\cite{pin1996} (which
set $A_i \colonequals A$ for all~$i$), and they also generalize
the monomials that we studied in Section~\ref{sec:sigma2}
(any $A_i^*$ is
trivially a group language over~$A_i$, even though it is not a group
language over~$A$).
In fact, to prove Theorem~\ref{thm:groupdich},
what we need is to generalize Theorem~\ref{thm:monomial} 
(for CSh) from monomials to district group monomials:

\begin{theorem}
  \label{thm:group}
  Let $K$ be a district group monomial. Then $\pCSh{K}$ is in NL.
\end{theorem}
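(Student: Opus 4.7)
The plan is to extend the inductive strategy of Theorem~\ref{thm:monomial} to district group monomials $K = K_1 a_1 K_2 \cdots K_n a_n K_{n+1}$, by induction on~$n$. For the inductive step, I would first guess in NL the $n$ pivot vertices matched to $a_1,\ldots,a_n$, exactly as in Theorem~\ref{thm:monomial}, enforcing a fresh-pivot condition. These pivots split each input string into at most $n+1$ contiguous pieces, and the non-pivot letters of each string must be distributed monotonically among the $n+1$ slots, using only letters of $A_i$ in slot~$i$, and so that the interleaving assigned to slot~$i$ forms a word in the group language~$K_i$. Processing slots one at a time (say, from right to left) as in the monomial proof, one reduces the global problem to $n+1$ essentially independent base-case instances, provided we know what ``cursor'' in each string separates letters sent to the current slot from those sent to later slots; these cursors are guessed nondeterministically as we descend through the slots rather than all at once.

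The heart of the proof is then the base case: $\pCSh{K}$ is in NL when $K$ is a group language, recognized by a morphism $\eta:A^*\to H$ to a finite group~$H$ together with a subset $P\subseteq H$ such that $K = \eta^{-1}(P)$. A direct dynamic programming over tuples (one string-pointer per string together with a running product in~$H$) would require $\Omega(k \log N)$ bits for $k$ input strings of total size~$N$, which is far too much for NL. Instead, I would exploit the algebraic structure of finite groups to obtain a compact characterization of the set of achievable products $\eta(w)$ over interleavings $w$ of $s_1,\ldots,s_k$. The key technical lemma should describe this set in terms of bounded-size summaries of the strings (the total $\eta$-image plus certain prefix-/suffix-product data) and show that it can be reconstructed incrementally, guessing the interleaving ``block by block''—one maximal contiguous piece from a single string at a time—while maintaining only a constant number of group elements and $O(k \log N)$ globally-controlled pointer information that can be amortized into the block-by-block guessing.

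The main obstacle is establishing this algebraic lemma on shuffle-products in finite groups. Unlike the aperiodic case handled by Theorem~\ref{thm:dich}, where monomial structure enables direct positional reasoning, in the group case the product of an interleaving genuinely depends on the order in which letters are read, and controlling this in logarithmic space requires subtle manipulations—presumably via conjugation, commutators, and coset structure in~$H$—to argue that an NL algorithm can ``commit'' to a particular target element of~$P$ without remembering the full interleaving it is building. As the introduction warns, this is the main technical contribution of the paper, so I expect the algebraic lemma and the corresponding block-decomposition argument to be the most delicate and involved step of the plan, while the outer induction on~$n$ is a fairly direct adaptation of Theorem~\ref{thm:monomial}.
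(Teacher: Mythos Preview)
Your outer plan of guessing pivots and reducing to slot-by-slot group-language instances is reasonable in spirit, but the core of the argument---the base case of a single group language---has a genuine gap.

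You correctly observe that naive dynamic programming needs $\Omega(k\log N)$ space, where $k$ is the number of input strings, and that this is not NL because $k$ is unbounded. But your proposed fix, ``guessing the interleaving block by block \ldots while maintaining only a constant number of group elements and $O(k\log N)$ globally-controlled pointer information that can be amortized,'' does not escape this: there is no amortization scheme that lets an NL machine maintain $k$ independent cursors when $k$ can be $\Theta(N)$. The vague appeal to conjugation, commutators, and coset structure in~$H$ is not enough; you would need to explain concretely why the set of achievable products over \emph{all} interleavings of $k$ strings can be computed from a logspace-sized summary, and nothing in your proposal does that. The same problem recurs in your outer induction: even for one slot you must know one cursor per string, and you cannot guess all of them.

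The paper's actual mechanism is quite different and is worth knowing. It partitions the alphabet into \emph{frequent} letters (those occurring in many strings) and \emph{rare} letters, so that only a \emph{constant} number of ``rare'' strings contain any rare letter. The rare strings are handled by bounded-width dynamic programming (Proposition~\ref{prp:dynamic}). For the frequent strings, an \emph{antichain lemma} (proved via relational morphisms into the commutative closure monoid) shows that once you have a sufficiently rich antichain of frequent letters, you can realize \emph{any} group element consistent with the Parikh image---so no per-string pointers are needed at all. Finally, an \emph{insertion lemma}, proved by a Ramsey-type argument, shows that only a constant number of insertions of frequent material into the rare-string topological sort are ever needed. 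These three ingredients---the rare/frequent split, the antichain lemma, and the Ramsey bound on insertions---are precisely what your proposal is missing, and without them the argument does not go through in NL.
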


Note that this theorem, like Theorem~\ref{thm:monomial}, applies to the original
phrasing of CSh, not the alternative phrasing with
semiautomata and multi-letter DAGs. Thus, 
Theorem~\ref{thm:group} implies that the original CSh-problem is tractable for many
languages that we had not covered previously, e.g., $(ab^*a+b)^* c (ba^*b+a)^*$,
the language testing whether there is one $c$ preceded by an even number of~$a$
and followed by an even number of~$b$.
The proof of 
Theorem~\ref{thm:group} is our main technical achievement, and we sketch it below (see Appendix~\ref{apx:group} for details):

\begin{proofsketch}
  We focus on the simpler case of a group language, for a finite group~$H$. The problem
  can be rephrased directly in terms of~$H$: given a tuple $I$ of strings
  over~$H$ and a target
  element $g\in H$, determine if there is an interleaving of~$I$ that
  evaluates to~$g$ under the group operation. Our approach partitions $H$ into
  the \emph{rare} elements $H_\rare$, that occur in a constant number
  of strings, and the \emph{frequent} elements $H_\freq$, that occur in sufficiently
  many strings. For the frequent elements, 
  we can build a large antichain~$C$ from the strings where they occur, with
  each element of~$H_\freq$ occuring many times in~$C$. 
  Now, as topological sorts can choose any order on~$C$, they can intuitively achieve all
  elements of the subgroup $\langle H_\freq\rangle$ generated by~$H_\freq$, except that they
  cannot change 
  ``commutative information'', e.g., the parity of the number of elements.
  We formalize the notion of ``commutative information'' using relational
  morphisms, and prove an \emph{antichain lemma} that captures our intuition
  that all elements of~$\langle H_\freq \rangle$ with the right commutative
  information can be achieved.

  For the rare elements, we can simply follow a dynamic algorithm on the
  constantly many strings where they occur. However, we must account for the
  possibility of inserting elements of~$\langle H_\freq \rangle$ from the
  other strings, and we must show that it suffices to do constantly many
  insertions, so that it was sufficient to impose a constant lower bound
  on~$\card{C}$. We formalize this as
  an \emph{insertion lemma}, which we prove using Ramsey's theorem.
\end{proofsketch}

\begin{toappendix}
  \subsection{Proof of Theorem~\ref{thm:groupdich}: Coarser Dichotomy Theorem for
CSh}
We first explain how 
Theorem~\ref{thm:groupdich} follows from Theorem~\ref{thm:group}, before dealing
with the much more difficult task of proving Theorem~\ref{thm:group}. The
overall scheme is like in Section~\ref{sec:dich}:
show that monoids in \DO can be
reduced to tractable languages (specifically, to district group monomials),
and show that
monoids not in \DS capture an intractable language. For the upper
bound, we use the following result, which is the counterpart of
Theorem~\ref{thm:DA} but for \DO rather than \DA:

\begin{theorem}[(\cite{tesson2005}, Theorem~1)]
  \label{thm:DO}
  Let $K$ be a regular language of $A^*$. The following conditions are
  equivalent:
  \begin{itemize}
    \item $K$ is an union of \emph{unambiguous} district group monomials, i.e.,
      of district group monomials
      $K = K_1 a_1 \cdots K_n a_n K_{n+1}$ such that every word $u \in K$
      has a unique decomposition $u_1 a_1 \cdots u_n a_n u_{n+1}$ where $u_i \in
      K_i$ for all $1 \leq i\leq n+1$.
  \item There exists a monoid $M$ in \DO and a morphism $\eta:A^*\to M$ such that
    $K$ is \emph{recognized} by~$M$, meaning that 
   $K = \eta^{-1}(P)$ for some subset $P \subseteq M$.
  \end{itemize}
\end{theorem}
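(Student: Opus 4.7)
The plan is to prove the two implications separately, leveraging Eilenberg-style correspondence between pseudovarieties of monoids and varieties of languages, together with the structure theory of \DO{}-monoids developed in \cite{schutzenberger1976} and \cite[Chapter~8]{almeida1994finite}.

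For the harder direction, from recognizability by a \DO{}-monoid to a description as a union of unambiguous district group monomials, the plan is to exploit the fact that in any finite $M \in \DO$, each regular $\mathcal{J}$-class is a completely simple semigroup whose maximal subgroups control the ``group behavior'' of~$M$ while the $\mathcal{L}$- and $\mathcal{R}$-classes control the ``pattern'' behavior. For a morphism $\eta : A^* \to M$ and target $P \subseteq M$, I would fix a word $w$ and read it linearly while tracking the sequence of $\mathcal{J}$-classes visited by the values $\eta(w_1 \cdots w_i)$: this sequence decreases in the $\mathcal{J}$-order until stabilizing, and each time a new, strictly smaller $\mathcal{J}$-class is entered, a distinguished ``pivot'' letter $a_i$ is responsible. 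Between two consecutive pivots, the word uses only letters whose action stays within a fixed group $\mathcal{H}$-class; hence the intermediate factor lives in an alphabet $A_i \subseteq A$ over which its contribution is recognized by a finite group (that group being a maximal subgroup of~$M$). Thus each element of~$\eta^{-1}(P)$ can be expressed as $u_1 a_1 \cdots u_n a_n u_{n+1}$ with each $u_i$ in a group language over $A_i$, and the \DO-structure forces the decomposition to be unique, yielding unambiguity.

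For the converse direction, I would first show that any single unambiguous district group monomial $K = K_1 a_1 \cdots K_n a_n K_{n+1}$, with $K_i$ a group language over $A_i$ recognized by a finite group $G_i$ via a morphism $\phi_i$, is recognized by an explicitly constructed ``phase-tracking'' monoid $M_K$. Its elements record the current phase $i \in \{1, \ldots, n+1\}$, together with the group element accumulated so far in~$G_i$ by applying $\phi_i$ to the letters of~$A_i$ seen in the current phase; transitions advance the phase deterministically when the appropriate anchor $a_i$ is read, which is well-defined precisely because of unambiguity. I would then check that $M_K$ satisfies the identity $(xy)^\omega (yx)^\omega (xy)^\omega = (xy)^\omega$ by a case analysis on the phases reached by $xy$ and $yx$, using that within a single phase the behavior is purely group-theoretic (hence trivially \DO) and across phases the unambiguity prevents the pumping behavior that would violate the identity. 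Finite unions are absorbed using that \DO{} is closed under finite direct products, so we may take the direct product of the $M_{K_i}$ and an appropriate accepting set.

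The main obstacle will be converting the combinatorial unambiguity condition into the algebraic identity defining \DO{}: a naive direct product of phase-trackers for two overlapping monomials can easily violate $(xy)^\omega (yx)^\omega (xy)^\omega = (xy)^\omega$, so the argument will need a careful normal-form lemma showing that the unambiguous decomposition is preserved under the monoid operation. This is precisely the technical heart of the result in \cite{tesson2005}, and I expect to borrow their use of relational morphisms (or equivalently, the Schützenberger product construction) to cleanly transfer the language-level uniqueness into the required monoid equation.
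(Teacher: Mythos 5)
First, note that the paper does not prove this statement at all: it is imported verbatim as Theorem~1 of~\cite{tesson2005}, so there is no internal proof to compare against. Your proposal is therefore an attempt to reprove a known deep theorem, and as a proof it has genuine gaps in both directions.

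In the direction ``recognized by a monoid in \DO\ $\Rightarrow$ union of unambiguous district group monomials,'' the skeleton you describe (track the $\mathcal{J}$-class of the prefix values, cut at pivot letters where it strictly drops) is available for \emph{any} finite monoid; the entire content of the theorem is in the two claims you assert without argument, namely that between pivots the factor is a group language over the subalphabet of letters that preserve the current class, and that the resulting decomposition can be organized into finitely many \emph{unambiguous} monomials. Both claims are exactly where the \DO\ identity must enter: the same $\mathcal{J}$-descent argument applied to a monoid in $\DS \setminus \DO$ produces the same-looking factorization, yet no such language characterization holds there (the paper itself recalls that characterizing \DS-recognizable languages is a long-standing open problem). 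Also, ``letters whose action stays within a fixed group $\mathcal{H}$-class'' is not what stability of the $\mathcal{J}$-class of the prefix gives you, and the passage from ``prefix stays in one $\mathcal{J}$-class'' to ``factor is recognized by a maximal subgroup'' needs the local (orthogroup) structure of regular classes in \DO; none of this is sketched. In the converse direction, the ``phase-tracking monoid'' does not exist as described: a monoid element must encode the effect of an arbitrary \emph{infix}, hence its action from every possible phase and partial group value, not a single (phase, group element) pair; and the phase cannot ``advance deterministically when the anchor $a_i$ is read,'' since $a_i$ may also lie in $A_i$ or $A_{i+1}$ and unambiguity of the product does not make the anchor identifiable greedily while reading left to right. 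The standard repair is to prove that \DO-recognizable languages are closed under unambiguous product via an unambiguous Sch\"utzenberger-product or relational-morphism argument, which you acknowledge only by proposing to ``borrow'' it from~\cite{tesson2005} --- i.e., the technical heart of both directions is deferred to the very result being proved, so the proposal does not stand on its own.
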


For the lower bound, we use the following folklore result, which extends
Proposition~\ref{prop:DAab} to the non-aperiodic case:

\begin{proposition}[(\cite{almeida1994finite}, Exercise~8.1.6)]
  \label{prop:DSab}
  A monoid $M$ is not in \DS iff there exists a morphism
  $\theta:\{a,b\}^*\to M$ and $P\subseteq M$
  such that $\theta^{-1}(P)$ is either $(ab)^*$ or $(ab + b)^*$.
\end{proposition}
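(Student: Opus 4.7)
The plan is to prove the equivalence in two directions, treating $(ab)^*$ and $(ab+b)^*$ as canonical witnesses for non-membership in \DS.

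For the easy direction, suppose there exists a morphism $\theta:\{a,b\}^*\to M$ and $P\subseteq M$ such that $\theta^{-1}(P) \in \{(ab)^*, (ab+b)^*\}$. Then $M$ recognizes this language, and by the standard fact that the syntactic monoid of a language divides any monoid recognizing it (see, e.g., \cite[Theorem V.1.3]{straubing1994}), the syntactic monoid of $(ab)^*$ or of $(ab+b)^*$ must divide $M$. I would then verify by a finite computation on these two small syntactic monoids that neither lies in \DS, i.e., both fail the identity $((xy)^\omega(yx)^\omega(xy)^\omega)^\omega = (xy)^\omega$ at explicit elements. Since \DS is a pseudovariety and hence closed under division, $M$ cannot belong to \DS.

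For the harder direction, assume $M \notin \DS$, so there exist $x, y \in M$ witnessing failure of the \DS identity. The natural candidate is to set $\theta(a) := x$ and $\theta(b) := y$; the challenge is to produce a set $P \subseteq M$ such that $\theta^{-1}(P)$ is precisely one of $(ab)^*$ or $(ab+b)^*$. The approach is to study the submonoid $N \subseteq M$ generated by $x$ and $y$, focusing on the regular $\mathcal{J}$-class of $N$ containing $(xy)^\omega$. The failure of the identity at $(x,y)$ forces this $\mathcal{J}$-class not to be a subsemigroup: either the product $(xy)^\omega(yx)^\omega(xy)^\omega$ drops into a strictly lower $\mathcal{J}$-class (yielding a quotient exhibiting the syntactic structure of $(ab)^*$), or it stays at the same level but fails to be idempotent (yielding a structure matching that of $(ab+b)^*$). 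In each case $P$ can be read off from the structure produced, essentially by selecting the image of one of the accepting states in the relevant syntactic monoid.

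The main obstacle is the second direction, where converting an abstract equational failure into a morphism with a precisely prescribed preimage requires a careful case analysis of Green's relations in the submonoid generated by $x$ and $y$. A cleaner alternative is to appeal to the Eilenberg correspondence: \DS corresponds to a specific variety of languages, and any language recognized by $M$ but outside this variety yields, via the closure of language varieties under quotients, Boolean combinations, and inverse morphisms, access to a witness of the prescribed form. This reduces the statement to a structural claim about the lattice of pseudovarieties, which is the approach taken in Exercise~8.1.6 of~\cite{almeida1994finite} and follows the general Sch\"utzenberger--Straubing framework.
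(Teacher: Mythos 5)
The paper itself does not prove this statement: it is quoted as a known result (Exercise~8.1.6 of~\cite{almeida1994finite}), exactly as the aperiodic analogue (Proposition~\ref{prop:DAab}) is imported from Tesson--Th\'erien via the fact that a language is recognized by~$M$ iff its syntactic monoid divides~$M$. Your easy direction is precisely that standard division argument and is correct: the syntactic monoids of $(ab)^*$ and $(ab+b)^*$ are finite, fail the \DS identity, and \DS is closed under division.

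The hard direction, however, has a genuine gap. Write $e=(xy)^\omega$ and $f=(yx)^\omega$ for a pair witnessing the failure of the \DS identity. Your case split is: either $efe$ falls strictly below the $\mathcal{J}$-class of~$e$ (claimed to yield $(ab)^*$), or it stays in that class but fails idempotency (claimed to yield $(ab+b)^*$). The second case is vacuous: since $efe\leq_{\mathcal{R}} e$ and $efe\leq_{\mathcal{L}} e$, stability of finite monoids shows that $efe\mathrel{\mathcal{J}}e$ forces $efe\mathrel{\mathcal{H}}e$, so $efe$ lies in the maximal subgroup at~$e$ and $(efe)^\omega=e$, i.e., the identity would hold for that pair. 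So only your first case can occur, and it does not always produce $(ab)^*$: take $M$ to be the six-element syntactic monoid $U$ of $(ab+b)^*$ itself. It is aperiodic and not in \DA, hence not in \DS, so a witnessing pair exists and your first branch applies; but the syntactic monoid of $(ab)^*$ (the Brandt monoid, four idempotents) cannot divide~$U$ (six elements, five idempotents, so division would force isomorphism), and thus no $\theta,P$ with $\theta^{-1}(P)=(ab)^*$ exists, contradicting the branch's conclusion. The true dichotomy between the two witness languages comes from the finer egg-box structure of the regular $\mathcal{J}$-class that fails to be a subsemigroup (essentially, how many idempotents sit in the relevant block of $\mathcal{R}$- and $\mathcal{L}$-classes), not from where $(efe)^\omega$ lands; moreover, taking $\theta(a)=x$, $\theta(b)=y$ for the raw witnesses is not enough to make $\theta^{-1}(P)$ \emph{exactly} one of the two languages --- one must choose images inside the bad $\mathcal{J}$-class and pass through a quotient of the generated submonoid before pulling the accepting set back to~$M$. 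Finally, the proposed fallback via the Eilenberg correspondence begs the question: knowing that $M$ recognizes some language outside the \DS-variety gives no mechanism for extracting a witness over $\{a,b\}$ of the prescribed shape; producing such a witness is exactly the content of the forbidden-divisor result being cited.
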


From these two results, we can prove Theorem~\ref{thm:groupdich} exactly like
we proved Theorem~\ref{thm:dich} in the previous section, using
Theorem~\ref{thm:group} instead of Theorem~\ref{thm:monomial}. The hard work
that remains is to prove Theorem~\ref{thm:group}.

\subsection{High-Level Presentation of the Proof of Theorem~\ref{thm:group}}
\label{apx:highlevel}
This appendix gives a high-level view of the proof of Theorem~\ref{thm:group}.
For most of the proof, we focus
on the case of \emph{group
languages}: the case of group district monomials is only presented at the very end,
in Appendix~\ref{apx:district}.
The CSh-problem for
group languages can essentially be stated directly in terms of the underlying group: we
fix a finite group~$H$ and a target element~$g$, our instance to the CSh-problem
is a tuple $I$ of strings over~$H$, and we want to test if there is an
interleaving of~$I$ which evaluates to~$g$ according to the group operation. So
we see $A \colonequals H$ as the alphabet of~$I$.

As explained in the proof sketch, given the
CSh-instance $I$, we will
split the letters of~$A$ between \emph{rare} letters $A_\rare$
and \emph{frequent} letters~$A_\freq$, which we call a \emph{rare--frequent partition}.
This will ensure that the rare letters $A_\rare$ only occur in constantly
many input strings (called the \emph{rare} strings), and the frequent letters
$A_\freq$ occur in sufficiently many different input
strings (called the \emph{frequent} strings).

For the frequent letters, the key idea 
is that we can pick many occurrences of each frequent letter in different
strings, and obtain an \emph{antichain} $C$ (subset of pairwise incomparable
elements), which
contains many occurrences of each frequent letter of~$A_\freq$.
Now, in a topological sort, we can enumerate all elements of~$C$ contiguously,
following any permutation on~$C$. Intuitively, as $C$ contains many occurrences
of each frequent letter, this should give us the freedom to create many different 
elements in the subgroup of~$H$ generated by~$A_\freq$. We cannot obtain
\emph{all} elements of this subgroup, because the number of
occurrences of each group element is fixed by
that of~$C$. To formalize this intuition, the notion of \emph{Parikh image} is helpful:

\begin{definition}
  \label{def:parikh}
Write the alphabet $A$ as $a_1, \ldots, a_k$ in some fixed order.
The \emph{Parikh image} 
of a word $w \in A^*$ is $\PI(w) \colonequals (|w|_{a_1}, \ldots,
|w|_{a_k}) \in \NN^k$,
where $|w|_a$ for $a \in A$ denotes the number of
occurrences of~$a$ in~$w$. The \emph{Parikh image} of a language~$K$ is then
the set $\PI(K)$ of the Parikh images of the words that $K$ contains: for
instance, $\PI((ab)^*) = \{(i, i) \mid i \in \NN\}$.

The \emph{Parikh image} $\PI(G)$ of an $A$-DAG $G = (V, E, \lfun)$ is
$(|G|_{a_1}, \ldots, |G|_{a_k})$, with each $|G|_{a_i}$ being $\card{\{v \in V \mid
  \lfun(v) = a_i\}}$. The \emph{Parikh image} $\PI(I)$ of a CSh-instance~$I$ is
  defined in the same way, seeing $I$ as a DAG formed of disjoint paths.
\end{definition}

As it will turn out, the Parikh image is the only constraint on what we
can generate using such an antichain~$C$.
We formalize this intuition in the
\emph{antichain lemma} (Lemma~\ref{lem:groupper}):
we show that, for any finite group,
if we have enough copies of each element, we can permute them to realize any
element of the group, up to ``commutative constraints''. Thanks to this, the CSh-problem simply
reduces to a test on the Parikh image $\PI(I)$ of the instance, under our initial
assumption.

We must now explain how to handle the rare letters~$A_\rare$. We can simply look
at the constant number of strings that contain a letter of~$A_\rare$, and handle
these strings with an approach based
on dynamic programming, as in the proof of Proposition~\ref{prp:dynamic}.
So it seems like the problem is solved: apply dynamic programming to the rare
strings, and use the antichain lemma to argue that the frequent strings can
generate any letter of the subgroup spanned by $A_\freq$, up to the
commutative constraints. However, one difficulty remains:
in a topological sort of the rare strings, we can insert elements from
the frequent strings at any point in the dynamic algorithm, and the rare strings
may be arbitrarily long; yet the frequent strings cannot create arbitrarily many
copies of each group element, because
we must use a constant bound when splitting $H$ into $A_\rare$ and $A_\freq$. We
address this by proving a result called the \emph{insertion lemma}
(Lemma~\ref{lem:ramsey}), which intuitively says that a constant number of
insertions always suffice. This is the result whose proof uses Ramsey's theorem.
Thanks to
the insertion lemma, it suffices to allow constantly many insertions of frequent
elements when performing the NL algorithm on the rare strings, which allows us
to conclude.

We give some more detail by stating the antichain lemma and insertion lemma as
standalone results (and defer their complete proof to the next sections of the
appendix, i.e., Appendices~\ref{apx:antichain} and~\ref{apx:ramsey}). We then
formalize the rare--frequent partition and
sketch the remainder of the proof of Theorem~\ref{thm:group} (the details about
the reminder of the proof are given in Appendix~\ref{apx:groupproof}).

\subparagraph*{Antichain lemma.}
Let $G$ be an $A$-DAG over some alphabet~$A$, let $C$ be an antichain of~$G$,
and let $n \in \NN$. 
We call $C$ an \emph{$n$-rich antichain} if each letter of $A$ appears at least $n$
times in~$C$.
The \emph{antichain lemma} intuitively
shows that when $G$ has a rich antichain, then it suffices to
look at commutative information of~$G$, namely, its Parikh image, to decide
whether it has a topological sort that achieves a group element. In fact, the
claim applies to any constant-length sequence of group elements, following our
needs for the insertion lemma later. Formally:

\newcommand{\lemgroupper}{
  Let $H$ be a finite group and $\mu:A^*\to H$ be a surjective morphism.
  For any integer $k>0$,
  there exists an integer $n_k$ such that, 
  for any $A$-DAG
  $G = (V, E, \lfun)$ with an $n_k$-rich antichain,
  for any elements  $g_1,\ldots,g_k$ of $H$, if
  $\PI(G)\in\PI(\mu^{-1}(g_1\cdots g_k)) $ then there is a topological
  sort $\sigma$ of~$G$ decomposable as $\sigma=\sigma_1\cdots \sigma_k$
  such that $\mu(\lfun(\sigma_i))=g_i$ for each
  $i\in\{1,\ldots,k\}$.
}
\begin{lemma}[(Antichain lemma)]
  \label{lem:groupper}
  \lemgroupper
\end{lemma}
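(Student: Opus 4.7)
The plan is to reduce the statement to a classical ``anagram-in-groups'' fact: for every surjective morphism $\mu \colon A^* \to H$ with $H$ a finite group, there exists $n_0 \in \NN$ such that for any $\alpha \in \NN^A$ with $\alpha \geq n_0 \cdot \mathbf{1}_A$ componentwise and $\alpha \in \PI(\mu^{-1}(h))$ for some $h \in H$, there is a word $w \in A^*$ with $\PI(w) = \alpha$ and $\mu(w) = h$. One proves this by observing that swapping two adjacent letters $a,b$ in a word multiplies its $\mu$-image by a conjugate of the commutator $[\mu(b), \mu(a)]$; for $\alpha$ rich enough, iterated swaps realize the normal closure of these commutators in $H$, which is exactly $[H,H]$, so that the set of achievable images is a full coset of $[H,H]$, determined by $\alpha$ via the induced abelian morphism $\bar{\mu} \colon \ZZ^A \to H/[H,H]$.

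Granting the anagram fact, I would first exploit the antichain property of $C$ to produce a topological sort of $G$ in which $C$ is enumerated contiguously. Since $C$ is an antichain, any vertex outside $C$ is either a strict ancestor of some element of $C$, a strict descendant of some element of $C$, or incomparable to the whole of $C$; crucially, no edge can go from a descendant-of-$C$ vertex to an ancestor-of-$C$ vertex, as this would force two elements of $C$ to be comparable. Hence I may enumerate the ancestors of $C$ first (in a valid topological order), then the elements of $C$ in any order, and finally the remaining vertices in a valid order, yielding $\sigma^\star = \tau_1 \gamma \tau_2$ where $\tau_1, \tau_2$ are fixed words and $\gamma$ may be taken to be any permutation of the labels of $C$.

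I would then define $h_1 := \mu(\tau_1)^{-1} g_1$, $h_i := g_i$ for $1 < i < k$, and $h_k := g_k \mu(\tau_2)^{-1}$, and seek to write $\gamma = \gamma_1 \cdots \gamma_k$ where each $\gamma_i$ is a permutation of a sub-multiset of $C$ with $\mu(\gamma_i) = h_i$: the decomposition $\sigma_1 := \tau_1 \gamma_1$, $\sigma_i := \gamma_i$ for $1 < i < k$, and $\sigma_k := \gamma_k \tau_2$ then satisfies $\mu(\lfun(\sigma_i)) = g_i$ by telescoping. Passing the hypothesis $\PI(G) \in \PI(\mu^{-1}(g_1 \cdots g_k))$ through the abelianization yields $\bar{\mu}(\PI(C)) = \sum_{i=1}^k \overline{h_i}$. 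I would then pick short base Parikh vectors $\alpha_i^{(0)} \in \PI(\mu^{-1}(h_i))$ of size at most $|H|$ in each coordinate, observe that the residue $\delta := \PI(C) - \sum_i \alpha_i^{(0)}$ lies in $\ker \bar{\mu} \cap \NN^A$, and redistribute it using the strictly positive kernel vector $u := |H| \cdot \mathbf{1}_A$: setting $\delta_i := n_0 \cdot u$ for $i < k$ and $\delta_k := \delta - (k-1) n_0 u$ gives parts $\alpha_i := \alpha_i^{(0)} + \delta_i \in \PI(\mu^{-1}(h_i))$ summing to $\PI(C)$ and each satisfying $\alpha_i \geq n_0 \cdot \mathbf{1}_A$. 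Choosing $n_k$ on the order of $k \cdot n_0 \cdot |H|$ makes $\delta_k$ componentwise nonnegative. Applying the anagram fact to each pair $(\alpha_i, h_i)$ then produces the required $\gamma_i$ and completes the construction.

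The main obstacle is the anagram fact itself, which is a classical but nontrivial statement about products in finite groups and the only place where we really exploit that $H$ is a group rather than a monoid; the remainder of the argument is bookkeeping with Parikh vectors and the abelian quotient, and the key parameter to control is the final choice of $n_k$, which must be large enough so that the residue $\delta$ dominates both the padding $(k-1) n_0 u$ and the lower bound $n_0 \cdot \mathbf{1}_A$ needed to invoke the anagram fact inside the last part $\alpha_k$.
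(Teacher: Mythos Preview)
Your approach is correct in outline and genuinely different from the paper's. One point to flag first: your stated ``anagram fact'' is literally a tautology as written (the hypothesis $\alpha \in \PI(\mu^{-1}(h))$ is exactly the conclusion). What you clearly intend, and what your commutator discussion makes plain, is the non-trivial claim that for $\alpha \geq n_0 \cdot \mathbf{1}_A$ the set $\{\mu(w) : \PI(w) = \alpha\}$ is the \emph{entire} $[H,H]$-coset $\{h : \bar h = \bar\mu(\alpha)\}$. This is true, though your swap argument is only a sketch: a clean route is to note that for rich $\alpha,\beta$ one has $S(\alpha)S(\beta) = S(\alpha+\beta)$ with all three sets of equal size, deduce that each $S(\alpha)$ is a coset of a fixed normal subgroup $K \leq [H,H]$, and then check that $H/K$ is abelian (because $\mu(a)S(\alpha) = S(\alpha)\mu(a)$), forcing $K = [H,H]$.

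Given the corrected anagram fact, your Parikh-splitting argument goes through: the abelianization bookkeeping is right, the kernel vector $u = |H|\cdot\mathbf{1}_A$ does lie in $\ker\bar\mu$, and your bound $n_k = O(k\, n_0\, |H|)$ suffices. The paper takes a different route: it builds an auxiliary finite commutative monoid $N$ recognizing the commutative closures $\CCl(\mu^{-1}(g))$ (invoking a result of G\'omez--Guaiana--Pin that these closures are regular), introduces a relational morphism $\tau : N \to \mathcal{P}(H)$, and proves that $|\tau(p)|$ is constant on ``fully recurrent'' elements; the base case $k=1$ is then a pure counting argument ($T \subseteq \tau(\com_\mu(G))$ plus $|T| \geq |\tau(\com_\mu(C))| = |\tau(\com_\mu(G))|$), and higher $k$ is handled by induction, peeling off one $g_i$ at a time with a short realizing word. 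Your approach is more concrete and self-contained: it names the relevant subgroup as $[H,H]$ rather than hiding it inside $N$, avoids the external regularity result, and replaces the induction on $k$ by a single direct partition of $\PI(C)$. The paper's route, on the other hand, never needs to identify the stable set as a $[H,H]$-coset; the size equality alone is enough for its counting argument.

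Minor loose ends in your write-up: for $k=1$ your definitions of $h_1$ and $h_k$ overlap and should be merged into $h_1 = \mu(\tau_1)^{-1} g_1 \mu(\tau_2)^{-1}$; and when placing vertices incomparable to all of $C$ into $\tau_2$, you should note (as you partially do) that no edge can go from such a vertex into $\tau_1 \cup C$, so this placement is always legal.
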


Note that this result is not specific to the CSh-problem, and applies
to arbitrary DAGs. We now sketch its proof here; the complete proof is given
in Appendix~\ref{apx:antichain}:

\begin{proofsketch}
  We capture the ``commutative information'' contained in the Parikh image of
  the rich antichain as an element in a commutative monoid $N$ constructed from
  the commutative closure of~$H$. The elements that we can hope to reach with
  the antichain are then the images of this element of~$N$ by a so-called 
  \emph{relational morphism}~\cite{eilenberg1974} written $\tau: N \to \powerset{H}$. 
  Intuitively, for $n \in N$ capturing some ``commutative information'',
  $\tau(n)$ are the elements of~$H$ which correspond to this information.
  We then study the elements of~$N$ that use sufficiently many copies of each
  generator of~$N$, called the \emph{fully recurrent} elements, and show that
  their images by~$\tau$ all have the same cardinality. In other words, all
  antichains that are sufficiently rich can achieve the same number of elements
  of~$H$. This allows us to conclude, because making the antichain richer always
  allows us to reach more elements, so an antichain which is richer than this
  threshold always achieves the maximal possible number of elements.
\end{proofsketch}

\subparagraph*{Insertion lemma.}
We now turn to the \emph{insertion lemma}, which allows us to show that
we only need to insert group elements at a constant number of places.
More precisely, when we achieve a group element by interleaving two
sequences, we can always interleave them differently so that there are
constantly many insertions and still achieve the same element.

\newcommand{\lemramsey}{
  Let $H$ be a finite group and $\mu : A^* \to H$ be a surjective morphism.
  There exists a constant $B \in \NN$ such that,
  for any $n \in \NN$,
  for any $n$-tuple $w_1, \ldots, w_n$ of words of~$A^*$
  and $(n+1)$-tuple $w'_0, \ldots, w'_n$ of words 
  of~$A^*$,
  letting $u = w'_0 w_1 w_1' w_2 w_2' \cdots w_n w'_n$,
  there exists a set $J \subseteq \{0, \ldots, n\}$ of cardinality at most~$B$
  such that, letting $w''_j$ for $0 \leq j \leq n$ be $w'_j$ if $j \in J$ and
  the empty word otherwise, 
  letting
  $v = w''_0 w_1 w_1'' \cdots w_n w''_n$, we have $\mu(u) = \mu(v)$
  and $\mu(w'_0 \cdots w'_n) = \mu(w''_0 \cdots w''_n)$.
}
\begin{lemma}[(Insertion lemma)]
  \label{lem:ramsey}
  \lemramsey
\end{lemma}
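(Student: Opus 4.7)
The plan is to reduce the insertion lemma to a purely group-theoretic question about preserving products under deletion in~$H$, and then to dispatch it by iterated pigeonhole in the product group $H \times H$. For the reduction, I would set $h_i \colonequals \mu(w_i)$ for $1 \leq i \leq n$ and $p_j \colonequals \mu(w'_j)$ for $0 \leq j \leq n$, and define $c_j \colonequals p_j$ if $j \in J$ and $c_j \colonequals e$ otherwise (so that $\mu(w''_j) = c_j$, since the empty word maps to the identity~$e$ of~$H$). The requirement $\mu(w'_0 \cdots w'_n) = \mu(w''_0 \cdots w''_n)$ then translates immediately to $\prod_{j=0}^n p_j = \prod_{j \in J} p_j$ in~$H$, while $\mu(u) = \mu(v)$ becomes the identity $p_0 h_1 p_1 h_2 \cdots h_n p_n = c_0 h_1 c_1 h_2 \cdots h_n c_n$.

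Next, I would apply a conjugation trick to rewrite this latter identity as a deletion condition of the same form as the former. Setting $K_j \colonequals h_1 h_2 \cdots h_j$ (with $K_0 \colonequals e$) and $\tilde{p}_j \colonequals K_j p_j K_j^{-1}$, a straightforward induction on~$n$ yields \[ p_0 h_1 p_1 h_2 \cdots h_n p_n \;=\; \tilde{p}_0 \tilde{p}_1 \cdots \tilde{p}_n \cdot K_n, \] and the analogous identity holds with $\tilde{c}_j \colonequals K_j c_j K_j^{-1}$ in place of $\tilde{p}_j$, observing that $\tilde{c}_j = \tilde{p}_j$ if $j \in J$ and $\tilde{c}_j = e$ otherwise. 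Hence $\mu(u) = \mu(v)$ is equivalent to $\prod_{j=0}^n \tilde{p}_j = \prod_{j \in J} \tilde{p}_j$, and the lemma reduces to the following combinatorial statement: for any two sequences $(p_j)$ and $(\tilde{p}_j)$ of elements of~$H$ of length $n+1$, find a subset $J \subseteq \{0, \ldots, n\}$ of size bounded by a constant depending only on~$H$ such that both sequence products are preserved when restricted to~$J$.

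For the final step, I would iterate a pigeonhole argument in the product group $H \times H$. Maintain an index set $T \subseteq \{0, \ldots, n\}$, initially $\{0, \ldots, n\}$; writing $T = \{t_0 < t_1 < \cdots < t_{m-1}\}$, define $\phi_T(k) \colonequals (\prod_{i < k} p_{t_i},\; \prod_{i < k} \tilde{p}_{t_i}) \in H \times H$ for $0 \leq k \leq m$. As long as $m > |H|^2$, pigeonhole yields $k_1 < k_2$ with $\phi_T(k_1) = \phi_T(k_2)$, which forces the contiguous block $\{t_{k_1}, \ldots, t_{k_2 - 1}\}$ to have trivial product in both sequences and therefore to be safely removable from~$T$ while preserving both overall products; each such step strictly decreases $|T|$, so the process terminates with $|T| \leq |H|^2$, at which point we set $J \colonequals T$ and $B \colonequals |H|^2$, a constant depending only on~$H$. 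The main subtlety is the simultaneity of the two deletion conditions: it is precisely what forces the pigeonhole argument to take place in $H \times H$ rather than in~$H$ alone, and the conjugation trick is indispensable here because, without it, the condition $\mu(u) = \mu(v)$ does not obviously reduce to a deletion condition on any single sequence in~$H$.
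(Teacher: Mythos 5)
Your proof is correct, but it takes a genuinely different route from the paper's. The paper colors each edge $\{i,j\}$ of the complete graph on insertion positions with the triple $\bigl(\mu(w_i\cdots w_{j-1}),\ \mu(w_i w_i'\cdots w_{j-1}w'_{j-1}),\ \mu(w'_i\cdots w'_{j-1})\bigr)$, extracts a monochromatic triangle via Ramsey's theorem, uses group cancellation ($g=g^2$ implies $g=e$) to conclude that the corresponding factor has neutral image both with and without its insertions, and recurses (strong induction on~$n$) until fewer than the Ramsey number of positions remain; the constant $B$ is thus a Ramsey number for colorings by $H^3$. You instead flatten the interleaved product by conjugation, $\tilde p_j = (h_1\cdots h_j)\, p_j\, (h_1\cdots h_j)^{-1}$, which converts the condition $\mu(u)=\mu(v)$ into preservation of the ordered product of the single sequence $(\tilde p_j)$, and then run a pigeonhole argument on prefix products in $H\times H$ to repeatedly delete blocks that are contiguous within the current index set and have trivial product in both coordinates. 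Both arguments use in an essential way that $H$ is a group (cancellation in the paper; conjugation and cancellation in yours), but yours avoids Ramsey's theorem entirely, is simpler, and gives the explicit and far smaller bound $B=|H|^2$, whereas the Ramsey constant is astronomically larger. The one point worth making explicit is the invariant of your deletion loop — that the ordered products of $(p_j)_{j\in T}$ and $(\tilde p_j)_{j\in T}$ are unchanged at each step and hence equal the full products at termination — which you do indicate and which holds precisely because each removed block is contiguous in~$T$ and multiplies to the identity in both coordinates; with that noted, the argument is complete and fully correct.
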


We give a sketch of the result; the complete proof is presented in
Appendix~\ref{apx:ramsey}:

\begin{proofsketch}
  We reason on the complete graph of positions of the word $u$, coloring each
  edge by three group elements derived from the corresponding factor: the group
  element achieved when performing the insertions (from~$u$), the group element achieved
  when we do not perform them (from~$v$), and the group element achieved by the insertions
  on their own (from the $w_i'$).
  We then use Ramsey's theorem to extract a monochromatic triangle
  in this graph: we show that, in the factor spanned by this triangle, there is
  no difference between performing the insertions and not performing them. We
  can repeat this argument as long as the word has sufficiently many letters,
  so we reach a constant bound~$B$ which comes from Ramsey's theorem.
\end{proofsketch}

\subparagraph*{Putting the proof together.}
We are now ready to explain at a high level the rest of the proof of
Theorem~\ref{thm:group} in the case of group languages.
Let~$K$ be a group language on the alphabet $A = \{a_1, \ldots, a_k\}$. We let
$\mu: A^* \to H$ be the syntactic morphism of~$K$, where 
$H$ is a finite group generated by
the $\mu(a_i)$.
We consider an instance $I = (S_1, \ldots, S_n)$ to the CSh-problem,
where each $S_i$ is a string of vertices labeled with letters of the alphabet~$A$.
Let $B$ be the bound whose existence is shown in Lemma~\ref{lem:ramsey}, and,
using
Lemma~\ref{lem:groupper} for the value $k \colonequals B$,
let $R$ be the value of~$n_k$ given by this lemma. We will decompose $I$
following a \emph{rare--frequent partition}, which we now define:

\begin{definition}
  \label{def:rarefreq}
A \emph{rare--frequent} partition of $I$ consists of
a partition of $A$ into \emph{rare} letters $A_\rare$
and \emph{frequent} letters $A_\freq$, and a partition of the strings into
\emph{rare} strings $S_\rare$ and \emph{frequent} strings $S_\freq$,
where all vertices of~$S_\freq$ are labeled with letters of~$A_\freq$, and where
$S_\freq$, when seen as an subinstance of~$I$ over the alphabet~$A_\freq$,
contains an $R$-rich antichain.
\end{definition}

Note that, in a partition, rare strings may still contain arbitrarily many
frequent letters, and rare letters may still occur a unbounded number of times
overall in~$I$, as they can occur arbitrarily many times in each rare string.
We can then show the following:

\begin{lemma}
  \label{lem:rarefreq}
  For any fixed alphabet $A$ of size~$k$, given an input CSh-instance $I = (S_1, \ldots,
  S_n)$, we can compute a rare--frequent partition of $I$ in NL, represented as
  the partition $A_\freq \sqcup A_\rare$ of~$A$
  and the set of rare strings $S_\rare$, such that 
  $|S_\rare| \leq R \cdot k^2$.
\end{lemma}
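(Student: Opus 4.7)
The plan is to build $A_\rare$ and $S_\rare$ by a monotone fixpoint iteration. I would start with $A_\rare \colonequals \emptyset$ and $S_\rare \colonequals \emptyset$, and at each step, as long as some letter $a \in A_\freq$ appears in fewer than $Rk$ strings of $S_\freq$, I move $a$ to $A_\rare$ and add to $S_\rare$ every string of $S_\freq$ containing an occurrence of~$a$. The procedure terminates once every letter of~$A_\freq$ appears in at least $Rk$ strings of~$S_\freq$. Since each iteration removes at least one letter from~$A_\freq$, it stops after at most~$k$ iterations, and since each iteration adds strictly fewer than $Rk$ strings to~$S_\rare$, the final~$S_\rare$ has size strictly less than $Rk^2$.

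Next I would check that the terminal partition is valid, i.e., that $S_\freq$ viewed over the alphabet~$A_\freq$ contains an $R$-rich antichain. At termination, every $a \in A_\freq$ appears in at least $Rk$ distinct strings of~$S_\freq$. Writing $|A_\freq| = k' \leq k$, I would process the letters of $A_\freq$ one by one and greedily pick $R$ strings per letter while ensuring that the $Rk'$ chosen strings are pairwise distinct: when treating the $j$-th letter, at most $R(j-1) < Rk$ strings are already used, so at least $Rk - R(j-1) \geq R$ of the $\geq Rk$ strings containing the current letter remain available. Selecting from each chosen string one vertex labeled by the target letter then yields an antichain of~$S_\freq$ (at most one vertex per string) containing each letter of~$A_\freq$ at least~$R$ times, as required.

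For the NL complexity, I would exploit the fact that $A$ is fixed, so there are only $2^k = O(1)$ candidate subsets for~$A_\rare$. I would enumerate them in a canonical order; for each candidate, deterministically compute~$S_\rare$ by scanning the strings and keeping those containing some letter of~$A_\rare$, then check that $|S_\rare| \leq Rk^2$ and that every $a \in A_\freq$ appears in at least $Rk$ strings of~$S_\freq$ (all counters being capped at the constant $Rk^2$ and using only constant space), and finally output the first candidate that passes the tests together with its~$S_\rare$. The fixpoint argument above guarantees that at least one candidate succeeds. The only subtle point is the choice of the threshold $Rk$ rather than~$R$: this is what makes the greedy construction of an $R$-rich antichain go through, and it is precisely what yields the factor~$k^2$ in the stated bound on~$|S_\rare|$.
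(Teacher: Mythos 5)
Your proof is correct and follows essentially the same route as the paper: the same fixpoint/greedy existence argument with the threshold $R\cdot k$ (the paper's ``naive algorithm'' is exactly your iteration, and your non-overlapping greedy choice of $R$ strings per letter of~$A_\freq$ is the paper's antichain construction spelled out). The only difference is the logspace implementation, where the paper nondeterministically guesses $A_\rare$, $S_\rare$ and the $R$ witnessing occurrences and verifies them, whereas you deterministically enumerate the $2^k$ candidate subsets and take $S_\rare$ to be exactly the strings containing a rare letter --- a minor variation that is equally valid (and in fact places the computation in deterministic logspace).
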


\begin{proof}
  We first argue for the existence of a suitable rare--frequent partition by
  giving a naive algorithm to construct it, and then justify that we can do it
  in NL instead.

  The naive algorithm initializes $A_\rare = \emptyset$, $A_\freq = A$, $S_\rare
  = \emptyset$, $S_\freq = S$, and does the following until convergence: if a
  letter $a \in A_\freq$ occurs in less than $R\cdot k$ strings of~$S_\freq$, then
  remove $a$ from~$A_\freq$, add $a$ to~$A_\rare$, remove the $\leq R\cdot k$ strings
  that contain $a$ from~$S_\freq$, and add them to~$S_\rare$. As we perform the
  move operation at most once for each letter, it is immediate that the
  algorithm terminates, and that at the end there are at most $R \cdot k^2$ rare
  strings: now the definition of the algorithm clearly ensures that $S_\freq$
  cannot contain any letter of $A_\rare$ and that each letter of~$A_\freq$
  occurs in at least $R \cdot k$ different strings of~$S_\freq$. By picking $R$
  strings of~$S_\freq$ for each letter of~$A_\freq$ in a way that does not overlap, we see
  that $S_\freq$ contains an $R$-rich antichain for the alphabet $A_\freq$.
  Hence, a suitable rare--frequent partition exists.

  To construct the rare--frequent partition in NL, simply guess the partition
  $A_\rare \sqcup A_\freq$
  of~$A$, guess the set $S_\rare$ of rare strings of size $\leq R \cdot k^2$
  (which is constant), guess $R$ occurrences
  for each letter of~$A_\freq$, check that they are all in different strings and
  that they are not in strings of~$S_\rare$, and check that the strings which
  are not in $S_\rare$ contain only frequent letters.
\end{proof}

Hence, we assume that we have computed in NL a rare--frequent partition of~$I$,
given by $A_\rare$, $A_\freq$, $S_\rare$, and (implicitly) $S_\freq$.
We 
write $H_\freq$ for the subgroup of~$H$ equal to $\mu(A_\freq^*)$, i.e., the
subgroup spanned by $A_\freq$.
We can now sketch the remainder of the proof of
Theorem~\ref{thm:group}:

\begin{proofsketch}
  Our goal is to determine whether $I$ has some topological sort in~$K$. We
  relabel all elements of~$I$ with their image in~$H$ by~$\mu$, and equivalently
  test whether $I$ has a topological sort achieving a target group element
  $g\in H$. We do so by an NL algorithm: we perform the analogue of 
  Proposition~\ref{prp:dynamic} on the rare strings $S_\rare$, with some
  insertions of a constant number of elements from~$H_\freq$ which
  respect the constraints on the Parikh image (again formalized via
  the notion of relational morphisms). 
  To show correctness, we rely on the antichain lemma (Lemma~\ref{lem:groupper})
  to argue that any such pattern of insertions can indeed be performed
  using $S_\freq$, thanks to the rich antichain that it contains.
  To show completeness, we rely on the insertion lemma
  (Lemma~\ref{lem:ramsey}) to argue that any topological sort achieving an
  element of~$H$ can indeed be rewritten to an equivalent one where we only
  perform constantly many insertions.
\end{proofsketch}

In the rest of the appendix, we first prove the antichain lemma in 
Appendix~\ref{apx:antichain}, and then prove the insertion lemma in
Appendix~\ref{apx:ramsey}. We then complete our presentation of the 
proof of Theorem~\ref{thm:group} for group languages in 
Appendix~\ref{apx:groupproof}, using the two lemmas and some of the notions
introduced in Appendices~\ref{apx:antichain} and~\ref{apx:ramsey}. Last, we extend the proof
to district group monomials in Appendix~\ref{apx:district}.

  \subsection{Proof of Lemma~\ref{lem:groupper}: Antichain Lemma}
  \label{apx:antichain}

  To prove the antichain lemma, let us fix the finite group $H$ and morphism
  $\mu$.
Recall the definition of the Parikh image (Definition~\ref{def:parikh}), and let
us define the \emph{commutative closure} $\CCl(K)$ of a regular language~$K$ as
$\PI^{-1}(\PI(K))$, where $\PI$ denotes the
  Parikh image (Definition~\ref{def:parikh}).
  Remark
that, for any element $g \in H$, the inverse image $\mu^{-1}(g)$ is a group
language. Relying on some more standard notions from algebraic language
  theory, we will say that a language $K$ is \emph{recognized} by the
  morphism~$\mu$ if there exists $P \subseteq H$ such that $K = \mu^{-1}(P)$. We
  will also talk about the \emph{syntactic monoid} of~$K$, which is the 
  transition monoid of the minimal automaton which recognizes~$K$.
  
  We will use the following result on the group languages defined as
  $\mu^{-1}(g)$ for~$g\in H$:

\begin{lemma}[(\cite{GomezGP08}, Theorem 3.1)]
  \label{lem:groupparikh}
  The commutative closure of a group language is regular.
\end{lemma}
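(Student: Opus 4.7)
The plan is to use the abelianization of $H$ to define a recognizing morphism into a finite monoid, leveraging the antichain lemma to handle ``generic'' Parikh images and finite case analysis for boundary cases.

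First, I would reduce to the case $K = \mu^{-1}(g)$ for a single $g \in H$: group languages are closed under finite union, and commutative closure distributes over finite unions. I would then compose $\mu$ with the abelianization $\pi \colon H \to H/[H,H]$. Since $H/[H,H]$ is finite and abelian, $\pi \circ \mu$ depends only on the Parikh image and factors as $\bar\mu \circ \PI$ for some morphism $\bar\mu \colon \NN^k \to H/[H,H]$. This yields the easy inclusion $\CCl(K) \subseteq (\pi \circ \mu)^{-1}(\pi(g))$, a regular condition recognized by the finite abelian group $H/[H,H]$, and also shows that for any Parikh image $n$ the set $G_n \colonequals \{\mu(w) : \PI(w) = n\}$ is contained in the single coset $\pi^{-1}(\bar\mu(n))$.

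Second, I would apply Lemma~\ref{lem:groupper} to obtain a threshold $N$ such that, whenever $n \in \NN^k$ has all coordinates $\geq N$, the set $G_n$ equals the \emph{full} coset $\pi^{-1}(\bar\mu(n))$. The group-theoretic intuition is that swapping two adjacent letters in a word multiplies its $\mu$-image by a conjugate commutator, and with sufficiently many copies of each generator one can synthesize every element of $[H,H]$, giving the entire coset of freedom. For such ``large'' Parikh images, $n \in \PI(K)$ thus becomes equivalent to $\bar\mu(n) = \pi(g)$, and the corresponding slice of $\CCl(K)$ is the regular language $\{w : \forall i,\, |w|_{a_i} \geq N\} \cap (\pi \circ \mu)^{-1}(\pi(g))$.

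Third, I would handle the boundary cases by induction on the size of the alphabet. For each letter $a \in A$ and each count $c \in \{0, 1, \ldots, N-1\}$, consider the words $w$ with $|w|_a = c$. When $c = 0$, the word lies in $(A \setminus \{a\})^*$ and membership in $\CCl(K)$ reduces to $\CCl(K')$ for the restricted group language $K' \colonequals K \cap (A \setminus \{a\})^*$ over the smaller alphabet, regular by the inductive hypothesis. For $1 \leq c < N$, the $c$ rare occurrences of $a$ split $w$ into $c+1$ chunks over $A \setminus \{a\}$, and applying Lemma~\ref{lem:groupper} with $k = c+1$ jointly to those chunks shows that, once the rare count $c$ is fixed, admissibility depends only on a Parikh-image condition over $A \setminus \{a\}$, again regular by induction. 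Taking the finite union of the ``all large'' piece and all boundary pieces produces $\CCl(K)$ as a finite union of regular languages, hence regular. The main obstacle is the second step: upgrading Lemma~\ref{lem:groupper} to the statement that $G_n$ attains the full coset requires controlling commutator generation in $[H,H]$, and this is precisely where the ``fully recurrent'' analysis from the proof of Lemma~\ref{lem:groupper} must be invoked.
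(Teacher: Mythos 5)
There is a genuine gap, and it is structural. First, note that the paper does not prove this lemma at all: it is imported as a black box from \cite{GomezGP08} (Theorem~3.1), precisely so that the finite commutative monoid $N$ recognizing the closures $\CCl(\mu^{-1}(g))$ can be constructed and used in the proof of the antichain lemma. Your argument runs this dependency backwards: both your step~2 and your boundary analysis in step~3 invoke Lemma~\ref{lem:groupper} (and, explicitly, its ``fully recurrent'' analysis), but in this paper the proof of Lemma~\ref{lem:groupper} rests on the relational morphism $\tau:N\relto H$ where $N$ is finite \emph{because of} Lemma~\ref{lem:groupparikh}. So appealing to the antichain lemma, or to its internal machinery, to establish Lemma~\ref{lem:groupparikh} is circular.

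Even setting circularity aside, Lemma~\ref{lem:groupper} does not deliver what your step~2 needs. Its hypothesis is $\PI(G)\in\PI(\mu^{-1}(g_1\cdots g_k))$, i.e., it already presupposes that \emph{some word with the given Parikh image} evaluates to the target; its conclusion only says that a DAG with a sufficiently rich antichain admits a topological sort realizing it. It cannot upgrade the abelianization condition $\bar\mu(n)=\pi(g)$ into the statement that $G_n\colonequals\{\mu(w):\PI(w)=n\}$ fills the whole coset $\pi^{-1}(\bar\mu(n))$ for large $n$ --- which is exactly the content you need, and which you yourself flag as ``the main obstacle'' without supplying an argument. The claim is in fact true and admits an elementary proof independent of the paper's machinery: for each $x,y\in H$ pick words $u_x,u_y$ with $\mu(u_x)=x$, $\mu(u_y)=y$, and observe that $u_x^{\mathrm{ord}(x)-1}u_y^{\mathrm{ord}(y)-1}u_xu_y$ evaluates to $[x,y]$ while having the same Parikh image as $u_x^{\mathrm{ord}(x)}u_y^{\mathrm{ord}(y)}$, which evaluates to the identity; concatenating sufficiently many such blocks (enough rounds over all pairs) yields a \emph{fixed} multiset $m$ with trivial abelianized image whose reorderings realize every element of $[H,H]$, and then $G_n\supseteq G_{n-m}\cdot G_m$ shows $G_n$ is the full coset for every $n\geq m$ coordinatewise. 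With such an argument substituted for the appeal to Lemma~\ref{lem:groupper} (and with the step-3 details made precise --- the finite union over tuples $(h_0,\ldots,h_c)$ of images of the chunks, plus the fact that Minkowski sums of recognizable subsets of $\NN^k$ are recognizable --- rather than another appeal to the antichain lemma), your plan could be repaired; as written, it is not a proof.
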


Remark that this result does not hold for the commutative closure of arbitrary
regular languages (e.g., $(ab)^*$), and that the commutative closure of a group
language is not necessarily a group language
(see
\cite{GomezGP08} for a counterexample).
Let us accordingly define a finite monoid~$N$, and let $\com_\mu:A^*\to N$ be a surjective morphism 
such that,
for each $g\in H$, the morphism~$\com_\mu$ recognizes $\CCl(\mu^{-1}(g))$.
We can construct $N$, for instance, by taking the direct product of the
syntactic monoids recognizing the commutative closure of each $\mu^{-1}(g)$,
using Lemma~\ref{lem:groupparikh}. Further, 
thanks to commutativity, we can choose $N$ to be a finite commutative monoid.
Let $\omega$ be a positive 
\emph{idempotent power} of~$N$, that is, a value $\omega \in \NN \setminus
\{0\}$ such that we have $p^{2\omega} = p^\omega$ for
every $p \in N$. (Such an idempotent power exists: indeed, for every $p$ in $N$,
there exists $k$ such that $p^k=p^{2k}$, and we can take $\omega$ to be the
least common multiple of the idempotent powers of all elements of~$N$.)

To characterize the ``commutative information'' of elements of~$H$, we will
study the connection between $H$ and the commutative monoid~$N$. We will do so
using relational morphisms.
A \emph{relational morphism}~\cite{eilenberg1974} between two monoids $M$ and
  $M'$ is a map from $M$ to 
the powerset $\mathcal{P}(M')$ of~$M'$, such that
for all $m\in M$ we have $\tau(m)\neq \emptyset$, and 
for all $m, m'\in M$, we have $\tau(m)\cdot\tau(m')\subseteq\tau(m m')$, 
where we extend the product operator of~$M'$ to the powerset monoid of~$M'$ in the expected
way, that is, $S \cdot S' = \{g \cdot g' \mid g \in S, g' \in S'\}$.
For any surjective morphism $\eta:A^*\to M$ and morphism $\mu:A^*\to M'$, the map $m\mapsto \mu(\eta^{-1}(m))$ is a relational
morphism. We write $\tau:M\relto M'$ if $\tau$ is a relational morphism between
  $M$ and $M'$.

We can now introduce the crucial notion of \emph{fully recurrent} elements for
our purposes, which will formalize the connection to rich antichains.
An element $p$ of a commutative monoid $N$
is said to be \emph{fully recurrent} if
  there exists a generator $S$ of~$N$ and \emph{positive} integers $r_1, \ldots, r_n$
such that $p = s_1^{r_1} \cdots s_n^{r_n}$, where $n = |S|$, and $r_i \geq
\omega$ for all $1 \leq i \leq n$.

The notion of fully recurrent elements is motivated by the following lemma:

\begin{lemma}
  \label{lem:sizeindep}
  Let $\tau:N\relto H$ be any relational morphism from a commutative monoid to a finite group.
  For any fully recurrent elements $p$ and $q$ of $N$, the sets  $\tau(p)$ and $\tau(q)$ have the same size.
\end{lemma}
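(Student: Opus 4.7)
The plan is to leverage the group structure of $H$, in which right-multiplication by a fixed element is a bijection and therefore preserves cardinalities of subsets. To make this useful, I first need to exhibit enough ``group-like'' structure on the fully recurrent elements of $N$ so that any one can be obtained from any other by multiplication with a fully recurrent element, and then transport this via~$\tau$.

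The key step is to show that the set of fully recurrent elements of $N$ is itself a group under the monoid operation. Fix a generating set $\{s_1,\ldots,s_n\}$ of~$N$ and let $e \colonequals (s_1\cdots s_n)^\omega$; by commutativity, $e = s_1^\omega \cdots s_n^\omega$, and $e$ is idempotent. The periodicity relation $s_i^{r+\omega} = s_i^r$ for every $r\geq \omega$ (immediate from $s_i^{2\omega}=s_i^\omega$) gives $pe=p$ for every fully recurrent $p$, so the fully recurrent elements are exactly the elements of $eN$ and $e$ is an identity for this set. For inverses, given $p = s_1^{r_1}\cdots s_n^{r_n}$ with each $r_i \geq \omega$, I would first use the same periodicity to normalize each $r_i$ into $[\omega,2\omega-1]$, and then set $p^{-1} \colonequals s_1^{3\omega - r_1}\cdots s_n^{3\omega - r_n}$: each exponent is at least $\omega+1$, so $p^{-1}$ is fully recurrent, and $p\cdot p^{-1} = (s_1\cdots s_n)^{3\omega} = e^3 = e$.

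Given this group structure on $eN$, the conclusion is a short relational-morphism computation. Given fully recurrent $p,q\in N$, set $r \colonequals p^{-1}q$ in $eN$, which is again fully recurrent. Because $\tau$ is a relational morphism and $p\cdot r = eq = q$, we have $\tau(p)\cdot\tau(r)\subseteq\tau(q)$. Picking any $h\in\tau(r)$, nonempty by definition of relational morphism, yields $\tau(p)\cdot h \subseteq\tau(q)$, and since $H$ is a group the map $g\mapsto gh$ is a bijection of~$H$, hence injective on $\tau(p)$, so $|\tau(p)|\leq|\tau(q)|$. The symmetric argument with $q^{-1}p\in eN$ gives the reverse inequality, and the two bounds together prove the lemma.

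The main obstacle is the first step, i.e., verifying that $eN$ is genuinely a group. This uses crucially both the commutativity of $N$ (so that $e$ factors as $\prod s_i^\omega$ and so that the exponent bookkeeping in the inverse construction works termwise) and the precise definition of \emph{fully recurrent} requiring \emph{all} generator exponents to be simultaneously at least $\omega$: this is exactly what ensures that the candidate inverse $p^{-1}$ stays inside $eN$, so that $\tau(p^{-1}q)$ is nonempty and the cardinality argument goes through.
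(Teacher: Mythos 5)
Your proof is correct, but it takes a genuinely different route from the paper's. The paper fixes a fully recurrent element $r$ and studies the sequence $x_i = |\tau(r^i)|$: it is nondecreasing because $\tau(r^i)\tau(r)\subseteq\tau(r^{i+1})$ and any single element of $\tau(r)$ acts injectively on $H$, and it returns to $x_1$ at arbitrarily large indices because $r^{\omega+1}=r$ (from $s_i^{r_i+\omega}=s_i^{r_i}$), hence it is constant; the lemma then follows from the observation that $p^\omega = q^\omega = s_1^\omega\cdots s_n^\omega$ for fully recurrent $p,q$. You instead make the underlying structure explicit: with $e = s_1^\omega\cdots s_n^\omega$ you show the fully recurrent elements are exactly $eN$ and form a group with identity $e$ (your explicit inverse $s_1^{3\omega-r_1}\cdots s_n^{3\omega-r_n}$ after normalizing exponents into $[\omega,2\omega-1]$ is valid, since $s_i^{3\omega}=s_i^\omega$), and then the two inequalities $|\tau(p)|\leq|\tau(q)|$ and $|\tau(q)|\leq|\tau(p)|$ follow symmetrically by translating by one element of $\tau(p^{-1}q)$ and of $\tau(q^{-1}p)$. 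Both arguments use the same two ingredients (the periodicity $s_i^{r+\omega}=s_i^r$ for $r\geq\omega$, and bijectivity of translations in the group $H$), but your version buys a cleaner one-step sandwich argument and, via the characterization ``fully recurrent $=eN$'', also dispenses with the existential quantification over generating sets in the definition, which the paper's step $p^\omega=q^\omega$ implicitly glosses over by writing $p$ and $q$ over the same generating set; the paper's version, in exchange, never needs to construct inverses or verify a group structure inside $N$.
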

\begin{proof}
  We will show the result using the following claim (*):
  for any fully recurrent element~$r$, we have
  $|\tau(r)|=|\tau(r^i)|$ for any $i \geq 1$. This suffices to conclude the
  lemma, because for any fully recurrent elements $p$ and $q$, we have $p^\omega
  = q^\omega$. Indeed, writing $p = s_1^{r_1} \cdots s_n^{r_n}$, we have
  $p^\omega = (s_1^{\omega})^{r_1} \cdots (s_n^{\omega})^{r_n} = s_1^\omega
  \cdots s_n^\omega$, and similarly for~$q$. This allows us to conclude from (*)
  because we have 
  $|\tau(p)| = |\tau(p^\omega)| = |\tau(q^\omega)| = |\tau(q)|$.

  So we simply show claim (*). Let $r$ be a fully recurrent element, and let us
  study the sequence $(x_i)$ defined by $x_i \colonequals
  |\tau(r^i)|$ for all $i \geq 1$. We must show that the sequence $(x_i)$ is
  constant. We do this in two parts: (i) we show that it is nondecreasing, and
  (ii) we show 
  that there are arbitrary large $b \in \NN$ such that $x_b = x_1$. Parts (i)
  and (ii)
  clearly imply that the sequence is constant, which establishes (*).
  
  For part (i), we show that $|\tau(r^i)| \leq |\tau(r^{i+1})|$
  for all $i \geq 1$.
  By definition of relational morphisms, we have $\tau(r^i) \tau(r) \subseteq
  \tau(r^{i+1})$.
  Now, remembering that the empty set is
  not in the image of a relational morphism, pick any $x \in \tau(r)$. 
  We know that $\tau(r^i) \cdot \{x\} \subseteq \tau(r^i) \tau(r)$. Now, as $x \in H$ and $H$
  is a group, we know that $H$ acts bijectively on any subset of~$H$, in
  particular $\tau(r)$, hence $|\tau(r^i)| = |\tau(r^i) \cdot \{x\}| \leq
  |\tau(r^i) \tau(r)| \leq |\tau(r^{i+1})|$. This shows part (i).

  We now show part (ii). To do so, let us show first that $r^{\omega+1} = r$.
  Indeed, write $r = s_1^{r_1} \cdots s_n^{r_n}$, and we simply conclude using
  the fact that $s_i^{r_i + \omega} = s_i^{r_i - \omega}
  (s_i^{\omega})^2 = s_i^{r_i - \omega} s_i^\omega = s_i^{r_i}$.
  This implies that we have $r^{j\omega + 1} = (r^\omega)^j r = r^\omega r = r$,
  for any $j \geq 0$.
  As $\omega \geq 1$, there are arbitrarily large values of $j\omega$, so this concludes part~(ii)
  and we have established claim (*), which finishes the proof.
\end{proof}

We are now ready to show the antichain lemma (Lemma~\ref{lem:groupper}). Recall
its statement:

\begin{quote}
  \textsf{\textbf{Lemma~\ref{lem:groupper}}}: \lemgroupper
\end{quote}

\begin{proof}[Proof of Lemma~\ref{lem:groupper}]
  Fix the finite group $H$, and let $\mu:A^*\to H$ be the surjective morphism.
  We fix $\gamma = \max_{g \in H} \min_{u \in \mu^{-1}(g)} |u|$: this value is
  well-defined because $\mu$ is surjective, and is finite because $H$ is finite.
  Let $\com_\mu:A^*\to N$ be 
  the surjective morphism defined as before, where $N$ is a commutative monoid,
  and let $\omega$ be the {idempotent power} of~$N$.
  Finally, let $\tau:N\relto H$
  be the relational morphism defined by $\tau(x) = \mu(\com_\mu^{-1}(x))$.
  Observe that the Parikh image assumption
  on the input $A$-DAG~$G$ and on the $g_1, \ldots, g_k$ in the statement
  of the lemma is equivalent to $\com_\mu(G) \in \com_\mu(\mu^{-1}(g_1 \cdots
  g_k))$. Indeed, the forward implication is immediate, and the converse holds
  because $\com_\mu$ recognizes $\CCl(\mu^{-1}(g_1\cdots
  g_k))$, so the rephrased condition implies that $\CCl(G) \in \CCl(\mu^{-1}(g_1
  \cdots g_k))$, which clearly implies the original condition.
  Further, by composing with~$\tau$ and simplifying using the definition
  of $\tau$, the condition rephrases to $g_1 \cdots g_k \in \tau(\com_\mu(G))$.
  We will use this equivalent rephrased condition throughout the proof.

  Let us now show the result by induction on~$k > 0$.
  For every $k$, we will choose
  $n_k \colonequals \omega + (k-1) \gamma$. 
  Let us first show the base case for $k = 1$ and $n_k = \omega$.
  Let $G = (V, E, \lfun)$ be the input $A$-DAG to the CTS-problem, and let us study the set
  $T=\{\mu(\lfun(\sigma))\mid \sigma\text{ is a topological
  sort of }G\}$. Remembering that all topological sorts of~$G$ have the same
  Parikh image, namely, $\PI(G)$, we know from the commutativity of~$N$ that all
  topological sorts of~$G$ have the same image by~$\com_\mu$, namely,
  $\com_\mu(G)$. Hence, $T$ is included in $\tau(\com_\mu(G))$.
  Our goal is to show that, when $G$ has a $\omega$-rich antichain, we have $T
  = \tau(\com_\mu(G))$. Indeed, in this case, we know that, for any $g_1$ such
  that $\PI(G) \in \PI(\mu^{-1}(g_1))$, we have $g_1 \in \tau(\com_\mu(G))$ as
  we explained above, so $g_1 \in T$ and there is a topological sort $\sigma
  \colonequals \sigma_1$
  of~$G$ such that $\mu(\lfun(\sigma_1)) = g_1$. So all that remains to show for the base
  case is that $T = \tau(\com_\mu(G))$.

  Let $C$ be a $\omega$-rich antichain of~$G$. For simplicity, let us make $C$
  \emph{maximal}: whenever some vertex $x$ of~$G$ is not in~$C$ but is
  incomparable to all vertices of~$C$, we add it to~$C$. We choose the vertices
  arbitrarily. At the end of the process, $C$ is still an antichain, and it is still
  $\omega$-rich. Further, we can partition $G$ as $G^- \sqcup C \sqcup G^+$,
  where $G^-$ contains all vertices having a directed path of positive length to
  a vertex of~$C$, and $G^+$ contains all vertices having a directed path of
  positive length from a vertex of~$C$.
  To see why this is a partition, observe that it covers $G$ because any
  counterexample vertex $x$ would contradict the maximality of~$C$. Further, $C$
  is disjoint from $G^+$, and from $G^-$, because it is an antichain, and $G^+$
  and $G^-$ are disjoint: any element in $G^+ \cap G^-$ would witness by
  transitivity a path from an element of~$C$ to an element of~$C$, contradicting
  the fact that $C$ is an antichain. 

  Let $\sigma^-$ and $\sigma^+$ be arbitrary topological sorts of~$G^-$ and $G^+$
  respectively. Our chosen partition ensures that we can build a topological
  sort of~$G$ as $\sigma^-, \sigma, \sigma^+$ where $\sigma$ is a topological sort of~$C$. Hence,
  $T'\colonequals \mu(\sigma^-)\cdot \tau(\com_\mu(C))\cdot \mu(\sigma^+)$
  is a subset of $T$, so $|T'| \leq |T|$.
  Let us now write
  $s_a\colonequals \com_\mu(a)$
  for each letter $a\in A$.
  We can write $\com_\mu(C) = \Pi_{a\in A} s_a^{i_a}$,
  where $i_a$ is the number of vertices labeled by~$a$ in~$C$. As $C$ is
  $\omega$-rich, we have $i_a \geq \omega$. Thus,
  $\com_\mu(C)$ is fully recurrent by definition. Now, it is clear that
  $\com_\mu(G)$ is also fully recurrent, because $G$ is $\omega$-rich also.
  Thus, by Lemma~\ref{lem:sizeindep}, we have $|\tau(\com_\mu(C))| = 
  |\tau(\com_\mu(G))|$. Now,
  we know that $\mu(\sigma^-)$ (resp.\ $\mu(\sigma^+)$) act bijectively on the
  left (resp.\ right) of~$H$, so we also have $|\tau(\com_\mu(C))| = |T'|$. We have
  thus shown that $|\tau(\com_\mu(G))| = |T'| \leq |T|$.
  As $T \subseteq \tau(\com_\mu(G))$, we deduce that $T = \tau(\com_\mu(G))$.
  As we have argued, this concludes the proof of the base case $k = 1$.

  \medskip

  We now prove the inductive step.  
  Assume the property holds for $k > 0$. Let $G$ be an instance of the 
  CTS-problem
  that has a $n_{k+1}$-rich antichain: as in the base case we expand
  it to a maximal such antichain, denote it by $C$, partition $G$ as $G^-
  \sqcup C \sqcup G^+$, and let $\sigma^-$ and $\sigma^+$ be arbitrary topological sorts
  of~$G^-$ and~$G^+$ respectively.
  Let us choose elements $g_1,\ldots,g_{k+1}$ of~$H$ such that
  $g_1\cdots g_{k+1} \in \tau(\com_\mu(G))$: remember that this implies that $g_1
  \cdots g_{k+1} \in \tau(\com_\mu(G))$.

  Now, let us consider $g' \colonequals g_{k+1}\cdot\mu(\sigma^+)^{-1}$.
  Let $u_{g'} \in A^*$ be a word that realises the minimum in the definition of~$\gamma$,
  and let $C_{g'}$ be a subset of~$C$ whose
  elements are labeled with the letters of~$u_{g'}$.
  As $C$ is
  $n_{k+1}$-rich, we can find such a subset, and further $C
  \setminus C_{g'}$ is still a $((k-1)\gamma+\omega)$-rich antichain, i.e., an
  $n_k$-rich antichain.
  Further, the definition of $C_{g'}$ ensures that it has a topological sort 
  $\sigma'$ that realizes the word $u_{g'}$, so that $\mu(\sigma') = g'$.
  By composing it with $\sigma^+$, we can then construct $\sigma' \sigma^+$, which is a
  topological sort of $G'' \sqcup C_{g'} \sqcup G^+$ such that
  $\mu(\lfun(\sigma' \sigma^+)) = g_{k+1}$.

  We now wish to apply the induction hypothesis for $g_1, \ldots, g_k$ on the 
  subinstance $G' \colonequals G^- \sqcup (C \setminus C_{g'})$, which still has a $n_k$-rich
  antichain.
  To do so, we must check that $\PI(G') \in \PI(\mu^{-1}(g_1 \cdots g_k))$,
  which as we argued is equivalent to $g_1 \cdots g_k \in \tau(\com_\mu(G'))$.
  As $G$ is the disjoint union of $G'$ and $G''$, we have $\com_\mu(G) = \com_\mu(G') \com_\mu(G'')$, so by composing
  by~$\tau$ and applying the definition of a relational morphism we have:
  \[
    \tau(\com_\mu(G)) \subseteq \tau(\com_\mu(G') \com_\mu(G''))
  \]
  Now, as both $G$ and $G'$ contain an antichain which is at least
  $\omega$-rich,
  we know that $\com_\mu(G)$ and $\com_\mu(G)$ are fully recurrent.
  By applying Lemma~\ref{lem:sizeindep} again, we know that
  $|\tau(\com_\mu(G))| = |\tau(\com_\mu(G')|$.
  Remember now that $\sigma' \sigma^+$ is a topological sort of~$G''$ 
  such that $\mu(\lfun(\sigma' \sigma^+)) = g_{k+1}$. Hence, $g_{k+1} \in \tau(\com_\mu(G''))$. Now, as $g_{k+1}$ acts
  bijectively on $\tau(\com_\mu(G'))$ in the group~$H$, we deduce that
  $\tau(\com_\mu(G)) = \tau(\com_\mu(G')) g_{k+1}$. Now, since we have $g_1
  \cdots g_{k+1} \in \tau(\com_\mu(G))$ by hypothesis, we deduce that indeed
  $g_1 \cdots g_k \in \tau(\com_\mu(G'))$, so we can apply the induction
  hypothesis.

  Hence, we do so and obtain a 
  topological sort $\sigma_1, \ldots, \sigma_k$ of~$G'$ such that
  $\mu(\lfun(\sigma_i))
  = g_i$ for each $i \in \{1, \ldots, k\}$. Now, letting $\sigma_{k+1} \colonequals
  \sigma' \sigma^+$, it is clear that $\sigma_1, \ldots, \sigma_k, \sigma_{k+1}$
  is a topological sort
  of~$G$, and we have $\mu(\lfun(\sigma' \sigma^+)) = g_{k+1}$, so we have shown the induction
  hypothesis. This concludes the proof.
\end{proof}

  \subsection{Proof of Lemma~\ref{lem:ramsey}: Insertion Lemma}
  \label{apx:ramsey}
  We now prove the insertion lemma (Lemma~\ref{lem:ramsey}). Recall its
  statement:

\begin{quote}
  \textsf{\textbf{Lemma~\ref{lem:ramsey}}}: \lemramsey
\end{quote}

\begin{proof}[Proof of Lemma~\ref{lem:ramsey}]
  Fix the alphabet $A$, the morphism $\mu$, and the group~$H$. By Ramsey's
  theorem, there exists a constant $B$ such that, for any complete graph
  $\Gamma$ whose edges are labeled with triples of elements of~$H$, if $\Gamma$
  has at least $B$ vertices, then it contains a monochromatic triangle, that
  is, three vertices $v_1, v_2, v_3$ such that the edges $\{v_1, v_2\}$,
  $\{v_2, v_3\}$, and $\{v_1, v_3\}$ are labeled by the same triple of elements
  of~$H$.

  Let us now show the rest of the claim by
  strong induction on $n \in \NN$. The base
  case of the induction is when $n < B$, and in this case there is nothing to
  show: we can simply take $J = \{0, \ldots, n\}$ which achieves the cardinality
  bound, and then we have $u = v$ so clearly $\mu(u) = \mu(v)$.

  Let us now show the induction step. We take an arbitrary $n \in \NN$ with $n
  \geq B$, assume that the result is true for all smaller~$n$, and show the
  result for~$n$. Fix the words $w_i$ and $w_i'$. Now, let us construct the
  complete graph $\Gamma$ with $n$ vertices $v_1, \ldots, v_n$ 
  and with edges colored by triples of elements of~$H$ in the following way: the
  edge between $v_i$ and $v_j$ for $i < j$ is colored with the triple $(g_{i,j},
  g'_{i,j}, g''_{i,j})$, where we define
  $g_{i,j} \colonequals \mu(w_i \ldots w_{j-1}))$,
  $g'_{i,j} \colonequals \mu(w_i w_i' \cdots w_{j-1} w'_{j-1})$,
  and
  $g''_{i,j} \colonequals \mu(w_i' \cdots w'_{j-1})$.

  Now, by Ramsey's theorem, as $\Gamma$ has more than $B$ vertices, it has a
  monochromatic triangle. This implies that there are $1 \leq l < m < r \leq n$
  such that $g_{l,m} = g_{m,r} = g_{l,r}$, and $g'_{l,m} = g'_{m,r} = g'_{l,r}$. 
  Now, as by definition we have $g_{l,r} = g_{l,m} g_{l,r}$, this means that
  we have $g_{l,r} = g_{l,r}^2$, and as $H$ is a group we can simplify and
  deduce that $g_{l,r} = e$, the neutral element of~$H$. We deduce in the same
  way that $g_{l,r}' = e$. Hence, we have shown $g_{l,r} = g'_{l,r}$, which
  means that (*): $\mu(w_l w_l' \cdots w_{r-1} w_{r-1}') = \mu(w_l \cdots w_{r-1})$.
  Further, we deduce in the same way that (**) $g''_{l,r} = e$.

  We will now conclude using the induction hypothesis. Let $n' = n - (r - l)$,
  and consider the $n'$-tuple $w_1, \ldots, w_{l-1}, (w_l \cdots w_{r-1}), w_r,
  \cdots, w_n$ of words of~$A^*$, and the $(n'+1)$-tuple $w'_0, \ldots,
  w'_{l-1}, w'_r, \ldots, w'_n$. Using the induction hypothesis for~$n'$, we
  deduce the existence of $J' \subseteq \{0, \ldots, n'\}$ of cardinality at
  most~$B$ such that,
  defining $w'''_j$ for all $0 \leq j \leq n'$ as
  the empty word if $j \notin J$, as $w_j$ if $j \in J'$ and $j < l$, and as
  $w_{j + (r - l)}$ if $j \in J$ and $j \geq l$,
  letting 
  \begin{align*}u' \colonequals & w'_0 w_1 w_1' \cdots w_{l-1} w'_{l-1}
  (w_l \cdots w_{r-1}) w_r w_r' \cdots w_n w_n'\\
    v' \colonequals & w'''_0 w_1 w_1'' \cdots w_{l-1} w'''_{l-1}
  (w_l \cdots w_{r-1}) w_r w_r'' \cdots w_n w_n''
  \end{align*}
  we have $\mu(u') = \mu(v')$, and we have (***) $\mu(w'_0 \cdots w'_{l-1} w'_r \cdots
  w'_n) = \mu(w'''_0 \cdots w'''_n)$.
  Let us accordingly define $J \subseteq \{0, \ldots, n\}$
  by $\{j \mid j \in J, j < l\} \sqcup \{j + (r - l) \mid j \in J, j \geq l\}$,
  which satisfies the cardinality bound. Let us show that $\mu(u) = \mu(v)$ and
  $\mu(w_0 \cdots w_n) = \mu(w'''_0 \cdots w'''_n)$ with
  $v$ and the $w''_i$ defined from this choice of~$J$.
  From the equality (*), we know that we can replace $(w_l \cdots w_{r-1})$
  by $(w_l w_l' \cdots w_{r-1} w_{r-1}')$ in $u'$ without changing its image
  by~$\mu$, so we have $\mu(u) = \mu(u')$. Second, from the fact that $J$
  does not contain any element in $\{l, \ldots, r-1\}$, we know that $w''_j$ is
  empty for all $j \in \{l, \ldots, r-1\}$, so we have $w_l \cdots
  w_{r-1} = w_l w_l'' \cdots w_{r-1} w''_{r-1})$; further, from this and our
  definition of $J$, we observe that $v = v'$, hence $\mu(v) = \mu(v')$. We thus
  deduce that $\mu(u) = \mu(v)$. Last, we can use (**) to insert in (***) the
  product $w''_l \cdots w''_{r-1}$, to establish the second required equality.
  This concludes the proof.
\end{proof}

\subsection{Proof of Theorem~\ref{thm:group} for the Case of Group Languages}
\label{apx:groupproof}

We give the complete proof of Theorem~\ref{thm:group} for the case of group
languages.

Let~$K$ be a group language on the alphabet $A = \{a_1, \ldots, a_k\}$, let
$\mu: A^* \to H$ be the syntactic morphism of~$K$, where 
$H$ is a finite group generated by
the $\mu(a_i)$.
Consider an instance $I = (S_1, \ldots, S_n)$ to the CSh-problem,
where each $S_i$ is a directed path of vertices labeled with letters of the alphabet~$A$.
Recall from Appendix~\ref{apx:highlevel} that $B$ is the bound whose existence is shown in Lemma~\ref{lem:ramsey}, and,
using
Lemma~\ref{lem:groupper} for the value $k \colonequals B$,
$R$ is the value of~$n_k$ given by this lemma.
Recall the definition of a \emph{rare--frequent} partition of $I$
(Definition~\ref{def:rarefreq}) from
Appendix~\ref{apx:highlevel}, and recall that we have used Lemma~\ref{lem:rarefreq} to compute in NL
  a rare--frequent partition of~$I$, given by $A_\rare$, $A_\freq$, $S_\rare$,
  and (implicitly) $S_\freq$. We write $H_\freq$ for the subgroup of~$H$ equal
  to $\mu(A_\freq^*)$, i.e., the subgroup spanned by $A_\freq$.

  Our goal is to determine whether $I$ has some topological sort in~$K$. This is
the case iff it has a topological sort mapped to an accepting element of~$H$
by~$\mu$, so we can equivalently test, for each accepting element of~$H$,
whether there is a topological sort that achieves it. Hence, let $g$ be the
target element. Recall that the commutative closure of the language
$\mu^{-1}(g)$ is a regular language by Lemma~\ref{lem:groupparikh}, and is
obviously commutative. Further recall the morphism $\com_\mu : A^* \to N$ from
Section~\ref{apx:antichain}, where $N$ is a commutative monoid that recognises
the inverse image of all elements of~$H$, in particular~$g$. Recall also the
relational morphism $\tau:N\relto H$ defined by $\tau(x) =
\mu(\com_\mu^{-1}(x))$.

We will state a condition, called (*), and construct an NL algorithm to check
(*). We will then show that (*) holds iff $I$ has a topological sort that
achieves~$g$. Condition (*) is: there exists a topological sort  $\rho$
of~$S_\rare$ which can be decomposed as $\rho_1 \cdots \rho_n$, and a sequence 
$g_0, \ldots, g_n$ of elements of~$H_\freq$, such that:

\begin{enumerate}
  \item $g_0 \mu(\lfun(\rho_1)) g_1 \cdots \mu(\lfun(\rho_n)) g_n = g$;
  \item $g_0 \cdots g_n \in \tau(\com_\mu(S_\freq))$;
  \item $n < B$.
\end{enumerate}

To test this condition (*), we simply nondeterministically guess a sequence $S'$ of
elements of~$H_\freq$ of size at most~$B$ (i.e., a constant)
such that the concatenation of its elements is in
$\tau(\com_\mu(S_\freq))$,
add $S'$
to $S_\rare$, and check whether the resulting CSh instance has a topological sort
using the NL algorithm of 
Proposition~\ref{prp:dynamic} (because its number of strings is at most $R
\cdot k^2 + 1$, which is constant): the language to test is $\mu^{-1}(g)$ on the
modified alphabet where the elements of~$S'$ carry labels in $H_\freq$ and stand
for themselves; note that this clearly yields a group language.

All that remains to show is that condition (*) is equivalent to the existence of
a topological sort of~$I$ that achieves~$g$. 
For the forward direction, assume that condition (*) holds.
Recall that we have defined $R \colonequals n_B$.
Focus on $S_\freq$, which has an $R$-rich antichain for~$A_\freq$, and observe
that $g_0 \cdots g_n \in \tau(\com_\mu(S_\freq)$, which is the equivalent
rephrasing of the condition $\PI(S_\freq) \in \PI(\mu^{-1}(g_0 \cdots g_n))$, as
argued at the beginning of the proof.
Using the antichain lemma (Lemma~\ref{lem:groupper}), we know that there is a
topological sort $\sigma = \sigma_0 \cdots \sigma_n$ of~$S_\freq$
  such that $\mu(\lfun(\sigma_i)) = g_i$
for each $i \in \{0, \ldots, n\}$. Now, considering the topological sort $\rho_1,
\ldots, \rho_n$ of~$S_\rare$ given by condition (*), it is clear that $\sigma_0
  \rho_1
  \sigma_1
\cdots \rho_n \sigma_n$ is a topological sort of~$I$, built by interleaving $S_\rare$
and $S_\freq$; and furthermore $\mu(\lfun(\sigma_0 \rho_1 \sigma_1 \cdots \rho_n
\sigma_n)) =
  \mu(\lfun(\sigma_0))
\mu(\lfun(\rho_1)) \mu(\lfun(\sigma_1)) \cdots \mu(\lfun(\rho_n))
\mu(\lfun(\sigma_n))$, which by (*) is equal to~$g$,
concluding the forward direction of the correctness proof.

\medskip

We now show the backward direction. Assume that there is a topological sort
$\sigma'$ 
of~$I$ achieving~$g$, i.e., $\mu(\sigma') = g$. We can decompose it as an interleaving of~$S_\rare$ and
$S_\freq$, which we write $\sigma_0 \rho_1' \sigma_1 \cdots \rho_{n'}' \sigma_{n'}$,
with $\rho_1'
\cdots \rho_{n'}'$
being a topological sort of~$S_\rare$, and $\sigma_0 \cdots \sigma_{n'}$ being a topological
sort of~$S_\freq$ (in particular, we have $\mu(\lfun(\sigma_0 \cdots
\sigma_{n'})) \in
\tau(\com_\mu(S_\freq))$, which we call condition (\#$2'$)).
We now use the insertion lemma (Lemma~\ref{lem:ramsey}) to argue that there
exists a set $w_0, \ldots, w_{n'}$ of words of~$A^*$, with $w_i =
\lfun(\sigma_i)$ for at most
$B$ values of~$i$ and being the empty word otherwise, such that $\mu(w_0
\lfun(\rho_1') w_1
\cdots \lfun(\rho_{n'}') w_{n'}) = \mu(\sigma') = g$, and 
(\#$2''$) $\mu(\lfun(\sigma_0 \cdots \sigma_{n'})) = \mu(w_0
\cdots w_n)$. We now collapse the $\rho_i'$ which are contiguous,
calling the result $\rho_1, \ldots, \rho_n$, where we have (\#3) $n < B$,
and write $g_i$ the $\mu$-image of the $i$-th $w_i$ which is non-empty: this
image is in $H_\freq$ because the strings in~$S_\freq$ are only labeled with
letters in~$A_\freq$. This
gives us a topological sort $\rho_1, \ldots, \rho_n$ of~$S_\rare$, and a sequence
$g_0, \ldots, g_n$ of elements of~$H_\freq$, such that (\#1) $g_0
\mu(\lfun(\rho_1)) g_1 \cdots
\mu(\lfun(\rho_n)) g_n = g$. By (\#1), (\#$2'$) combined with (\#$2''$), and (\#3), we have
satisfied condition (*). This concludes the backward direction, and establishes
the equivalence proof. Hence, we have shown Theorem~\ref{thm:group} in the case
of group languages.

\subsection{Proof of Theorem~\ref{thm:group} for the Case of District Group Monomials}
\label{apx:district}
We now show the complete proof of Theorem~\ref{thm:group} by adapting the 
proof of Appendix~\ref{apx:groupproof} from the case of group languages to that of district
  group monomials. We write $K = K_0 a_1 K_1 \cdots a_m K_m$, where each $a_i$ is a letter of
the alphabet (they are not necessarily distinct), and each $K_i$ is a group
language on some subset $A_i$ of the alphabet. We fix as before the instance $I
= (S_1, \ldots, S_n)$ of the CSh-problem.
A \emph{$K$-slicing} of
the instance $I$ is an $(m+1)$-tuple of instances $I_0, \ldots, I_m$, with each
$I_j$ being a $n$-tuple $(S^j_1, \ldots, S^j_n)$ of strings, and an $m$-tuple of
instances $I_1', \ldots, I_m'$, with each $I_j'$ being a $n$-tuple $((S')^j_1,
\ldots, (S')^j_n)$ as before, with the stipulation that, for each $1 \leq j \leq
m$, all $(S')^j_i$ are empty except one which is a singleton whose only element
is labeled $a_j$; and that, for each $1 \leq i \leq n$, the concatenation $S^0_i
(S')^1_i S^i_i \cdots (S')^m_i S^m_i$ is equal to~$S_i$. In other words, a
slicing is a partition of each string of $I$ in a way that respects the $a_i$.

Intuitively, we would like to guess a slicing, check the $I_i'$ in the obvious
way, and apply the previous result to the $I_j$ for odd~$j$, corresponding to
the group languages $K_j$. Unfortunately, while guessing the even $I_j$ is
immediate, we cannot afford to guess the entire slicing in NL. For this reason,
we need a more elaborate approach.

We will follow the previous proof and introduce a notion of rare--frequent
partition, generalised to slicings. 
As before, we let $B$ be the bound whose existence is shown in
Lemma~\ref{lem:ramsey}, use Lemma~\ref{lem:groupper} with $k \colonequals B$ to
obtain $n_k$, and let $R \colonequals n_k$.
Given a slicing $I_0 \ldots
I_{m}$ and $I'_1 \ldots I'_m$,
a \emph{rare--frequent partition} of the slicing consists of one partition
$A_\rare^j$, $A_\freq^j$ for all $1 \leq j \leq m$, and one \emph{global}
partition of the strings $S_1, \ldots, S_n$ into rare strings $S_\rare$ and
frequent strings $S_\freq$ (again, the frequent strings are not explicitly
represented). We require that (i) for every string $S$ of $S_\freq$,
considering its slices $S^0, \ldots, S^m$, for each $1 \leq j \leq m$, the slice
$S^j$ contains only letters of~$A^j_\freq$; that (ii) for every $1 \leq j \leq
m$, the $S^j$ for $S$ in $S_\freq$, when seen as a subinstance of~$I$ over the
alphabet $A_\freq^j$, contains an $R$-rich antichain; and that (iii) for every
$1 \leq j \leq m$, the one non-empty string of $I'_j$ is in $S_\rare$.

We can show as before that, for any slicing, we can compute a rare--frequent
partition. In fact we will only need to show that it exists, as the problem in
guessing the slicing prevents us from guessing it anyway.

\begin{lemma}
  \label{lem:rarefreq2}
  For any slicing $I_0, \ldots, I_m$, $I_1', \ldots, I_m'$, there exists a
  rare--frequent partition such that $\card{S_\rare} \leq m \cdot R \cdot k^2$.
\end{lemma}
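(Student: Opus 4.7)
The plan is to mirror the construction in the proof of Lemma~\ref{lem:rarefreq}, but to iterate the rare-letter extraction procedure separately over each slice index $j \in \{1, \ldots, m\}$, while keeping a single global pool $S_\rare$ of rare strings. What makes the generalisation easy is that the three conditions (i)--(iii) of a slicing-rare-frequent partition only couple the different indices $j$ through the global set $S_\rare$; the partitions $A_\rare^j \sqcup A_\freq^j$ are decoupled and can be handled one index at a time.

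To initialise, I would set $A_\freq^j := A$ and $A_\rare^j := \emptyset$ for every $j$, place into $S_\rare$ the (at most $m$) strings of $I$ that carry the singleton $a_j$-vertex of some $I'_j$ (this is forced by condition (iii)), and let $S_\freq$ be the remaining strings. Then, for each $j = 1, \ldots, m$ in turn, I would repeat the following step until convergence: if some $a \in A_\freq^j$ appears in fewer than $R \cdot k$ distinct $j$-th slices $S^j$ among strings $S \in S_\freq$, then move $a$ from $A_\freq^j$ to $A_\rare^j$ and move those (at most $R \cdot k$) strings from $S_\freq$ to $S_\rare$. Once the inner loop for index $j$ stabilises, I move to $j+1$ and never modify $A_\rare^{j'}, A_\freq^{j'}$ for earlier indices again.

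Next I would verify the three conditions in turn. Condition (i) follows by the stabilisation criterion: once the loop for $j$ has terminated, every letter still in $A_\freq^j$ occurs in at least $R \cdot k$ frequent strings in its $j$-th slice, hence no letter of $A_\rare^j$ can occur in the $j$-th slice of any $S \in S_\freq$. Condition (ii) is then immediate: for each $j$ and each $a \in A_\freq^j$, pick one $a$-labelled vertex from the $j$-th slice of $R$ distinct strings of $S_\freq$; the resulting vertices live in pairwise distinct directed paths of the CSh-instance and are therefore pairwise incomparable, so together they form an $R$-rich antichain over the alphabet $A_\freq^j$. Condition (iii) holds because the strings carrying the singleton $a_j$-vertices are placed in $S_\rare$ at initialisation and only ever move from $S_\freq$ to $S_\rare$, never back.

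Finally, I would count: initialisation adds at most $m$ strings to $S_\rare$, and for each $j$ the inner loop runs at most $k$ times (since each iteration permanently removes one letter from $A_\freq^j$), with each iteration adding at most $R \cdot k$ strings, for a total of at most $m + m \cdot R \cdot k^2$ rare strings. Absorbing the additive $m$ is purely cosmetic: one clean way is to declare $a_j \in A_\rare^j$ at the outset whenever the label of the singleton $(S')^j$ is $a_j$, which folds the initialisation step into the first iteration of the inner loop for slice~$j$ and yields the stated bound $\card{S_\rare} \leq m \cdot R \cdot k^2$. I do not anticipate any real obstacle here; the only delicate point is this bookkeeping between condition (iii) and the cardinality count, which is why I would lift it into the initialisation phase from the start.
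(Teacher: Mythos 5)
There is a genuine gap in the way you schedule the extraction over the slices. You process the slice indices $j=1,\ldots,m$ sequentially and, once the inner loop for index $j$ has stabilised, you ``never modify $A_\rare^{j}, A_\freq^{j}$ again'' --- but you do keep moving strings from $S_\freq$ to $S_\rare$ while processing the later indices $j+1,\ldots,m$. The stabilisation criterion for slice $j$ only guarantees that every $a\in A_\freq^j$ occurs in at least $R\cdot k$ distinct $j$-th slices \emph{of the intermediate set $S_\freq$ at that moment}; the later indices can subsequently remove up to $(m-j)\cdot R\cdot k^2$ further strings from $S_\freq$ (each later index moves at most $k$ letters, each dragging up to $R\cdot k$ strings), which can wipe out all, or all but fewer than $R$, of the frequent strings whose $j$-th slice contains $a$. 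Condition (i) survives such removals because it is a universal statement over the remaining strings of $S_\freq$, but condition (ii) is a richness requirement on the final $S_\freq$ and your argument (``pick one $a$-labelled vertex from the $j$-th slice of $R$ distinct strings of $S_\freq$'') silently assumes the $R\cdot k$ count still holds at the end, which it need not. So as written the construction does not establish the lemma; since $R$ is a large constant, instances where a letter kept in $A_\freq^j$ ends up with no frequent occurrence at all in slice $j$ are easy to engineer.

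The gap is fixable, and in two natural ways. One is to turn your procedure into a single global fixpoint: whenever \emph{any} pair $(j,a)$ with $a\in A_\freq^j$ has fewer than $R\cdot k$ frequent strings carrying $a$ in their $j$-th slice (with respect to the current $S_\freq$), fire the move, and iterate until no pair fires; each pair fires at most once, so you still get at most $(m+1)k$ moves of at most $R\cdot k$ strings each, i.e.\ a constant bound on $\card{S_\rare}$, and now the $R\cdot k$ guarantee holds against the final $S_\freq$, which also legitimises the non-overlapping choice of $R$ strings per letter exactly as in Lemma~\ref{lem:rarefreq}. The paper instead applies Lemma~\ref{lem:rarefreq} to each slice \emph{independently} but with an inflated richness threshold ($m\times(R+2)$ in place of $R$), and then takes $S_\rare$ to be the union of the per-slice rare sets plus the $m$ singleton-carrying strings: the slack in the threshold absorbs the at most $m\times(R+1)$ strings that the other slices and the $I_j'$ can remove, so an $R$-rich antichain survives in every slice. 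Either route works; what cannot work is committing to the earlier indices before seeing which strings the later indices delete. (A minor further remark: your closing trick of declaring $a_j\in A_\rare^j$ does not by itself force the string carrying the singleton of $I_j'$ into $S_\rare$, since that vertex lies in the slice $I_j'$ and not in $I_j$; keeping those $\leq m$ strings in $S_\rare$ from the initialisation, as you first proposed, is the right move, and the resulting additive $m$ in the count is harmless since only constancy of the bound is used.)
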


\begin{proof}
  We apply Lemma~\ref{lem:rarefreq} to each $I_j$ for $1 \leq j \leq m$ to
  obtain one rare--frequent partition for it, written $A_\rare^j \sqcup
  A_\freq^j = A_j$ and $S_\rare^j \sqcup S_\freq^j = I_j$, except that we take
  $m \times (R+2)$ instead of~$m$. Now, the only thing that remains is to justify
  that we can take the set of rare strings to be global instead of local, and to
  satisfy condition (iii). We
  simply then take $S_\rare$ to be the union of the strings $S$ of~$I$ such that
  $S^j$ is in~$S_\rare^j$ for some~$0 \leq j \leq m$, plus the strings that are
  non-empty in some $I'_j$. We take $S_\freq$ to be
  the complement. This ensures that condition (iii) is respected by construction.
  Now, it is clear that condition (i) is respected, as, for each
  slice, the frequent strings to consider are a subset of the one given by the
  previous condition. Now, condition (ii) is respected because it was respected
  initially for the richness threshold of $m \times (R+2)$, and we have only removed
  at most $m\times (R+1)$ frequent strings in the modification: $((m+1)-1)\times R$ for
  the other slices of the form $I_j$, and $m$ for the slices of the form $I_j'$.
  Hence, we can deduce
  an $R$-rich antichain by looking at any preexisting $(m \times (R+2))$-rich
  antichain.
\end{proof}

While we cannot guess the slices, let us guess partitions $A_j = A_\rare^j
\sqcup A_\freq^j$ for $0 \leq j \leq m$ and the set $S_\rare$ of (globally) rare
strings of size at most~$R \cdot k^2$. Let us further guess the slices $S_j'$ for $1 \leq j \leq m$, i.e., we
guess elements in~$I$ with suitable order and labels. As the number of rare
strings is constant and $m$ is constant, we guess, for each string of $S_\rare$,
the $m$ points at which the slices end, i.e., we guess a slice but restricted to
the rare strings. As for the frequent strings, we will not guess the slices
globally, as there is generally a non-constant number of frequent strings.
However, we will guess the ``sequence of insertions'' to be performed using the
frequent antichains for each slice, i.e., the analogue to the sequence $g_0,
\ldots, g_n$ in condition (*) in the previous proof. Formally, we guess a
sequence $g_0^j, \ldots, g_{n_j}^j$ for all $0 \leq j \leq m$, with each $g_i^j$
being an element of $H_\freq^j$, the subgroup of $H_j$ spanned by~$A_\freq^j$.
Last, we also guess an element $\gamma_0, \ldots, \gamma_m$ of $H_0 \times
\cdots \times H_m$ to describe the accepting elements of the $H_i$ achieved in
each slice.

Intuitively, we will now do two things: first, verify that our
guesses are consistent (except for the choice of the $\gamma_i$); second, reduce
the problem to a simpler problem by replacing all strings of $S_\freq$ with an
additional string labeled directly with elements of the groups $H_i$ of the group
languages $K_i$, as in the previous proof.

First, to verify that our guesses are consistent, we check the rare strings. On
these strings, it is straightforward to verify that the sub-alphabet for each
slice is respected. Further, for the slices $I_j'$, the verification is
immediate. Now, for the frequent strings, we go over them in succession. We
maintain a state that stores, for each slice of the form $I_j$ for $0 \leq j
\leq m$, how many occurrences of each letter of~$A$ we have seen in the slice
$j$, and in how many different strings are these occurrences. Initially, each letter occurs 0 times. Now, when processing a frequent
string $S$ which is in~$S_\freq$ (i.e., not in~$S_\rare$), we guess a slicing
of~$S$, count the number of occurrences of each letter in each slice and add it
to our counter of occurrences, and add one to the counter of strings for the
symbols that did occur. At the end, we check that the value of our counters
satisfies some conditions, which will witness the existence of a suitable
slicing of the frequent strings. Specifically, we verify:

\begin{itemize}
  \item For each $0 \leq j \leq m$, for each $a \in A \setminus A^j_\freq$, that our
    choice of slicing does not contain any occurrence of $a$ in the restriction
    of the slice $I_j$ to~$S_\freq$.
  \item For each $0 \leq j \leq m$, for each $a \in A_\freq^j$, that our choice
    of slicing ensures that there are at least $R$ different strings that contain
    an occurrence of~$a$ in the restriction of slice $I_j$ to $S_\freq$,
    witnessing that it has an $R$-rich antichain for the alphabet $A_\freq^j$.
  \item For each $0 \leq j \leq m$, letting $w$ be the word containing all
    letters of the restriction of slice $S_j$ to $S_\freq$ with the correct
    number of occurrences, that $g_0^j \cdots g_{n_j}^j \in
    \tau_j(\com_{\mu_j}(w))$, intuitively checking that we have the right
    commutative image.
\end{itemize}

Second, we check the following condition (**), inspired from condition (*) in the
previous proof: for all $0 \leq j \leq m$,
there exist a topological sort $\rho_0^j \cdots \rho_{n_j}^j$ of the slice $S_\rare^j$ of
$S_\rare$ whose
concatenation, interleaved with the singleton elements of the $I'_j$, is a
topological sort of~$S_\rare$, and
$g_0^j \lfun(\rho_0^j) \cdots g_{n_j-1}^j \lfun(\rho_{n_j-1}^j) g^j{n_j} = \gamma_j$. This can be
decided in NL by adapting the algorithm of Proposition~\ref{prp:dynamic} as
previously, running it on each slice with one additional string.

Overall, our algorithm succeeds iff there is a guess of $\gamma_i$, of
$S_\rare$ (at most $R k^2$ of them), partitions $A_\freq^j \sqcup A_\rare^j$,
and sequences $g_0^j, \ldots, g_{n_j}^j$, such that the verification stage
succeeds, and condition (**) holds.

We have described our NL algorithm. We now argue that it works as intended.
There are two directions: the forward direction is to show that if the algorithm
succeeds then there is a suitable topological sort of~$I$, and the backward
direction is to show the converse.

For the forward direction, assume that the algorithm succeeds. We deduce the
existence of a set $S_\rare$ of rare strings (whose slices are written
$S^j_\rare$), and frequent strings $S_\freq$ (with the same convention for
slices), partitions $A_\freq^j \sqcup
A_\rare^j$, a slicing $I_0, \ldots, I_m$ and $I_1', \ldots, I_m'$,
a topological sort of~$S_\freq$ constituting of topological sorts $\rho_0^j \cdots
\rho_{n_j}^j$ of each $S_\rare^j$ for $0 \leq j \leq m$ interleaved with the singleton elements of the
$I_j'$ for $1 \leq j \leq m$, sequences $g_0^j, \ldots, g_{n_j}^j$ of elements
of~$H_j$ for $0 \leq j \leq m$, and an element $\gamma_0, \ldots, \gamma_m$ of
$H_0 \times \cdots \times H_m$, such that:

\begin{itemize}
  \item For all $0 \leq j \leq m$, the element $\gamma_j$ is accepting in~$H_j$.
  \item For all $0 \leq j \leq m$, for all $S \in S_\freq$, the slice $S^j$
    contains only letters from $A_\freq^j$, and contains an $R$-rich
    antichain on the sub-alphabet $A_\freq^j$.
  \item For all $0 \leq j \leq m$, for all $S \in S_\rare$, the slice $S^j$
    contains only letters from $A_j$.
  \item For all $0 \leq j \leq m$, letting $S_\freq^j$ be the slice of $S_\freq$
    defined in the expected way, we have $g_0^j \cdots g_{n_j}^j \in
    \tau_j(\com_{\mu_j}(S_\freq^j))$.
  \item (\#) For all $0 \leq j \leq m$, we have 
    $g_0^j \lfun(\rho_0^j) \cdots g_{n_j-1}^j \lfun(\rho_{n_j-1}^j) g^j_{n_j} = \gamma_j$
\end{itemize}

We claim that we can deduce from this the existence of a witnessing topological
sort. To do this, as before, we will use Lemma~\ref{lem:groupper} in
the~$S_\freq^j$ for all $0\leq j \leq m$. From our definition of $R$, as $n_j
< B$,
as $S_\freq^j$ contains an $n_k$-rich antichain (seen as an instance on the
sub-alphabet $A_\freq^j$), as $g_1^j \cdots g_{n_j}^j \in
\tau_j(\com_{\mu_j}(S^j_\freq))$, there is a topological sort $\sigma_1^j \ldots
\sigma_{n_j}^j$ of~$S_\freq^j$ such that $\mu_j(\lfun(\sigma_i^j)) = g_i^j$ for each $0 \leq j
\leq m$ and $1 \leq i \leq n_j$. This allows us to deduce our witnessing
topological sort of~$I$, consisting of a topological sort of each slice $I_j$
of~$I$ achieving $\gamma_j$, interleaved with the trivial topological sorts of
the $I_j'$ that achieve the required $a_j$: the topological sort of~$I_j$ is
formed of the guessed topological sort $\rho_0^j \cdots \rho_{n_j}^j$ of $S_\rare^j$
interleaved with the topological sort $\sigma_1^j, \ldots, \sigma_{n_j}^j$ of
$S_\freq^j$, each $v_i^j$ achieving $g_i^j$, so that the topological sort
of~$I_j$ indeed achieves $\gamma_j$ by point (\#).

\medskip

We now show the backward direction. We show that if there is a suitable
topological sort, then the algorithm succeeds. The witnessing topological sort
must define a slicing of~$I$ such that each $I_j$ for $0 \leq j \leq m$ has a
topological sort achieving an element $\gamma_j$ which is accepting for~$H_j$.
We now use Lemma~\ref{lem:rarefreq2} to argue that there exists a rare--frequent
partition consisting of a partition $S_\rare \sqcup S_\freq$ of the strings, and
$A_\rare^j \sqcup A_\freq^j$ of the alphabets $A_j$, such that $\card{S_\rare}
\leq m \cdot R \cdot k^2$. In each slice, the witnessing topological sort must
consist of a topological sort of the $S_\rare^j$ interleaved with topological
sorts of the $S_\freq^j$. As in the previous proof, we now use
Lemma~\ref{lem:ramsey} to argue that we can assume that there are at most $n_j$
such insertions, without changing the $\mu_j$-image of the result or the
$\mu_j$-image of the inserted elements. Now, we define the $g_1, \ldots,
g_{n_j}^j$ as the $\mu_j$-images of these insertions. We now consider the run of
the algorithm where we guess the right rare--frequent partition, the right slices
in the rare strings, the right topological sort of the rare strings.

We first
check that the verification phase of the algorithm does not fail. This is the
case: the first condition is by definition of a witnessing topological sort (for
$A \backslash A_j$) and of a rare--frequent partition (for $A_j \backslash
A_\rare^j$); the second condition is by definition of a rare--frequent partition;
the third condition is by definition of $g_1, \ldots, g_{n_j}^j$ being achieved
as a topological sort of~$S_\rare^j$. We next explain why the second phase
works, by explaining why condition (**) is satisfied. This can be seen by
considering when the insertions of the $S_\freq^j$ are performed in the
$S_\rare^j$: we perform the same additions with the additional string. Hence,
this run of the algorithm succeeds. This concludes the backwards direction of
the correctness proof, so our NL algorithm is correct. This concludes the proof
of Theorem~\ref{thm:group}.

\end{toappendix}

\begin{toappendix}
  \subsection{Proof of Proposition~\ref{prp:aabbaabb}: Example in
  $\DS\setminus\DO$}
  \label{apx:limitations}
\end{toappendix}
We close the section by commenting on the two main limitations of
Theorem~\ref{thm:groupdich}.
The first limitation is that
it is not a dichotomy: it does not cover the semiautomata with
transition monoid 
in $\DS \setminus \DO$.
We do not know if the corresponding languages are tractable or not; we have not
identified
intractable cases, but we can show tractability, e.g.,
for $(a^+b^+a^+b^+)^*$,
the language of words with an even number of subfactors of the form $a^+b^+$.

\begin{toappendix}
  We show the side result on the language in $\DS \setminus \DO$. Note that the
  fact that this language is indeed in \DS and not in \DO can be simply checked
  from the equations that define \DS and \DO, as can be performed, e.g.,
  using~\cite{papermansemigroup}.
\end{toappendix}

\begin{propositionrep}
  \label{prp:aabbaabb}
  Let $K = (a^+b^+a^+b^+)^*$. Then $\pCSh{K}$ is in NL.
\end{propositionrep}

\begin{proof}
  Consider an
  input instance $I$ to the $\CSh$-problem for~$K$. Observe first that, if $I$
  has no string whose first element is~$a$, then clearly no topological sort
  of~$I$ achieves~$K$. Likewise, if $I$ has no string whose last element is~$b$,
  then clearly no topological sort of~$I$ achieves~$K$. We can check these two
  conditions in NL and fail if one of them does not hold, so in the sequel we
  assume that $I$ has a string whose first element is~$a$ and a string whose
  last element is~$b$.
  
  Recall that a $3$-rich antichain
  for~$A$ in~$I$ is an antichain containing at least $3$ elements labeled by~$a$
  and $3$ elements labeled by~$b$. We show that if $I$ contains a $3$-rich
  antichain then it is necessarily a positive instance to $\pCSh{K}$. Of course,
  note that we can easily test in NL if such a $3$-rich antichain exists.

  To show the claim, let $C''$ be such an antichain, and $C'$ be a subset
  of~$C''$ containing
  exactly three occurrences of each letter; it is still an antichain.
  We now define $C$ as a subset of~$C'$ containing exactly two occurrences of
  each letter, and ensuring that there is an $a$-labeled element $v_a$ which is the
  first element of a string and is not in a string of~$C$, and likewise there is a
  $b$-labeled element $v_b$ which is the last element of a string and is not
  in a string of~$C$: we can ensure this because we can choose which $a$-labeled
  element and which $b$-labeled element we remove from~$C'$ to construct~$C$.
  
  Now, consider a topological
  sort $\sigma_1$ of~$I$ formed by concatenating $v_a$, a topological sort~$\sigma_-$
  of the ancestors of elements of~$C$ and of the elements incomparable to~$C$
  except~$v_a$ and~$v_b$, a topological sort $\sigma_1'$
  of~$C$ achieving the word $aabb$, a topological sort~$\sigma_+$ of the
  successors of~$C$, and $v_b$. The word $w_1$ achieved by $\sigma_1$ starts with~$a$
  and ends with~$b$, so it must be of the form $(a^+b^+)^*$.
  Let $n_1$ be
  the number of repetitions of $a^+b^+$ in~$w_1$.
  Now, consider the topological
  sort $\sigma_2$ obtained by concatenating $v_a$, $\sigma_-$, $\sigma_2'$,
  $\sigma_+$, and $v_b$, where $\sigma_2'$ is a topological sort of~$C$
  achieving the word $abab$. Again, the word~$w_2$ achieved by~$\sigma_2$ must
  be of the form $(a^+b^+)^*$: let $n_2$ be the number of repetitions of
  $a^+b^+$ in~$w_2$. We claim that $n_2 = n_1 +
  1$. Indeed, consider the subfactor $a^+b^+$ that contains $\sigma_1'$
  in~$\sigma_1$. In $\sigma_2$, the other subfactors are unchanged, and this
  subfactor is split into two subfactors, one ending at the first~$b$
  of~$\sigma'_2$, the other one starting at the second~$a$ of~$\sigma'_2$. So indeed
  $n_2 = n_1 + 1$. Hence, one of $n_1, n_2$ is even, and the corresponding
  $\sigma_i$ witnesses that $I$ is a positive instance to $\pCSh{K}$.

  Hence, it suffices to handle the case where $I$ has no $3$-rich antichain.
  This implies that there is one symbol $\alpha \in A$ which occurs in at most
  two strings $S$ and~$S'$, which means that the other strings $S_1, \ldots, S_m$ only
  contain elements labeled with the other symbol $\beta \neq \alpha$ of~$A$.
  Now, it is easy to see that we obtain exactly the same topological sorts by
  merging together the $S_1, \ldots, S_m$ to one string $S''$ of elements labeled
  $\beta$ whose length is $\sum_i \card{S_i}$. Hence, we can reduce the problem
  in NL to the instance $\{S, S', S''\}$. As it has three strings, we can conclude in NL
  using Proposition~\ref{prp:dynamic}. Hence, we have indeed shown that
  $\pCSh{K}$ is in NL.
\end{proof}

However, it would be difficult to show tractability for all of \DS, because
\DS is still poorly understood in algebraic language theory.
For instance, characterizing the languages with a syntactic monoid in \DS has
been open for over 20 years \cite[Open problem 14, page 442]{almeida1994finite}.

The second limitation of Theorems~\ref{thm:groupdich} and~\ref{thm:group}
is that they only apply to CSh.
New problems arise with CTS: for instance, an
\mbox{$\{a,b\}$-DAG}~$G$ may contain large antichains~$C_a$ and~$C_b$ of
$a$-labeled and $b$-labeled vertices, and yet contain no antichain with
many $a$-labeled and $b$-labeled vertices (e.g., if $G$ is the series composition
of $C_a$ and~$C_b$). The missing proof ingredient seems to be an 
analogue of Dilworth's theorem for \emph{labeled} DAGs (see
also~\cite{amarilli2016generalization}).

\section{Conclusion and Open Problems}
\nosectionappendix
\label{sec:conclusion}
We have studied the complexity of two problems, constrained topological sort
(CTS) and constrained shuffle (CSh): fixing a regular language $K$, given
a labeled DAG (for CTS) or a tuple of strings (for CSh), we ask if the input DAG
has a topological sort achieving~$K$. We have shown tractability and
intractability for several regular languages using a variety of techniques.
These results yield a coarser dichotomy (Theorem~\ref{thm:dich}) in an alternate
problem phrasing that imposes some closure assumptions.

Our work leaves the main dichotomy conjecture open
(Conjecture~\ref{con:maincon}). Even in the alternate problem phrasing of
Theorem~\ref{thm:dich}, our dichotomy only covers counter-free semiautomata:
the restriction is lifted in Section~\ref{sec:group} but only for CSh,
and with a gap between tractability and intractability.
In the original phrasing, there are many concrete languages 
that we do not understand: Does Proposition~\ref{prp:abaaaa} extend to
$(ab)^* + A^* a^i A^*$ for $i > 2$? Does Proposition~\ref{prp:aab} extend to
$(a^i+b)^*$ for $i>2$, or to CTS rather than CSh? Can we show
Conjecture~\ref{con:fstar}?

Another direction would be to connect CSh and CTS to the framework of
\emph{constraint satisfaction problems} (CSP)~\cite{feder1998computational}, which
studies the complexity of homomorphism problems for fixed ``constraints''
(right-hand-side of the homomorphism). If this were possible, it could
lead to a better understanding of our tractable and hard cases. However, CTS
does not seem easy to rephrase
in CSP terms: topological sorts and regular language
constraints seems hard to express in terms of homomorphisms, even in 
extensions such as \emph{temporal} CSPs~\cite{bodirsky2010complexity,bodirsky2016discrete}.

One last question would be to
investigate CTS and CSh for \emph{non-regular} languages.
The simplest example is the Dyck language, which appears to be
NP-hard for CTS (at least in the multi-letter
setting), but tractable for CSh, via a connection to scheduling; see
\cite{garey1979computers}, problem~SS7.
More generally, CTS and CSh could be studied,
e.g., for context-free languages, where the complexity landscape may be
equally enigmatic.

\bibliography{main}
\end{document}